\newtheorem{theorem}{Theorem}[section]
\newtheorem{lemma}[theorem]{Lemma}
\newtheorem{corollary}[theorem]{Corollary}
\theoremstyle{definition}
\newtheorem{definition}[theorem]{Definition}
\theoremstyle{remark}
\newtheorem{remark}[theorem]{Remark}
\newcommand{\wunderbar}[1]{\underaccent{\bar}{#1}}
\tikzset{
    %Define standard arrow tip
    >=stealth',
    % Define arrow style
    pil/.style={
           ->,
           thick,
           auto,
           shorten <=-4pt,
           shorten >=-4pt,}
}
\definecolor{uowyellow}{RGB}{236,183,49}
\definecolor{uowblue}  {RGB}{36,83,151}
\definecolor{uoworange}{RGB}{207,67,56}
\definecolor{uowlblue}  {RGB}{65,149,211}
\definecolor{uowdgreen} {RGB}{55,140,81}
\definecolor{uowlgreen} {RGB}{178,187,56}
\definecolor{uowlorange}{RGB}{217,122,45}
\definecolor{uowred}   {RGB}{168,25,51}
\definecolor{uowpink}   {RGB}{202,0,151}
\definecolor{uowpurple}   {RGB}{67,51,147}
\definecolor{uowgrey}   {RGB}{76,86,95}
\renewcommand{\d}{\mathrm d}
\renewcommand{\O}{\mathcal O}%\mathcal
\newcommand{\Der}[2]{\dfrac{\d {#1}}{\d {#2}}}
\newcommand{\R}{\mathbb{R}}
\newcommand{\re}{\mbox{e}}
\newcommand{\cm}{\mathrm{cm}}
\newcommand{\M}{\mathrm{M}}
\newcommand{\s}{\mathrm{s}}
\newcommand{\cells}{\mathrm{cells}}
\renewcommand{\H}{\mathrm{H}^+}
\newcommand{\ubra}[3]{\underbrace{#1}_{\parbox{#2}{\setstretch{1.0}\centering{\scriptsize{#3}}}}}
\newcommand{\diag}{\mathrm{diag}}
\newcommand{\inte}{\mathrm{int}}
\newcommand{\inner}[2]{\left\langle #1, #2 \right\rangle}
\newcommand{\T}{\mathrm{T}}
\newcommand{\bu}{\mathbf{u}}
\newcommand{\bv}{\mathbf{v}}
\newcommand{\bx}{\mathbf{x}}
\newcommand{\by}{\mathbf{y}}
\newcommand{\bz}{\mathbf{z}}
\newcommand{\bau}{\bar{u}}
\newcommand{\ubau}{\wunderbar{u}}
\newcommand{\bax}{\bar{x}}
\newcommand{\ubax}{\wunderbar{x}}
\newcommand{\baf}{\bar{F}}
\newcommand{\ubaf}{\wunderbar{F}}
\newcommand{\bbu}{\mathbf{\bar{u}}}
\newcommand{\bbv}{\mathbf{\bar{v}}}
\newcommand{\bbx}{\mathbf{\bar{x}}}
\newcommand{\bbF}{\mathbf{\bar{F}}}
\newcommand{\bbG}{\mathbf{\bar{G}}}
\newcommand{\ubu}{\mathbf{\wunderbar{u}}}
\newcommand{\ubx}{\mathbf{\wunderbar{x}}}
\newcommand{\ubF}{\mathbf{\wunderbar{F}}}
\newcommand{\ubG}{\mathbf{\wunderbar{G}}}
\newcommand{\bF}{\mathbf{F}}
\newcommand{\bG}{\mathbf{G}}
\newcommand{\bn}{\mathbf{n}}
\newcommand{\bzero}{{\boldsymbol{0}}}
\newcommand{\bfeta}{{\boldsymbol{\eta}}}
\newcommand{\bfxi}{{\boldsymbol{\xi}}}
\newcommand{\bfzeta}{{\boldsymbol{\zeta}}}
\newcommand{\bfgamma}{{\boldsymbol{\gamma}}}
\providecommand*\emaillink[1]{\nolinkurl{#1}}
\renewcommand{\email}[1]{\href{mailto:#1}{\emaillink{#1}}}
\title{Model for Acid-Mediated Tumour Invasion with Chemotherapy Intervention I: Homogeneous Populations}
\author{Andrew~B. Holder$^\dagger$}
\thanks{$^\dagger$ Corresponding author, \email{aholder@uow.edu.au}}
\author{Marianito~R. Rodrigo$^\ddagger$}
\thanks{$^\ddagger$ \email{marianito\_rodrigo@uow.edu.au}}
\address{School of Mathematics and Applied Statistics, University of Wollongong, NSW 2522, Australia}
\keywords{acid-mediation - chemotherapy - tumour modelling - ordinary differential equations - periodic solutions - asymptotic stability}
\subjclass[2010]{34C25 - 34D20 - 37N25 - 92B05}
\begin{document}

\begin{abstract}
The acid-mediation hypothesis, that is, the hypothesis that acid produced by tumours, as a result of aerobic glycolysis, provides a mechanism for invasion, has so far been considered as a relatively closed system. The focus has mainly been on the dynamics of the tumour, normal-tissue, acid and possibly some other bodily components, without considering the effect of an external intervention such as a cytotoxic treatment. This article aims to examine the effect that a cytotoxic treatment has on a tumour growing under the acid-mediation hypothesis by using a simple set of ordinary differential equations that consider the interaction between normal-tissue, tumour-tissue, acid and chemotherapy drug.
\end{abstract}
\maketitle
\pagestyle{myheadings}
\thispagestyle{plain}
\markboth{\textsc{A.~B. Holder \and M.~R. Rodrigo}}{\textsc{Model for acid-mediated tumour invasion with chemotherapy intervention}}

\section{Introduction}

This article considers the acid-mediation hypothesis with the added interaction of a tumour treatment protocol. The acid-mediation hypothesis is the assumption that tumour invasion is facilitated by acidification of the region around the tumour-host interface caused by aerobic glycolysis, also known as the Warburg effect~\citep{Warburg1926}. This acidification creates an inhospitable environment and results in the destruction of the normal-tissue ahead of the acid resistant tumour thus enabling the tumour to invade into the vacant region. This hypothesis was first examined by Gatenby~and~Gawlinski~\cite{GatenbyGawlinski1996} with a system of reaction-diffusion equations that considers the interaction between the tumour, host and acid. This article examines the acid-mediation hypothesis with the inclusion of population competition as considered in \citep{McGillenetal2013} and also the effect of tumour treatment from a cytotoxic agent such as used for chemotherapy. This will be considered here in a homogeneous environment to gain an understanding of the reaction dynamics that could predict behaviour of an arguably more realistic heterogeneous setting. The heterogeneous setting will be considered in a following article that will utilise a system of reaction-diffusion equations similar to those considered in \cite{GatenbyGawlinski1996,GatenbyGawlinski2003,McGillenetal2013}.

The effect of chemotherapy treatment has yet to be considered in a model that utilises the acid-mediation hypothesis. We wish to present a model that addresses this unexamined question of the interaction of the low extracellular pH of the tumour micro-environment and a cytotoxic tumour treatment. There are however many models that consider chemotherapy and the corresponding effect on the growth of solid tumours. Continuum models have been used in which the dynamics of total cell populations and average chemotherapy drug concentration are considered by employing the use of ordinary differential equations (ODEs), some examples include~\citep{dePillisetal2007,dePillisetal2006,Byrne2003}. There are recent models that consider the addition of an immune response in a tumour cell and chemotherapy model~\citep{dePillisetal2006,dePillisetal2009} encouraged by experimental results suggesting an important impact of the host immune response on the effectiveness of a chemotherapy treatment. Gatenby~and~Gillies~\cite{GatenbyGillies2004} note that highly acidic tumours have been shown to be resistent to anthracyclines as a result of greater phenotypic diversity~\citep{FearonVogelstein1990} which is enabled by mutagenic/clastogenic effects of acidosis. The effects of normal cell populations in a model that considers chemotherapy have largely been neglected. Hence it is an aim of this article is to determine whether the presence of normal cells can alter the perceived effectiveness of chemotherapy.

The article is organised in the following manner. Section~\ref{sec:ModelForm} describes the assumptions made by the model and provides the formulation of the mathematical model being considered. In Section~\ref{sec:sys-constant} the results are presented of a steady-state analysis for the model when treatment characterised by a constant infusion of the chemotherapy drug is considered. The analysis of the model considering regularly scheduled treatments occurring in cycles is presented in Section~\ref{sec:sys-periodic}. A discussion of the results of the analysis of the model considering treatment cycles is given in Section~\ref{sec:discussion}. Concluding remarks have been provided in Section~\ref{sec:concluding-remarks}. Additional results and some of the more laborious calculations required for Sections~\ref{sec:sys-constant}~and~\ref{sec:sys-periodic} have been provided in Appendices~\ref{sec:appendix}--\ref{sec:appendix3}.

\section{Model formulation}\label{sec:ModelForm}

The basic assumptions taken into account to develop the model are
\begin{enumerate}[(i)]
\item Both normal and tumour cells are governed by logistic growth in the absence of any kind of intervention~\citep{GatenbyGawlinski1996,dePillisRadunskaya2003,dePillisetal2006};
\item A population competition relationship exists between the normal and tumour cells~\citep{McGillenetal2013};
\item The tumour-tissue produces $\H$ ions as a result of aerobic glycolysis~\citep{GatenbyGawlinski1996,McGillenetal2013} at a rate proportional to a function of the tumour cell density;
\item The normal-tissue interacts with the excess $\H$ ions, leading to a death rate proportional to the concentration
of $\H$ ions~\citep{McGillenetal2013,GatenbyGawlinski1996};
\item The excess $\H$ ions are produced at a rate proportional to the neoplastic cell density and an uptake term is included to take
account of mechanisms for increasing pH~\citep{GatenbyGawlinski1996};
\item The chemotherapy drug is infused at a rate given by a function of time. A term is included for removal of drug from the system by metabolic processes~\citep{Byrne2003,dePillisetal2006};
\item The tumour-tissue interacts with the chemotherapy drug leading to destruction of tumour-tissue at a rate proportional to the concentration of drug~\citep{Byrne2003,dePillisetal2006};
\item The chemotherapy drug concentration is decreased as a result of interaction with the tumour-tissue~\citep{Byrne2003}.
\end{enumerate}

Let the populations at time $s$ (in $\s$) be denoted by:
\begin{itemize}
\item $N_1(s)$, normal cell density (in $\cells\,\cm^{-3}$),
\item $N_2(s)$, tumour cell density (in $\cells\,\cm^{-3}$),
\item $H(s)$, excess $\H$ ion concentration (in $\M$),
\item $C(s)$, chemotherapy drug concentration (in $\M$).
\end{itemize}
Consider the following model
\begin{align}
\Der{N_1}{s}=&\,\ubra{r_1 N_1 \left(1-\frac{N_1}{K_1}-\alpha_1 \frac{N_2}{K_2}\right)}{12em}{logistic growth with cellular competition}-\ubra{d_1 H N_1}{6em}{normal cell death by acid},\label{eqn:N1}\\
\Der{N_2}{s}=&\,\ubra{r_2 N_2 \left(1-\frac{N_2}{K_2}-\alpha_2 \frac{N_1}{K_1}\right)}{12em}{logistic growth with cellular competition}- \ubra{d_2 C N_2}{6em}{tumour death by drug},\label{eqn:N2}\\
\Der{H}{s}=&\,\ubra{r_3 f(N_2)}{6em}{acid production} - \ubra{m_3 H}{6em}{acid uptake},\label{eqn:H}\\
\Der{C}{s}=&\,\ubra{r_I(s)}{6em}{drug infusion}- \ubra{m_4 C}{6em}{drug decomposition} -\ubra{d_4 C N_2}{6em}{drug-tumour interaction removal}.\label{eqn:A}
\end{align}
The conventions used here are that the subscript for each parameter corresponds to the relevant equation; $r$ represents growth rate; $K$ represents carrying capacity; $\alpha$ represents population competition strength; $d$ represents rate of decrease due to interaction; $m$ represents decrease through system mechanisms. The parameters used in the model, their interpretation and potential values/range of values have been provided in Table~\ref{tab:parameters}.

\begin{table}[htbp]
   \centering
        \caption{Table of parameters and estimated values}
     \begin{tabularx}{\textwidth}{llXll}
     \toprule
     Parameter     & Units         & Description   & Value         & Source \\
     \midrule
     $r_1$         & $\s^{-1}$     & normal cell growth rate & $\O(10^{-6})$ & \citep{GatenbyGawlinski1996,dePillisetal2006} \\
     $r_2$         & $\s^{-1}$     & tumour cell growth rate & $\O(10^{-6})$ & \citep{GatenbyGawlinski1996,dePillisetal2006} \\
     $r_3$         & $\M \, \cm^3\,\s^{-1} \cells^{-1}$  & $\H$ ion production rate & $2.2\times 10^{-17}$ & \citep{MartinJain1994} \\
     $d_1$         & $\M^{-1}\,\s^{-1}$ & fractional normal cell kill by $\H$ ions & $\O(1)$       & \citep{GatenbyGawlinski1996} \\
     $d_2$       & $\M^{-1}\,\s^{-1}$ & fractional tumor cell kill by chemotherapy & $9.3\times 10^{-6}$ & \citep{dePillisetal2007} \\
     $d_4$      & $\cells^{-1}\,\s^{-1}$ & fractional chemotherapy removal by tumour interaction & $\O(10^{-13})$--$\O(10^{-12})$ & estimated \\
     $m_3$         & $\s^{-1}$     & $\H$ ion removal rate & $\O(10^{-4})$ & \citep{GatenbyGawlinski1996} \\
     $m_4$      & $\s^{-1}$     & chemotherapy removal rate & $\O(10^{-5})$ & \citep{JacksonByrne2000,dePillisetal2006} \\
     $K_1$         & $\cells\,\cm^{-3}$ & normal cell carrying capacity & $5\times 10^7$ & \citep{Tracquietal1995} \\
     $K_2$         & $\cells\,\cm^{-3}$ & tumour cell carrying capacity & $5\times 10^7$ & \citep{Tracquietal1995} \\
     $\alpha_1$         & none          & fractional normal cell death due to tumour cell & $\O(1)$       & chosen freely \\
     $\alpha_2$         & none          & fractional tumour cell death due to normal cell & $\O(1)$       & chosen freely \\
     \bottomrule
     \end{tabularx}
   \label{tab:parameters}%
 \end{table}%

A question arises: What do we choose for $f(N_2)$ and $r_I(s)$? In the model considered by Gatenby~and~Gawlinski~\cite{GatenbyGawlinski1996} and McGillen~et~al.~\cite{McGillenetal2013} it was assumed that acid was produced as a linear function of the tumour cell density, i.e. $f(N_2)=N_2$. In the model considered by Holder~et~al.~\cite{Holderetal2014} a nonlinear acid production term was used as a result of the hypothesis that when the tumour cell density was small, acid was produced at a rate proportional to the tumour cell density until a tumour cell saturation was reached at which point acid production would decrease to zero. With this in mind, the function $f(N_2)=N_2(1-N_2/K_2)$ was used. For simplicity we wish to use the acid production term considered in \citep{GatenbyGawlinski1996} and \citep{McGillenetal2013}, as such we have $f(N_2)=N_2$.

As for $r_I(s)$, we will choose appropriate functions to represent various treatment protocols. Hence the most obvious, and perhaps most realistic, choice would be to chose a function that is periodic, i.e. $r_I(s)=r_I(s+P)$, where $P$ represents the length of the treatment cycle, or period, as this would represent a treatment that occurs in repeated cycles such as taking pills or an intravenous administration made in regularly scheduled doses. However, to enable a greater potential for analysis we can choose $r_I(s)$ to be constant which would represent a constant infusion of chemotherapy drug, i.e. via a device such as an intravenous pump. No matter the choice of $r_I(s)$ we will naturally require it to meet the conditions that $r_I(s)\ge0$ for all $s\ge 0$ and that $r_I(s)$ is bounded almost everywhere. These represent natural limitations on a treatment since a negative infusion rate would represent removal of drug from the system and an unbounded infusion rate would represent an infinite amount of drug to be infused. In the case of administration by pills the use of periodic Dirac delta functions (i.e. $\delta(s)$) can be used to approximate this method of delivery: Let $P$ be the length of the treatment cycle and $N$ being the total number of treatment cycles, then
\begin{equation}\label{eqn:pills}
r_I(s)=r_4\sum_{n=0}^{N-1}{\delta(s-nP)}.
\end{equation}
In the case of intravenous infusion occurring in periodic cycles we can approximate this method of delivery with periodic uses of a boxcar function: Let $P$ denote the cycle period and $s_0$ denote the infusion time, then
\begin{equation}\label{eqn:perinf}
r_I(s)=r_4\sum_{n=0}^{N-1}{\left[\theta(s-nP)-\theta(s-nP-s_0)\right]},
\end{equation}
where $r_4$ represents the constant rate of intravenous infusion and $\theta(s)$ is the Heaviside function.

Considering the function $r_I(s)$ with period $P$ we let
$$
\bar{r}=\frac{1}{P}\int_0^P{r_I(s)\,\d s}
$$
and then utilise the value $\bar{r}$ to non-dimensionalise the equations given by \eqref{eqn:N1}--\eqref{eqn:A}. We remark that this choice of parameter to non-dimensionalise the model enables us to effectively compare the model when utilising different infusion functions. This is because under this non-dimensionalisation the constant infusion rate is equal to the average infusion rate in the periodic case and this will imply that the same amount of drug is infused per cycle no matter the infusion function used. Hence we can compare the models that use the same non-dimensional parameter values. \noeqref{eqn:N2,eqn:H}

Make the following substitutions
\begin{equation}
u_1=\frac{N_1}{K_1},\quad u_2=\frac{N_2}{K_2},\quad u_3=\frac{m_3}{r_3 K_2}H,\quad u_4=\frac{m_4}{\bar{r}}C,\quad t=r_1 s,
\end{equation}
with
\begin{equation}
\beta_2=\frac{r_2}{r_1},\quad \beta_3=\frac{m_3}{r_1},\quad \beta_4=\frac{m_4}{r_1},\quad \delta_1=\frac{r_3K_2d_1}{r_1m_3},\quad \delta_2=\frac{d_2\bar{r}}{r_2m_4},\quad \delta_4=\frac{d_4K_2}{m_4}
\end{equation}
and
\begin{equation}
i(t)=\frac{r_I(t/r_1)}{\bar{r}},\quad \rho=r_1 P.
\end{equation}
We then obtain the following system of non-dimensionalised equations
\begin{equation}\label{eqn:sys}
\bu'=\begin{bmatrix}u_1'\\u_2'\\u_3'\\u_4'
\end{bmatrix}=\begin{bmatrix}
u_1(1-u_1-\alpha_1 u_2-\delta_1 u_3)\\
\beta_2 u_2(1-u_2-\alpha_2 u_1-\delta_2 u_4)\\
\beta_3(u_2-u_3)\\
\beta_4[i(t)-u_4-\delta_4u_4u_2]
\end{bmatrix}=:\bF(t,\bu),
\end{equation}
where $(\,)'$ denotes differentiation with respect to $t$. Note that
\begin{equation}
\bar{i}=\frac{1}{\rho}\int_0^\rho{i(t)\,\d t}=1
\end{equation}
and thus the average rate of infusion over each treatment cycle has been normalised to be equal to one. Moreover, under this non-dimensionalisation, the functions \eqref{eqn:pills}~and~\eqref{eqn:perinf} become
\begin{equation}\label{eqn:pills-non-dim}
i(t)=\rho\sum_{n=0}^{N-1}{\delta(t-n\rho)}
\end{equation}
and
\begin{equation}\label{eqn:perinf-non-dim}
i(t)=\frac{\rho}{\tau}\sum_{n=0}^{N-1}{\left[\theta(t-n\rho)-\theta(t-n\rho-\tau)\right]};\quad \tau=r_1s_0,
\end{equation}
respectively.

A summary of potential non-dimensional parameter values/range of values and interpretation of their meaning has been provided in Table~\ref{tab:nondimparameters}. Note that the primary control parameter is $\delta_2$ since an increase in the amount of drug infused will cause $\delta_2$ to increase.

\begin{table}[htbp]
   \centering
        \caption{Table of non-dimensionalised parameters}
     \begin{tabularx}{\textwidth}{lXl}
     \toprule
     Parameter     & Interpretation & Value/Range    \\
     \midrule
     $\alpha_1$    & fractional normal death due to tumour competition & $\O(1)$        \\
     $\alpha_2$    & fractional tumour death due to normal competition & $\O(1)$        \\
     $\delta_1$    & tumour aggressiveness & $\O(1)$        \\
     $\delta_2$    & chemotherapy aggressiveness & $\O(10^{-1})$--$\O(1)$        \\
     $\delta_4$    & fractional removal due to interaction strength & $\O(10^{-1})$--$\O(1)$         \\
     $\beta_2$     & relative tumour growth rate & $1.0$          \\
     $\beta_3$     & relative $\H$ ion production rate & $\O(10^2)$  \\
     $\beta_4$     & relative chemotherapy rate of increase &  $\O(10)$    \\
     \bottomrule
     \end{tabularx}%
   \label{tab:nondimparameters}%
\end{table}%

Note we define $\R_+=[0,\infty)$ and the convention is used that if $\bu,\bv\in\R^n$, then $\bu\le(<)\bv$ implies that $u_j\le(<)v_j$ for all $j\in\{1,2,3,\ldots,n\}$. Moreover, if $c\in\R$, then $\bu\ge(>) c$ implies that $u_j\ge(>) c$ for all $j\in\{1,2,3,\ldots,n\}$.

\begin{theorem}
Let $i \in C(\R_+,[0,i_M])$, where $i_M\in\R$ and $i_M>0$. If $\bu(0) \in \R_+^4$, then \eqref{eqn:sys} has a unique solution~$\bu$ that satisfies $\bu(t) \in \R_+^4$ for all $t \in \R_+$.
\end{theorem}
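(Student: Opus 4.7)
The plan is to prove this in three steps corresponding to three classical facts about ODE systems: local existence/uniqueness, forward invariance of the nonnegative orthant, and global existence via a priori bounds.

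First I would handle local existence and uniqueness. The right-hand side $\bF(t,\bu)$ is continuous in $t$ (because $i$ is) and polynomial in $\bu$, hence locally Lipschitz in $\bu$ uniformly on any bounded $t$-interval. The Picard--Lindel\"of theorem therefore gives a unique maximal solution $\bu$ defined on some maximal interval $[0,T_{\max})$ with $T_{\max} \in (0,\infty]$.

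Next I would establish that $\R_+^4$ is forward invariant by verifying the standard tangent (Nagumo) condition: whenever $\bu \in \R_+^4$ with $u_j = 0$, one has $F_j(t,\bu) \ge 0$. The first two components are of the form $u_j \cdot (\text{bounded factor})$, so $F_1$ vanishes on $\{u_1=0\}$ and $F_2$ vanishes on $\{u_2=0\}$; the third component at $u_3=0$ equals $\beta_3 u_2 \ge 0$ since $u_2 \ge 0$; and the fourth at $u_4=0$ equals $\beta_4\, i(t) \ge 0$ since $i \ge 0$. Hence the tangent condition holds on $\partial \R_+^4$, and $\bu(t) \in \R_+^4$ on $[0,T_{\max})$.

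Now I would rule out blow-up. Given the just-proved nonnegativity, the drug-related term $-\delta_1 u_3$ in the $u_1$-equation and the analogous $-\delta_2 u_4$ term in the $u_2$-equation are nonpositive, yielding the differential inequalities
\begin{equation}
u_1' \le u_1(1-u_1),\qquad u_2' \le \beta_2 u_2(1-u_2).
\end{equation}
Comparison with the logistic ODE gives $u_1(t) \le \max\{1,u_1(0)\}$ and $u_2(t) \le \max\{1,u_2(0)\}$ for all $t\in[0,T_{\max})$. Feeding this into the $u_3$-equation yields $u_3' \le \beta_3(M_2 - u_3)$ with $M_2 = \max\{1,u_2(0)\}$, so $u_3(t) \le \max\{M_2, u_3(0)\}$. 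For the drug equation, nonnegativity of $\delta_4 u_4 u_2$ gives $u_4' \le \beta_4(i_M - u_4)$, hence $u_4(t) \le \max\{i_M, u_4(0)\}$.

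Thus $\bu(t)$ remains in a fixed bounded subset of $\R_+^4$ throughout $[0,T_{\max})$, so by the standard continuation theorem $T_{\max}=\infty$, completing the proof. The only subtlety is the forward invariance step: one must be careful that the tangent condition is applied at points on $\partial \R_+^4$ rather than assuming smooth barriers, but all four components have either a factor of the variable itself or a manifestly nonnegative production term when that variable vanishes, so the argument goes through cleanly.
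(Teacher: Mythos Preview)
Your proof is correct and uses essentially the same ingredients as the paper's, but packaged differently. The paper constructs a single compact box
\[
S=[0,\max\{1,u_1(0)\}]\times[0,\max\{1,u_2(0)\}]\times[0,\max\{1,u_2(0),u_3(0)\}]\times[0,\max\{i_M,u_4(0)\}]
\]
upfront, verifies the tangent condition on all eight faces of $\partial S$ (both lower and upper), checks a one-sided Lipschitz condition on $S$, and then invokes a single invariance-plus-existence theorem (Walter's Invariance Theorem) to conclude global existence, uniqueness, and $\bu(t)\in S\subset\R_+^4$ in one stroke. You instead unbundle this into Picard--Lindel\"of for local existence, the Nagumo tangent condition on the lower faces only to get forward invariance of $\R_+^4$, scalar comparison arguments to recover exactly the same upper bounds, and the continuation criterion to rule out blow-up. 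Your route is slightly more elementary in that it avoids citing a specialised invariance theorem and instead assembles the result from standard pieces; the paper's route is more compact because the upper bounds are encoded directly into the invariant set rather than derived after the fact. Either way the same box and the same tangent-condition computations are doing the work.
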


\begin{proof}
We utilise Theorem~\ref{thm:Invariance-Exist-Uniqe} that requires the existence of an invariant set, as given by Definition~\ref{def:invariantset}. Clearly, $\bF\in C(\R_+\times S,\R^4)$ and $\bF_\bu'\in C(\R_+\times S,\R^{4^2})$, where $S$ is any compact set in $\R^4$. This implies that $\bF$ is Lipschitz continuous with respect to $\bu$ in any compact set $S\subset\R^4$, that is, there exists a constant $L>0$ such that for any $\bu_1,\bu_2\in S\subset\R^4$ and $t\in\R_+$, the following inequality holds:
\begin{equation}\label{eqn:Lipschitz}
\|\bF(t,\bu_1)-\bF(t,\bu_2)\|\le L\|\bu_1-\bu_2\|.
\end{equation}
The Cauchy--Schwarz inequality and \eqref{eqn:Lipschitz} are now used to show that the one-sided Lipschitz condition in Theorem~\ref{thm:Invariance-Exist-Uniqe} is satisfied on any compact set $S\subset\R^4$. For any $\bu_1,\bu_2\in S\subset\R^4$ and $t\in\R_+$,
\begin{equation}\label{eqn:onesidedLipschitz}
\langle\bu_1-\bu_2,\bF(t,\bu_1)-\bF(t,\bu_2)\rangle\le\|\bu_1-\bu_2\| \|\bF(t,\bu_1)-\bF(t,\bu_2)\|\\
\le L\|\bu_1-\bu_2\|^2.
\end{equation}
An invariant set, as given by Definition~\ref{def:invariantset}, is now constructed in $\R_+^4$. A set $S\subset\R^4$ will be invariant with respect to \eqref{eqn:sys} if
\begin{equation}\label{eqn:tangentcond}
\langle\mathbf{n}(\bu),\bF(t,\bu)\rangle\le0\quad\text{for}\quad t\in\R_+,\quad\bu\in\partial S,
\end{equation}
where $\mathbf{n}(\bu)$ is the outer normal to $S$ at $\bu$. This invariance condition tells us that if $\bu(0)\in S$, then the whole path of the solution $\bu(t)$ will remain in $S$. Let $S=E_1\times E_2 \times E_3 \times E_4$, where $E_1=[0,\max{\{1,u_1(0)\}}]$, $E_2=[0,\max{\{1,u_2(0)\}}]$, $E_3=[0,\max{\{1,u_2(0),u_3(0)\}}]$ and $E_4=[0,\max{\{i_M,u_4(0)\}}]$. Clearly, $S\subset\R_+^4$, $\bu(0)\in S$ and $S$ is compact. Hence $\bF$ will satisfy the one-sided Lipschitz condition on $S$. The boundary of $S$ (i.e $\partial S$) can be written as the union of eight simple sets that have a simple outer normal. Let $\partial S_{i1}=\{\bu\in S: u_i=\inf{E_i}\}$ and $\partial S_{i2}=\{\bu\in S: u_i=\sup{E_i}\}$ for $i=1,2,3,4$, then $\partial S=\bigcup_{i=1}^4{\partial S_{i1}\cup\partial S_{i1}}$. Furthermore, each $\partial S_{ij}$ for $i=1,2,3,4$, $j=1,2$, has outer normal $\mathbf{n}_{ij}=(-1)^j\mathbf{e}_i$, where $\mathbf{e}_i$ for $i=1,2,3,4$ are the standard basis vectors in $\R^4$. Then, it is straightforward to show that
\begin{equation}
\langle\mathbf{n}_{ij},\bF(t,\bu)\rangle=(-1)^jF_i(t,\bu)\le0,~\text{for}~t\in\R_+,~\bu\in\partial S_{ij},~i=1,2,3,4,~j=1,2.
\end{equation}
Hence the set $S$ is invariant and $\bF$ satisfies the one-sided Lipschitz condition on $S$. Therefore by Theorem~\ref{thm:Invariance-Exist-Uniqe}, there exists a unique solution for all time to \eqref{eqn:sys} with initial condition $\bu(0)\in S\subset \R_+^4$, where the path of the solution remains in $S$.
\hfill\end{proof}

Note by a similar argument to the above proof, solutions $\bu(t)$ of \eqref{eqn:sys} are invariant on the sets $\Gamma_1=\{0\}\times\R^3$ and $\Gamma_2=\R\times \{0\}\times \R^2$. Hence if there exists $t_1\in\R_+$ such that $u_1(t_1)=0$, then $u_1(t)=0$ for all $t\in[t_1,\infty)$, similarly if there exists $t_2\in\R_+$ such that $u_2(t_2)=0$, then $u_2(t)=0$ for all $t\in[t_2,\infty)$.

\section{Constant infusion of chemotherapy drug}\label{sec:sys-constant}

If we consider \eqref{eqn:sys} with constant infusion (i.e. $i(t)=1$), we obtain the system of equations
\begin{equation}\label{eqn:sys-const}
\bu'=\begin{bmatrix}u_1'\\u_2'\\u_3'\\u_4'
\end{bmatrix}=\begin{bmatrix}
u_1(1-u_1-\alpha_1 u_2-\delta_1 u_3)\\
\beta_2 u_2(1-u_2-\alpha_2 u_1-\delta_2 u_4)\\
\beta_3(u_2-u_3)\\
\beta_4(1-u_4-\delta_4u_4u_2)
\end{bmatrix}.
\end{equation}

\subsection{Steady-state analysis}

The natural method for analysis of a system of first-order nonlinear autonomous ordinary differential equations~(ODEs) is through the use of a steady-state~(SS) analysis to determine the long-term behaviour of the system. A summary of the results of the SS analysis and stability analysis for system \eqref{eqn:sys-const} is presented below. For full details of the analysis see Lemma~\ref{lem:SSfullsystem} in Appendix~\ref{sec:appendix3}.

System~\eqref{eqn:sys-const} has SS solutions:
\begin{enumerate}
\item[SS1.] $\bu^*=(0,0,0,1)$;
\item[SS2.] $\bu^*=(1,0,0,1)$;
\item[SS3.] $\bu^*=(0,\hat{u}_2,\hat{u}_2,[1+\delta_4 \hat{u}_2]^{-1})$, where $\hat{u}_2$ solves
$$
\delta_4 \hat{u}_2^2+(1-\delta_4)\hat{u}_2+\delta_2-1=0;
$$
\item[SS4.] $\bu^*=(1-(\alpha_1+\delta_1)\tilde{u}_2,\tilde{u}_2,\tilde{u}_2,[1+\delta_4 \tilde{u}_2]^{-1})$, where $\tilde{u}_2$ solves
$$
\delta_4[1-\alpha_2(\alpha_1+\delta_1)]\tilde{u}_2^2+[1-\alpha_2(\alpha_1+\delta_1)+\delta_4(\alpha_2-1)]\tilde{u}_2+\delta_2+\alpha_2-1=0.
$$
\end{enumerate}

However, note that the zero population SS (i.e. SS1) is unconditionally unstable and as a result the solution should always tend towards containing a population of either normal-tissue or tumour-tissue. We see that the tumour free SS (i.e. SS2) is stable provided $\alpha_2+\delta_2>1$. Therefore assuming there is a tumour population, we at the least require this condition to remove the tumour from the system. This condition corresponds to a sufficiently strong treatment in combination with a sufficiently strong population competition provided by the normal-tissue. This state represents the desired state of existence for the system from the point of view of the patient. Hence it is an aim to discover how the system can be altered to make SS2 the most likely long term solution.

It can be seen from Lemma~\ref{lem:SSfullsystem}(iii) that the normal-tissue free SS (i.e. SS3) is stable provided certain parameter conditions are met: these being if
$$
\delta_2<\frac{(\alpha_1+\delta_1+\delta_4)(\alpha_1+\delta_1-1)}{(\alpha_1+\delta_1)^2},
$$
or if
$$
\frac{(\alpha_1+\delta_1+\delta_4)(\alpha_1+\delta_1-1)}{(\alpha_1+\delta_1)^2}\le \delta_2 < \frac{(1+\delta_4)^2}{4\delta_4}\quad\text{and}\quad \frac{1}{\delta_4}<\frac{\alpha_1+\delta_1-2}{\alpha_1+\delta_1}.
$$
This state corresponds to an invasive tumour population in which all normal-tissue in the region is destroyed and replaced by the advancing tumour. Note that from the stability conditions, for the normal-tissue free population to be stable it is a necessary condition that $\alpha_1+\delta_1>1$, otherwise the SS will be unconditionally unstable. This means that the tumour needs to provide sufficiently strong population competition and destructive influence of the acid to potentially be stable. Furthermore, the destructive influence of the treatment ($\delta_2$) needs to be sufficiently small or, alternatively, the removal of chemotherapy drug from interaction with tumour cells ($\delta_4$) needs to be sufficiently large to ensure that the normal-tissue free state is stable. We remark that if $\delta_2>(1+\delta_4)^2/4\delta_4$, then the normal-tissue free state does not exist (i.e. $\hat{u}_2\notin\R$). Hence this condition represents a scenario in which the tumour will be completely removed from the system by the treatment alone. As $\delta_2$ directly relates to the strength of the treatment dose, to obtain a value of $\delta_2$ that will ensure the removal of the tumour by treatment alone may present safety and health concerns for the patient \citep{Perry2008,Devitaetal2010}. However, from the stability conditions, should the tumour-tissue population competition, the destructive influence of the acid or the removal of drug by interaction with the tumour be decreased, then this could enable the tumour to be removed without using a dangerous treatment dose. As noted in \citep{Estrellaetal2013,GatenbyGillies2004} the use of an acid buffer to decrease the acidity could be a potential method to increase the efficacy of the treatment without further increasing doses of strong cytotoxic drugs.

The normal-tissue free state can be stable when the tumour-tissue free state is either unstable or stable. In the case the normal-tissue free state is stable when the tumour-tissue free state is unstable, the long term behaviour would be for the tumour to establish a fixed population that cannot be eradicated by the current treatment protocol. This would suggest that the normal-tissue and chemotherapy treatment would be weak in relation to the tumour-tissue and would potentially correspond to a very aggressive tumour. In the case that both the tumour-tissue free state and the normal-tissue free state are stable, the question of whether the treatment will be effective or the tumour population will successfully invade is dependent on the initial conditions. Therefore the suggestion is that the effectiveness of the treatment will be determined by the size of the initial tumour population. This is consistent with the decreased probability of a cure associated with larger and more established tumour cell populations \cite{Perry2008}.

The coexistence of tumour- and normal-tissue SS (i.e. SS4) is stable and exists for a complicated, yet still calculable, set of parameter conditions given in Lemma~\ref{lem:SSfullsystem}(iv). These parameter conditions suggest that in order for the coexistence state to exist, the system requires unaggressive tumour- and normal-tissue in combination with a weak treatment response. That is, there needs to be very low population competition, tumour aggressiveness and destructive influence of the chemotherapy treatment. As in \citep{GatenbyGawlinski1996}, this suggests that the SS would represent a benign state of existence. The coexistence SS can potentially change to either the tumour-tissue free SS or the normal-tissue free SS provided a sufficient change occurs in the parameters. Should the tumour aggressiveness or the tumour-tissue population competition increase, then the tumour would transition to the invasive state, where the normal-tissue free state is stable. Similarly, should the normal-tissue population competition or the destructive influence of the treatment increase, then the tumour will be eradicated from the system.

\subsection{A reduced model with constant infusion}\label{sec:redsys-constant}

If we consider the situation originally examined in \citep{Byrne2003} we have the system of equations
\begin{equation}\label{eqn:redsys-const}
\bx'=\begin{bmatrix} x_1'\\x_2'\end{bmatrix}=\begin{bmatrix}\beta_2x_1(1-x_1-\delta_2 x_2)\\ \beta_4(1-x_2-\delta_4x_1x_2)\end{bmatrix}.
\end{equation}
If $\alpha_2=0$ in \eqref{eqn:sys-const}, then the reduced system \eqref{eqn:redsys-const} provides the governing dynamics for the tumour-tissue density and cytotoxic drug concentration. Analysing this system we can obtain the conditions under which it is sufficient to obtain tumour clearance from the system by chemotherapy drug without the assistance of population competition. The results for this model are considered in \citep{Byrne2003}, however we provide them here in the current parameters for the convenience of the reader and easy reference for further results in this article. As in \citep{Byrne2003}, see that this system has the SS solutions:
\begin{enumerate}
\item[RS1.] $\bx^*=(0,1)$;
\item[RS2.] $\bx^*=\left(\hat{u}_2,[1+\delta_4 \hat{u}_2]^{-1}\right)$, where $\delta_4 \hat{u}_2^2+(1-\delta_4)\hat{u}_2+\delta_2-1=0$.
\end{enumerate}
Note that these are the same values for the tumour density and drug concentration obtained for SS2 and SS3, respectively. Hence we have that this quadratic equation has the solutions
\begin{equation}
\hat{u}_{2\pm}=\frac{\delta_4-1\pm\sqrt{(1-\delta_4)^2+4\delta_4(1-\delta_2)}}{2\delta_4},
\end{equation}
where $\hat{u}_{2\pm}\in\R$ if and only if $\delta_2 \le (1+\delta_4)^2/4\delta_4$.

As is shown by Byrne~\cite{Byrne2003}, RS1 is stable for $\delta_2>1$, and RS2 is stable if
$$
\delta_2<1,\quad \text{or if} \quad 1\le\delta_2<\frac{(1+\delta_4)^2}{4\delta_4}~\text{and}~\delta_4>1.
$$
Note that the SS $\bx^*=(\hat{u}_{2-},1/(1+\delta_4\hat{u}_{2-}))$ is unconditionally unstable. In the case $1<\delta_2<(1+\delta_4)^2/4\delta_4$~and~$\delta_4>1$ this SS is positive and represents a point on which the separatrix lies.

Furthermore, it can be shown that if $\delta_2\ge1$ and $\delta_4<1$, or if $\delta_2>(1+\delta_4)^2/4\delta_4$, that the long term behaviour of the model will be for the tumour to be eradicated from the system, since not only is RS2 unstable but also biologically meaningless. Moreover, note that under this parameter condition in the case of the full system~\eqref{eqn:sys-const} we similarly get that the tumour-tissue free SS (i.e. SS2) is the only stable solution and as a result the tumour will be eradicated from the system. We remark that in the case of the parameter condition $\delta_2>(1+\delta_4)^2/4\delta_4$, RS2 does not exist as $\hat{u}_2\notin\R$. This would suggest that if this condition is satisfied, then the chemotherapy treatment alone will be sufficient to eradicate the tumour without the assistance of the normal cell population to weaken the tumour cells through competition. We can see that under these conditions that the tumour will always be eradicated since should the population of normal cells be zero, the governing dynamics of the system will reduce to that given by \eqref{eqn:redsys-const}. Therefore this indicates that there is a sufficient scenario under which a tumour will be cleared from the system regardless of the interactions between normal and tumour-tissue. Whilst this observation is an ideal aim to achieve, it is not always feasible or possible due to the fact that this may require doses which would potentially kill the host or require the interaction between the treatment and tumour-tissue to be sufficiently weighted in favour of the treatment.

In the case $\delta_2<1$ we have a situation in which the tumour free solution is unstable and hence this would suggest that the tumour is not able to be removed from the system by chemotherapy. However if we are considering the system given by~\eqref{eqn:sys-const}, provided a condition on population competition is satisfied (i.e. $\alpha_2+\delta_2>1$), the tumour free solution will become stable. Furthermore, should tumour aggressiveness and competition be sufficiently small (see Lemma~\ref{lem:SSfullsystem}(i)) we will have that the normal-tissue free SS will become unstable. Therefore under these conditions we will have that the tumour will be eradicated from the system by the combined strength of population competition and chemotherapy treatment.

\section{Periodic infusion of chemotherapy drug}\label{sec:sys-periodic}

In Section~\ref{sec:ModelForm} we stated that a more realistic function for the infusion of drug is a periodic function such as considered in \citep{dePillisetal2006,dePillisetal2007}. Hence assume that $i(t + \rho) = i(t) \ge 0$ for all $t \in \R_+$. Some preliminary numerical simulations of~\eqref{eqn:sys} were run, with $i(t)$ given by \eqref{eqn:perinf-non-dim}, using the \texttt{ode15s} command in MATLAB with parameter values consistent with Table~\ref{tab:nondimparameters}. The values $\rho=2.8$ ($P$ approximately 1 week), $\tau=1.2$ ($s_0$ approximately 3 days) and total time $T=40$ (approximately 3-4 months) were used with initial values $\bu(0)=(0.9,0.1,0.1,0)$. In these simulations three different behaviours occurred: the eradication of the tumour from the system; the ``invasion'' of the tumour and subsequent destruction of the normal-tissue; the coexistence of the tumour and normal-tissue. Examples of these behaviours are displayed in Figures~\ref{fig:p1}--\ref{fig:p3}, respectively.

\begin{figure}[h]
\centering
{\includegraphics{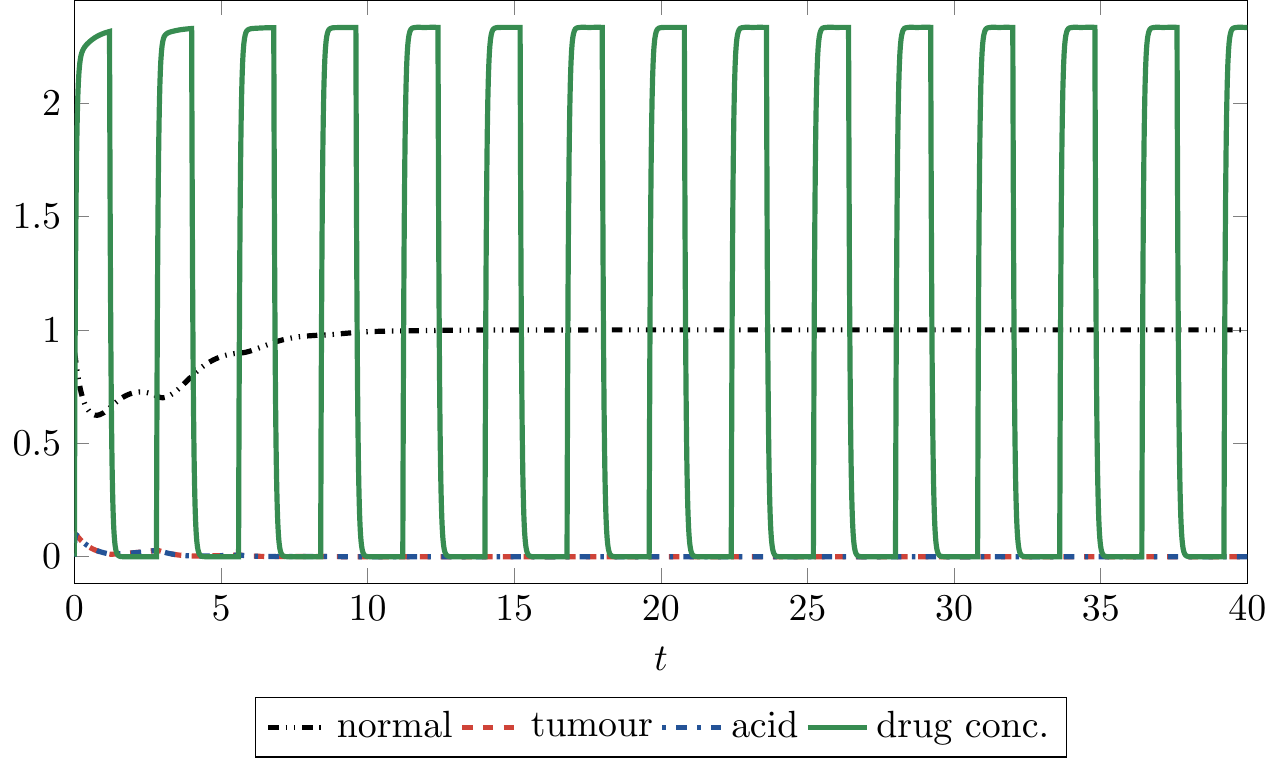}
\caption{Simulation of \protect\eqref{eqn:sys} for $\alpha_1=1$, $\alpha_2=0.5$, $\beta_2=1$, $\beta_3=70$, $\beta_4=20$, $\delta_1=12.5$, $\delta_2=1.1$, $\delta_4=0.6$}
\label{fig:p1}}
\end{figure}

\begin{figure}[h]
\centering
{\includegraphics{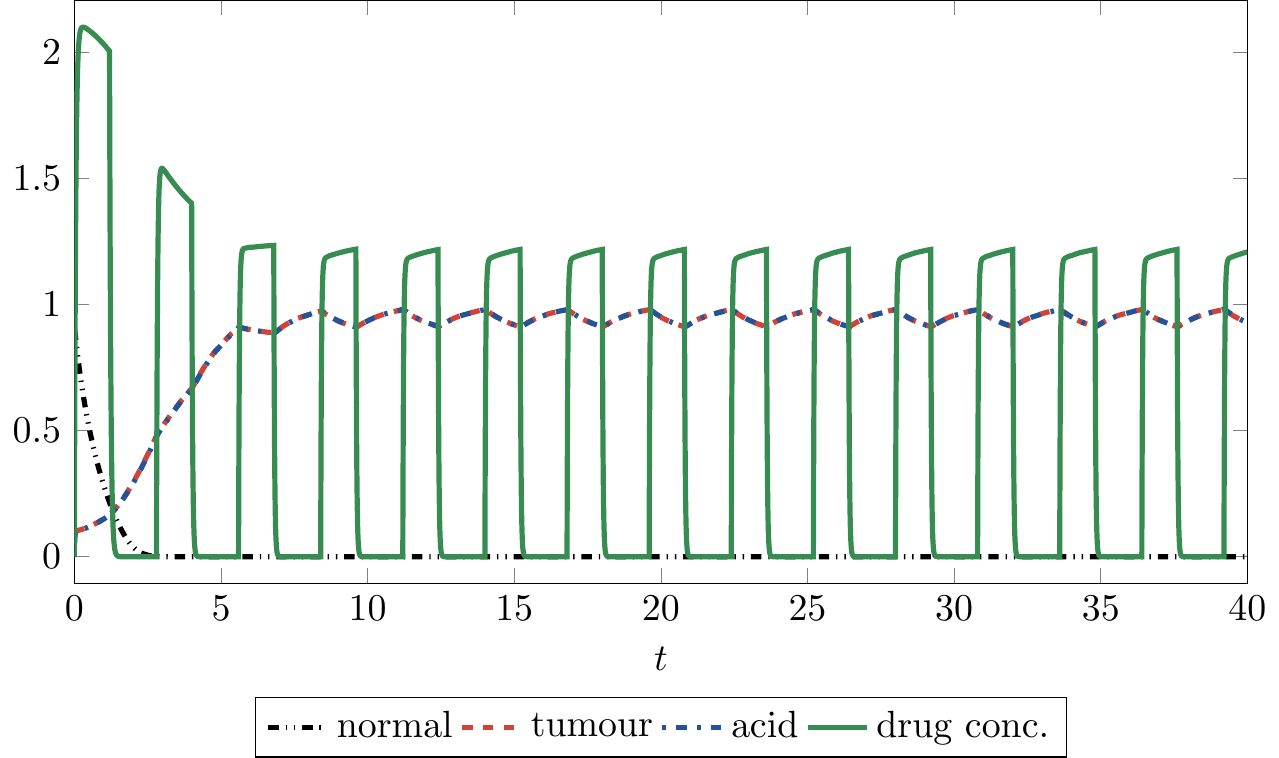}
\caption{Simulation of \protect\eqref{eqn:sys} for $\alpha_1=1$, $\alpha_2=0.5$, $\beta_2=1$, $\beta_3=70$, $\beta_4=20$, $\delta_1=12.5$, $\delta_2=0.1$, $\delta_4=1$}
\label{fig:p2}}
\end{figure}

\begin{figure}[h]
\centering
{\includegraphics{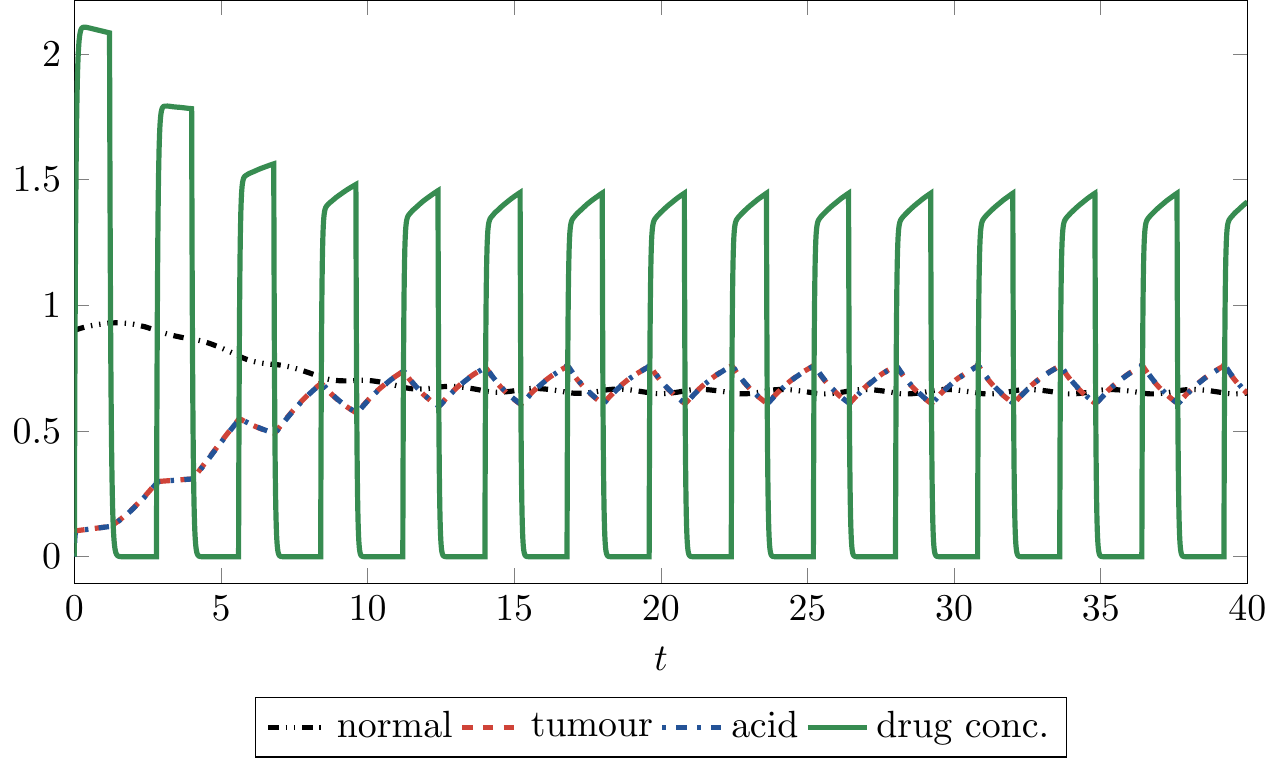}
\caption{Simulation of \protect\eqref{eqn:sys} for $\alpha_1=0.25$, $\alpha_2=0.25$, $\beta_2=1$, $\beta_3=70$, $\beta_4=20$, $\delta_1=0.25$, $\delta_2=0.25$, $\delta_4=1$}
\label{fig:p3}}
\end{figure}

Notice in each of these figures that the solutions evolve towards stable $\rho$-periodic solutions (i.e. $\bu(t)=\bu(t+\rho)$). Therefore to analyse this model we look for time-periodic solutions to \eqref{eqn:sys} with period $\rho$ (i.e. $\bu(t+\rho)=\bu(t)$ for all $t\in \R_+$) and analyse the stability of these solutions to determine the long term behaviour of the system. This is analogous to a steady-state analysis or limit-cycle analysis for an autonomous system of equations.

A reduced version of system \eqref{eqn:sys} is considered first that corresponds to when the solution for $u_1\equiv 0$. The system considered will be analogous to the reduced system considered in Section~\ref{sec:redsys-constant}.  Moreover, the reduced system corresponds to that originally proposed in \citep{Byrne2003}. Byrne~\cite{Byrne2003} however did not analyse the system in this form, but rather made the simplifying assumption that the drug concentration was equivalent to the infusion function which was given by \eqref{eqn:perinf-non-dim}. This reduced the system to a single explicitly solvable Bernoulli equation. Here we present a more thorough analysis of this model for general $\rho$-periodic functions $i\in C(\R_+)$.

\subsection{Existence, uniqueness and stability of the periodic solution of a reduced system}\label{sec:red-sys}

Consider the system
\begin{equation}\label{eqn:redsys}
\bx'=\begin{bmatrix} x_1'\\x_2'\end{bmatrix}=\begin{bmatrix}\beta_2x_1(1-x_1-\delta_2 x_2)\\ \beta_4[i(t)-x_2-\delta_4x_1x_2]\end{bmatrix}=:\bG(t,\bx).
\end{equation}

The results of Lemma~\ref{lem:x1-quadratic} show that periodic solutions can exist for system~\eqref{eqn:redsys} only if $\delta_2<1$, or only if $1\le \delta_2 < (1+\delta_4)^2/4\delta_4$ and $\delta_4>1$. Hence this guides the region of parameter values for which we look for $\rho$-periodic solutions to exist for \eqref{eqn:redsys}.

\subsubsection{Existence}\label{sec:red-sys-exist}

Suppose that $\delta_2 < 1$ and consider the systems
\begin{equation}\label{eqn:redsys2}
\ubx'=\begin{bmatrix} \ubax_1'\\\ubax_2'\end{bmatrix}=\begin{bmatrix}\beta_2\ubax_1(1-\ubax_1-\delta_2 \ubax_2)\\ \beta_4[i(t)-\ubax_2]\end{bmatrix}=:\ubG(t,\ubx)
\end{equation}
and
\begin{equation}\label{eqn:redsys3}
\bbx'=\begin{bmatrix} \bax_1'\\\bax_2'\end{bmatrix}=\begin{bmatrix}\beta_2\bax_1(1-\bax_1)\\ \beta_4[i(t)-\bax_2-\delta_4\bax_1\bax_2]\end{bmatrix}=:\bbG(t,\bbx).
\end{equation}

Consider \eqref{eqn:redsys}, \eqref{eqn:redsys2} and \eqref{eqn:redsys3} with a given initial condition~$\bfeta=(\eta_1,\eta_2)\in\R_+^2$. Following the proof of the existence and uniqueness theorem for the full system~\eqref{eqn:sys}, it can be shown that a unique solution exists for \eqref{eqn:redsys}, \eqref{eqn:redsys2} and \eqref{eqn:redsys3} that are invariant on the region $[0,\max\{1,\eta_1\}] \times [0,\max\{i_M,\eta_2\}]$. Thus if $\bx(0),\ubx(0),\bbx(0)\in[0,1]\times\R_+$, then $\bx(t),\ubx(t),\bbx(t)\in[0,1]\times\R_+$ for all $t\in \R_+$.

Note that the solutions for \eqref{eqn:redsys2} and \eqref{eqn:redsys3} are given by
\begin{equation}\label{eqn:solredsys2}
\ubx(t)=\begin{bmatrix} v(t;\beta_2(1-\delta_2\ubax_2),\beta_2)\\w(t;\beta_4 i,\beta_4) \end{bmatrix}
\end{equation}
and
\begin{equation}\label{eqn:solredsys3}
\bbx(t)=\begin{bmatrix} v(t;\beta_2,\beta_2)\\w(t;\beta_4 i,\beta_4(1+\delta_4\bax_1)) \end{bmatrix},
\end{equation}
where $w$ and $v$ are as in Lemmata~\ref{lem:w}~and~\ref{lem:v}, respectively.

Let $D=[0,1]\times\R_+$ and $M=\diag(1,-1)=M^{-1}$; assume that $\bx(0),\ubx(0),\bbx(0)\in D$ and $M\ubx(0)\le M\bx(0) \le M\bbx(0)$. We claim that
\begin{equation}\label{claim-bounds}
M\ubx(t)\le M\bx(t) \le M\bbx(t)\quad\text{for all}\quad t\in\R_+.
\end{equation}

It is clear that $\bG_\bx',\ubG_\bx',\bbG_\bx'\in C(\R_+\times\R^2,\R^{2^2})$, hence $\bG,\bbG,\ubG$ each satisfy a local Lipschitz condition on any $\Omega\subset\R_+\times\R^4$. It can easily be seen that $M\ubG(t,\bfeta)\le M\bG(t,\bfeta) \le M\bbG(t,\bfeta)$ for all $(t,\bfeta)\in\R_+\times D$. Letting $E=[0,1]\times(-\infty,0]$, we see that the Jacobian matrices $[M\bG(t,M\bfeta)]_\bfeta'$ and $[M\bbG(t,M\bfeta)]_\bfeta'$ are essentially positive (see Remark~\ref{rem:ess-pos}) on $\R_+\times E$, meaning $M\bG(t,M\bfeta)$ and $M\bbG(t,M\bfeta)$ are quasimonotone increasing (see Definition~\ref{def:quasimonotone}) on $\R_+\times E$. Hence by Corollary~\ref{cor:ComparisonCorollary} the claim is proved true.

\begin{figure}[h]
\centering{
\begin{tikzpicture}
		[cube/.style={very thick,black},
			axis/.style={->,blue,thick}]
	\draw[axis] (0,0,0) -- (6,0,0);
	\draw[axis] (0,0,0) -- (0,4.5,0);
	\draw[axis] (0,0,0) -- (0,0,4.5);

	%draw the top and bottom of the cube
	\draw[cube] (1.5,1.5,1.5) -- (1.5,3,1.5) -- (4.5,3,1.5) -- (4.5,1.5,1.5) -- cycle;
	\draw[cube] (1.5,1.5,3) -- (1.5,3,3) -- (4.5,3,3) -- (4.5,1.5,3) -- cycle;

    \node [uoworange] at (4.2,1.8,3) {$S$};

    \draw [red,dashed,thick] plot [smooth, tension=1] coordinates { (3.9,2.4,2.55) (6.1,4.2,3) (4.7,4.35,3.45) (5.1,1.5,2.7) (3.45,0.3,1.65) (2.7,1.35,1.8) (2.55,2.6,2.7)};

    \node [uowdgreen] (initial) at (3.9,2.4,2.55) {\textbullet};

    \node at (1.8,3.2) {$\bfeta =\bu(0;\bfeta)$}
    edge[pil,bend right=45] (initial.west);

    \node [uowpink] (final) at (2.55,2.6,2.7) {\textbullet};

    \node at (-1.5,2.25) {$\varphi(\bfeta) = \bu(\rho;\bfeta)$}
    edge[pil,bend right=45] (final.south west);

    \node (sol) at (3.5,0.1,1.65) {};

    \node (soll) at (4.85,-1.5) {$\bu(t;\bfeta)$};
    \node at (soll.west) {} edge[pil,shorten <=-7pt,bend left=35] (sol.south);

	\draw[cube] (1.5,1.5,1.5) -- (1.5,1.5,3);
	\draw[cube] (1.5,3,1.5) -- (1.5,3,3);
	\draw[cube] (4.5,1.5,1.5) -- (4.5,1.5,3);
	\draw[cube] (4.5,3,1.5) -- (4.5,3,3);
	
\end{tikzpicture}
\caption{Diagram showing solutions $\bu$ initially in $S\subset\R^3$ will be in $S$ at time $\rho$}
\label{fig:fixed-point}}
\end{figure}
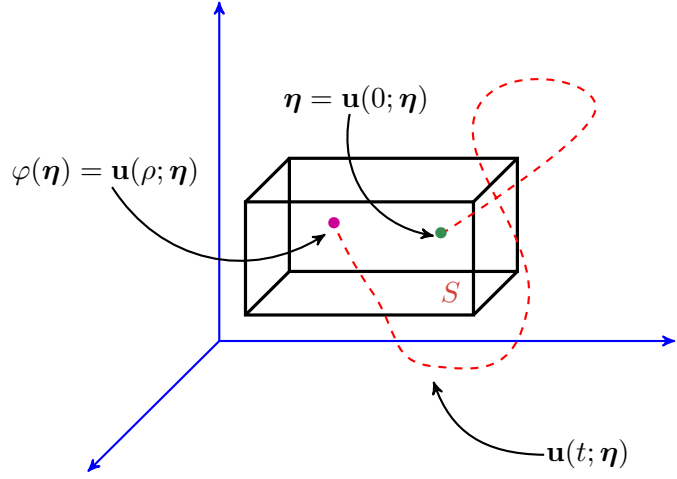

After having established the time-dependent bounds~\eqref{claim-bounds} on the solution of \eqref{eqn:redsys}, we are ready to prove the actual existence of a $\rho$-periodic solution to \eqref{eqn:redsys}. Define the rectangle
$$
R = \{\bfeta \in \R_+^2 : M\ubx(0) \le M\bfeta \le M\bbx(0)\}.
$$
Let $\bfeta = \bx(0) \in R$. Denote by $\bx(\cdot;\bfeta)$ the solution of \eqref{eqn:redsys} with initial condition~$\bfeta \in R$. Define a map~$\varphi : R \rightarrow \R_+^2$ by
$$
\varphi(\bfeta) := \bx(\rho;\bfeta)
$$
for every $\bfeta \in R$. We wish to apply Brouwer's Fixed Point Theorem to ensure the existence of a fixed point of $\varphi$, i.e. we want to show that there is some initial condition~$\bfeta_0 \in R$ for which
$$
\varphi(\bfeta_0) = \bx(\rho;\bfeta_0) = \bfeta_0.
$$
For this particular initial condition, $\bx(\rho)=\bx(0)$. This idea is illustrated in Figure~\ref{fig:fixed-point} for $\bu\in\R^3$. Then a result from \citep{Farkas1994} will enable us to conclude that $\bx(t)=\bx(t+\rho)$ for all $ t\in\R_+$.

It is clear that $\varphi$ is continuous on $R$. Using Lemmata~\ref{lem:w}~and~\ref{lem:v}, if we pick
$$
\ubx(0)=\begin{bmatrix} v(0;\beta_2(1-\delta_2\ubax_2),\beta_2) \\w(0;\beta_4 i,\beta_4) \end{bmatrix}=\begin{bmatrix} \frac{1 - \re^{-\int_0^\rho g(s') \, \d s'}}{\beta_2 \int_0^\rho \re^{-\int_s^\rho g(s') \, \d s'} \, \d s} \\  \frac{\beta_4}{\re^{\beta_4 \rho} - 1} \int_0^\rho{\re^{\beta_4 s} i(s) \, \d s}\end{bmatrix},
$$
where $g(s)=\beta_2[1-\delta_2 \ubax_2(s)]$, then $\ubx(t+\rho)=\ubx(t)>0$ for all $t\ge 0$. This is true provided $\int_0^\rho g(s) \, \d s > 0$. From Lemma~\ref{lem:w}, we see that
$$
\int_0^\rho \ubax_2(s) \, \d s = \int_0^\rho i(s) \, \d s = \rho.
$$
Recalling that $\delta_2 < 1$, we obtain
$$
\delta_2 \int_0^\rho \ubax_2(s) \, \d s < \rho, \quad \text{or} \quad \int_0^\rho g(s) \, \d s = \int_0^\rho \beta_2 [1 - \delta_2 \ubax_2(s)] \, \d s > 0.
$$

Similarly, from Lemmata~\ref{lem:w}~and~\ref{lem:v}, if we choose
$$
\bbx(0)=\begin{bmatrix}v(0;\beta_2,\beta_2)\\w(0;\beta_4 i,\beta_4 (1+\delta_4\bax_1))  \end{bmatrix}=\begin{bmatrix} 1 \\  \frac{\beta_4}{\re^{(1 + \delta_4) \beta_4 \rho} - 1} \int_0^\rho{\re^{(1 + \delta_4) \beta_4 s} i(s) \, \d s}\end{bmatrix},
$$
then $\bbx(t+\rho)=\bbx(t)>0$ for all $t\ge 0$. Furthermore, as $\beta_4<\beta_4[1+\delta_4\bax_1(t)]$  and $\beta_2[1-\delta_2\ubax_2(t)]<\beta_2$ for $t\in\R_+$ we have from Lemmata~\ref{lem:w}~and~\ref{lem:v} that $M\ubx(t)\le M\bbx(t)$ for all $t\in \R_+$. With the above choices for $\ubx(0)$ and $\bbx(0)$ we have that $R\subset \inte(\R_+^2)$ and
$$
M\ubx(0)=M\ubx(\rho)\le M\bx(\rho;\bfeta)\le M\bbx(\rho)=M\bbx(0),
$$
that is, $\varphi(\bfeta)\in R$, which implies that $\varphi(R)\subset R$. By Brouwer's Fixed Point Theorem there is some initial condition $\bfeta_0\in R\subset \inte(\R_+^2)$ for which
$$
\varphi(\bfeta_0)=\bx(\rho;\bfeta_0)=\bfeta_0.
$$
For this particular initial condition, $\bx(\rho)=\bx(0)$. Then from \citep[Lemma~2.2.1]{Farkas1994} we have
$$
\bx(t+\rho)=\bx(t)>0\quad\text{for all} \quad t\in \R_+,
$$
thus showing the existence of a strictly positive $\rho$-periodic solution to \eqref{eqn:redsys}.

Now we prove the uniqueness of the solution constructed above.

\subsubsection{Uniqueness}

Note that if $\bu(t)\ge 0$ and $\bv(t)\ge 0$ are solutions to \eqref{eqn:redsys} and $\bu(t_0)=\bv(t_0)$ at some $t_0\in\R_+$, then $\bu(t)=\bv(t)$ for all $t\in\R_+$ by uniqueness.

Now, let $\bu(t),\bv(t)>0$ be $\rho$-periodic solutions of \eqref{eqn:redsys}. It will be shown that $u_1(t)=v_1(t)$ for all $t\in\R_+$ if and only if $u_2(t)=v_2(t)$ for all $t\in\R_+$. If $u_1(t)=v_1(t)$ for all $t\in\R_+$, then from \eqref{eqn:redsys} and periodicity it can be seen that
\begin{equation}
0=\int_{0}^\rho{[u_{1}(s)-v_{1}(s)]\,\d s}+\delta_2\int_{0}^\rho{[u_{2}(s)-v_{2}(s)]\,\d s}=\delta_2\int_{0}^\rho{[u_{2}(s)-v_{2}(s)]\,\d s}.
\end{equation}
Hence by the Mean Value Theorem there exists $t_0\in(0,\rho)$ such that $u_{2}(t_0)-v_{2}(t_0)=0$ (i.e. $u_{2}(t_0)=v_{2}(t_0)$) and since $u_1(t)=v_1(t)$ for all $t\in\R_+$ it follows that $\bu(t_0)=\bv(t_0)$ which implies $\bu(t)=\bv(t)$ for all $t\in\R_+$, i.e. $u_2(t)=v_2(t)$ for all $t\in\R_+$. It can be shown similarly that if $u_2(t)=v_2(t)$ for all $t\in\R_+$, then $u_1(t)=v_1(t)$ for all $t\in\R_+$.

Let $\bu(t)\ge 0$ and $\bv(t)\ge 0$ be distinct $\rho$-periodic solutions of \eqref{eqn:redsys}. It will now be shown that $u_1(t)\le(\ge) v_1(t)$ for all $t\in\R_+$ and there exists $t_1\in[0,\rho)$ such that $u_1(t_1)<(>) v_1(t_1)$. Since $\bu(t)$ and $\bv(t)$ are $\rho$-periodic it is sufficient to show $u_1(t)\le(\ge) v_1(t)$ for all $t\in[0,\rho)$.

Assume that there exists $t_2\in[0,\rho)$ such that $u_1(t_2)=v_1(t_2)$, then as $\bu(t)$ and $\bv(t)$ are distinct we must have $u_2(t_2)\ne v_2(t_2)$. Assume $u_2(t_2)>(<)u_2(t_2)$, then letting $M=\diag(1,-1)$ we have $M\bv(t_2)\ge(\le)M\bu(t_2)$, which shows by Theorem~\ref{thm:ComparisonTheorem} that $M\bv(t)\ge(\le)M\bu(t)$ for all $t\in[t_2,\infty)$ (i.e. $u_1(t)\le(\ge) v_1(t)$ and $u_2(t)\ge(\le) v_2(t)$ for all $t\in[t_2,\infty)$), and by periodicity of $\bu(t)$ and $\bv(t)$ this must hold for all $t\in\R_+$. Furthermore since $u_1(t)=v_1(t)$ for $t\in\R_+$ implies $\bu(t)$ and $\bv(t)$ are not distinct there must exist $t_1\in[0,\rho)$ such that $u_1(t_1)<(>)v_1(t_1)$. Now if $u_1(t)\ne v_1(t)$ for any $t\in[0,\rho)$, then as a consequence of the continuity of $\bu(t)$ and $\bv(t)$ and the Intermediate Value Theorem $u_1(t)<(>)v_1(t)$ for all $t\in\R_+$.

Assume that $\bu(t)>0$ and $\bv(t)>0$ are distinct $\rho$-periodic solutions to \eqref{eqn:redsys}, then as shown previously this implies without loss of generality that $u_1(t)\le v_1(t)$ for all $t\in\R_+$ and there exists $t_1\in[0,\rho)$ such that $u_1(t_1)< v_1(t_1)$. From Lemma~\ref{lem:x1-quadratic}, $u_{1}(t)$ is given by the implicit form
\begin{equation}\label{eqn:u1equation}
u_{1}(t)=\frac{\delta_4-1+\sqrt{(1+\delta_4)^2+4\delta_4[f(t)-\delta_2i(t)]}}{2\delta_4},\quad f(t)=\Der{}{t}V(\bu(t))
\end{equation}
and $v_{1}(t)$ is given by the implicit form
\begin{equation}\label{eqn:v1equation}
v_{1}(t)=\frac{\delta_4-1+\sqrt{(1+\delta_4)^2+4\delta_4[g(t)-\delta_2i(t)]}}{2\delta_4},\quad g(t)=\Der{}{t}V(\bv(t)),
\end{equation}
noting that $f$ and $g$ must be continuous by the continuity of $\bu>0$ and $\bv>0$. Since $u_1(t)\le v_1(t)$ for all $t\in\R_+$ and there exists $t_1\in[0,\rho)$ such that $u_1(t_1)< v_1(t_1)$, then \eqref{eqn:u1equation} and \eqref{eqn:v1equation} implies $f(t)\le g(t)$ for all $t\in\R_+$ and $f(t_1)< g(t_1)$. By continuity this implies
\begin{equation}\label{eqn:intoffg}
\int_{0}^\rho{[g(s)-f(s)]\,\d s}>0.
\end{equation}
However by the periodicity of $\bu(t)$ and $\bv(t)$
\begin{equation}\label{eqn:intoffg-eq}
\int_{0}^\rho{[g(s)-f(s)]\,\d s}=0,
\end{equation}
which is a contradiction, hence $u_1(t)$ and $v_1(t)$ cannot be distinct (i.e. $u_1(t)=v_1(t)$) which implies that $u_2(t)$ and $v_2(t)$ are not distinct (i.e. $u_2(t)=v_2(t)$).

We have therefore proved the following theorem:
\begin{theorem}\label{thm:red-sys-exist-unique}
Suppose that $0 < \delta_2 < 1$. Then \eqref{eqn:redsys} has a unique solution~$\bx$ that satisfies
$$
\bx(t + \rho) = \bx(t) > 0\quad \text{for all} \quad t\in\R_+.
$$
\end{theorem}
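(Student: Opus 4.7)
The plan is to prove the theorem in two stages: first, existence of at least one strictly positive $\rho$-periodic solution via Brouwer's Fixed Point Theorem applied to a Poincaré map, and then uniqueness by a contradiction argument that exploits the implicit representation of $x_1$ in terms of the $x_2$-dynamics together with $\rho$-periodicity.

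For existence, I would define the Poincaré map $\varphi(\bfeta):=\bx(\rho;\bfeta)$, where $\bx(\cdot;\bfeta)$ is the unique forward solution of \eqref{eqn:redsys} with initial datum $\bfeta\in\R_+^2$; the same invariance/one-sided Lipschitz argument used for the full system guarantees global existence on any compact box. A fixed point of $\varphi$ gives $\bx(\rho)=\bx(0)$, and the translation-invariance result from \citep{Farkas1994} promotes this to genuine $\rho$-periodicity. To apply Brouwer I need a compact convex set $R\subset\inte(\R_+^2)$ with $\varphi(R)\subset R$. Here I would use the partial order induced by $M=\diag(1,-1)$ to sandwich $\bx$ between the two auxiliary systems \eqref{eqn:redsys2} and \eqref{eqn:redsys3}: dropping the coupling term $-\delta_4 x_1 x_2$ in the drug equation yields an $M$-upper solution $\bbx$, while dropping $-\delta_2 x_2$ in the logistic equation yields an $M$-lower solution $\ubx$. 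Each of these decouples into one scalar logistic equation and one scalar linear equation whose unique $\rho$-periodic solutions are given in closed form by Lemmata~\ref{lem:v} and \ref{lem:w}. The hypothesis $\delta_2<1$ enters precisely to guarantee $\int_0^\rho \beta_2[1-\delta_2\ubax_2(s)]\,\d s>0$, which makes the periodic solution of the lower system strictly positive. Setting
\[
R=\{\bfeta\in\R_+^2:M\ubx(0)\le M\bfeta\le M\bbx(0)\},
\]
quasimonotonicity of $M\bG(t,M\,\cdot\,)$ together with Corollary~\ref{cor:ComparisonCorollary} forces $M\ubx(t)\le M\bx(t;\bfeta)\le M\bbx(t)$ for all $t\ge 0$ and $\bfeta\in R$; evaluating at $t=\rho$ and using periodicity of $\ubx$ and $\bbx$ yields $\varphi(R)\subset R$. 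Continuity of $\varphi$ comes from continuous dependence on initial data, so Brouwer supplies the fixed point.

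For uniqueness, suppose $\bu,\bv>0$ are two $\rho$-periodic solutions and argue by contradiction. I would first establish the dichotomy that $u_1\le v_1$ globally or $u_1\ge v_1$ globally. Indeed, if $u_1(t_2)=v_1(t_2)$ for some $t_2$ then $\bu\neq\bv$ forces $u_2(t_2)\ne v_2(t_2)$, and comparing in the $M$-order via Theorem~\ref{thm:ComparisonTheorem} propagates the resulting inequality for all $t\ge t_2$; periodicity extends it to all of $\R_+$, and the Intermediate Value Theorem precludes crossings. Assuming then, without loss, $u_1\le v_1$ with strict inequality somewhere, Lemma~\ref{lem:x1-quadratic} writes $u_1(t)$ as a strictly increasing function of $f(t):=\Der{}{t}V(\bu(t))$ (and similarly $v_1$ in terms of $g:=\Der{}{t}V(\bv(\cdot))$), so $f\le g$ pointwise with strict inequality on an interval, hence $\int_0^\rho[g(s)-f(s)]\,\d s>0$. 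But $V(\bu(\cdot))$ and $V(\bv(\cdot))$ are $\rho$-periodic, so the same integral must vanish --- a contradiction. Hence $u_1\equiv v_1$, and then the $x_1$-equation of \eqref{eqn:redsys} forces $u_2\equiv v_2$.

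The step I expect to be the main obstacle is the uniqueness argument, specifically the global dichotomy $u_1\le v_1$ versus $u_1\ge v_1$. This requires replacing the natural componentwise order by the reversed order induced by $M$ (since competition and infusion couple the two components with opposite monotonicity), verifying that the comparison theorem actually applies in this reversed order on an appropriate invariant region, and carefully ruling out that the two orbits merely touch at isolated times without coinciding. The existence step, although technically involved, is conceptually routine once the comparison framework and the explicit Lemmata~\ref{lem:v}, \ref{lem:w} for the auxiliary linear/logistic systems are in hand.
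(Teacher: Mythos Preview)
Your proposal follows essentially the same route as the paper's proof: existence via Brouwer applied to the Poincar\'e map on the $M$-order rectangle built from the explicit periodic solutions of the two auxiliary systems \eqref{eqn:redsys2}--\eqref{eqn:redsys3} (using Lemmata~\ref{lem:w} and \ref{lem:v}, with $\delta_2<1$ ensuring positivity of the lower bound), and uniqueness via the global dichotomy for $u_1$ versus $v_1$ combined with the implicit representation from Lemma~\ref{lem:x1-quadratic} and periodicity of $V$. One minor slip to fix: you have the roles of $\ubx$ and $\bbx$ swapped---dropping $-\delta_4 x_1 x_2$ from the drug equation gives the $M$-\emph{lower} solution $\ubx$ (this is the system where the condition $\int_0^\rho\beta_2[1-\delta_2\ubax_2(s)]\,\d s>0$ you quote actually arises), while dropping $-\delta_2 x_2$ from the logistic equation gives the $M$-\emph{upper} solution $\bbx$.
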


\subsubsection{Stability}\label{sec:red-sys-stab}

Here we prove the stability of the strictly positive $\rho$-periodic solution of \eqref{eqn:redsys} by utilising \citep[Theorem 4.2.1]{Farkas1994}. We summarise the required results of \cite{Farkas1994} below.

Let
\begin{equation}\label{eqn:systemperiodic}
\bu'=\bF(t,\bu),
\end{equation}
where $\bF\in C(\R\times X,\R^n)$, $\bF_\bu'\in C(\R\times X,\R^{n^2})$, $X$ is an open connected subset of $\R^n$ and $\bF(t,\cdot)=\bF(t+\rho,\cdot)$. Let $\bu:\R\to X$ be a non-constant $\rho$-periodic solution to \eqref{eqn:systemperiodic}. Then making the coordinate transformation $\bz=\bu-\mathbf{p}(t)$ we have
$$
\bz'=\bF(t,\bz+\mathbf{p}(t))-\bF(t,\mathbf{p}(t))=\bF_\bu'(t,\mathbf{p}(t))\bz+o(|\bz|).
$$
Hence the linearisation of \eqref{eqn:systemperiodic} at $\mathbf{p}(t)$ is given by
\begin{equation}\label{eqn:linearisationperiodic}
\by'=\bF_\bu'(t,\mathbf{p}(t))\by.
\end{equation}
If $\Phi(t)$ represents the fundamental matrix solution of \eqref{eqn:linearisationperiodic}, then the \textit{characteristic multipliers} of \eqref{eqn:linearisationperiodic} are given by the eigenvalues of $\Phi(\rho)$.

From \citep[Theorem 4.2.1]{Farkas1994}, if all the characteristic multipliers of system \eqref{eqn:linearisationperiodic} are in modulus less than 1 (i.e. the spectral radius of $\Phi(\rho)$ is less than 1), then $\mathbf{p}$ is a uniformly asymptotically stable solution of \eqref{eqn:systemperiodic}; if \eqref{eqn:linearisationperiodic} has at least one characteristic multiplier with modulus greater than 1 (i.e. the spectral radius of $\Phi(\rho)$ is greater than 1), then $\mathbf{p}$ is unstable.

Consider \eqref{eqn:redsys}, which when linearised about a strictly positive $\rho$-periodic solution $\bx$ produces the system
\begin{equation}\label{eqn:red-sys-lin}
\frac{\d \bz}{\d t}= \begin{bmatrix}
\beta_2 [1 - 2 x_1(t) - \delta_2 x_2(t)] & -\beta_2 \delta_2 x_1(t)\\
-\beta_4 \delta_4 x_2(t) & -\beta_4 [1 + \delta_4 x_1(t)]
\end{bmatrix}\bz =: A(t) \bz.
\end{equation}

The fundamental matrix~$\Phi(t)$ of this system satisfies
$$
\Der{\Phi}{t} = A(t) \Phi, \quad \Phi(0) = I.
$$

Let $P$ be a $2 \times 2$ invertible, differentiable matrix function such that $P(t + \rho) = P(t)$ for all $t\in\R_+$. If we let $\by(t) = P(t) \bz(t)$, then $\by$ satisfies
\begin{equation}\label{eqn:red-sys-lin-trans}
\Der{\by}{t} = [P'(t) + P(t) A(t)] P^{-1}(t)\by=:B(t) \by.
\end{equation}
The fundamental matrix~$\Psi(t)$ of this system satisfies
$$
\Der{\Psi}{t} = B(t) \Psi, \quad \Psi(0) = I,
$$
where $I$ is the identity matrix. It can easily be shown that $P(t)\Phi(t)P^{-1}(0)$ is a fundamental matrix solution of \eqref{eqn:red-sys-lin-trans}, that is, $\Psi(t)=P(t)\Phi(t)P^{-1}(0)$. Since $P(t+\rho)=P(t)$, it is clear that $\Psi(\rho)=P(\rho)\Phi(\rho)P^{-1}(\rho)$ (i.e. $\Psi(\rho)$ is similar to $\Phi(\rho)$), hence $\Psi(\rho)$ and $\Phi(\rho)$ have the same eigenvalues. This demonstrates the requirement for $P$ to be $\rho$-periodic.

We let
$$
P(t) =
\begin{bmatrix}
p_{11}(t) & 0\\
0 & p_{22}(t)
\end{bmatrix}
$$
for some $\rho$-periodic functions $p_{11}$ and $p_{22}$ to be determined, which implies
$$
B =
\begin{bmatrix}
\frac{p_{11}'}{p_{11}} + a_{11} & \frac{p_{11} a_{12}}{p_{22}}\\
\frac{p_{22} a_{21}}{p_{11}} & \frac{p_{22}'}{p_{22}} + a_{22}
\end{bmatrix}.
$$
We want $p_{11}(t) p_{22}(t) < 0$ so that $b_{12}(t) > 0$ and $b_{21}(t) > 0$ (i.e. $B$ is essentially positive). Since $B(t)$ is essentially positive for all $t\in\R_+$, the same argument as that used in \citep[p. 190]{Farkas1994} shows that each entry of $\Psi(t)$ is positive for $t \in [0,\rho]$. In particular, each entry of $\Psi(\rho)$ is positive. Let $\lambda_1, \lambda_2$ denote the eigenvalues of $\Psi(\rho)$, that is, the characteristic multipliers of~\eqref{eqn:red-sys-lin-trans}.

By Perron's Theorem, $\Psi(\rho)$ has a unique largest positive eigenvalue~$\lambda_2$, say, with a corresponding eigenvector~$\bv = \begin{bmatrix} v_1 & v_2 \end{bmatrix}^{\T}$ having strictly positive components such that $| \lambda_1 | < \lambda_2$. Hence for $\bx$ to be stable, we need to show that $\lambda_2 < 1$.

Let $\by(t) = \Psi(t) \bv$, where $\bv$ is such that $\Psi(\rho)\bv = \lambda_2 \bv$. Then $\by$ satisfies $\d \by/\d t = B(t) \by$. If $\by(t) = \begin{bmatrix} y_1(t) & y_2(t) \end{bmatrix}^{\T}$, it follows that $\by(t)> 0$ for all $t \in[0,\rho]$. Suppose that for the moment that we can find a $\rho$-periodic function~$\bfxi(t)=\begin{bmatrix} \xi_1(t) & \xi_2(t) \end{bmatrix}^{\T}$ such that
$$
\inner{\bfxi(0)}{\by(0)} > 0\quad\text{and}\quad\int_0^\rho \frac{\d }{\d t} \inner{\bfxi(t)}{\by(t)} \, \d t < 0.
$$
Then $\inner{\bfxi(\rho)}{\by(\rho)} < \inner{\bfxi(0)}{\by(0)}$; however $\by(\rho) = \Psi(\rho) \bv = \lambda_2 \bv$, so that $\inner{\bfxi(0)}{\lambda_2\bv} < \inner{\bfxi(0)}{\bv}$ and this would show that $\lambda_2 < 1$. We now proceed to find the desired function~$\bfxi(t)$. We have
\begin{align*}
\inner{\bfxi}{\by}' = &\, \inner{\bfxi'}{\by} + \inner{\bfxi}{\by'}=\inner{\bfxi'}{\by} + \inner{\bfxi}{B\by}\\
= &\, \inner{\bfxi'}{\by} + \inner{B^{\T}\bfxi}{\by}=\inner{\bfxi'+B^{\T}\bfxi}{\by}\\
= &\, \left[\xi_1' + \left(\frac{p_{11}'}{p_{11}} + a_{11}\right) \xi_1 + \frac{p_{22} a_{21}}{p_{11}} \xi_2\right] y_1\\
&\,+ \left[\xi_2' + \frac{p_{11} a_{12}}{p_{22}} \xi_1 + \left(\frac{p_{22}'}{p_{22}} + a_{22}\right) \xi_2\right] y_2.
\end{align*}

Take
$$
p_{11}(t) = \frac{1}{x_1(t)}, \quad p_{22}(t) = -\beta_2 \delta_2, \quad \xi_1(t) = \beta_4 [1 + \delta_4 x_1(t)], \quad \xi_2(t) = 1
$$
for example, which are all $\rho$-periodic and $p_{11}(t) p_{22}(t) < 0$ for all $t \in\R_+$. We also note that $\inner{\bfxi(0)}{\by(0)}>0$. This makes the coefficient of $y_2$ equal to zero, so that
\begin{align*}
\inner{\bfxi}{\by}' & = \beta_2 \beta_4 x_1 (\delta_4 - 1 - 2 \delta_4 x_1)y_1
\end{align*}
after some algebra.

Suppose that $\delta_2<1$. Then from Lemma~\ref{lem:x1-quadratic}
$$
x_1(t)=\frac{(\delta_4 - 1) + \sqrt{(\delta_4 - 1)^2 + 4 \delta_4 [1-\delta_2 i(t) + h(t)]}}{2 \delta_4},\quad\text{where}\quad h(t)=\Der{}{t}V(\bx(t)),
$$
which implies that
$$
\delta_4 - 1 - 2 \delta_4 x_1(t)=-\sqrt{(1-\delta_4)^2+4\delta_4[1-\delta_2 i(t)+h(t)]}\le 0.
$$
Moreover, from the periodicity of $\bx$ it can be seen that $(1/\rho)\int_0^\rho{[\delta_2 i(s)-h(s)]\,\d s}=\delta_2$. Thus from the Mean Value Theorem there exists $t^*\in(0,\rho)$ such that
$$
(1-\delta_4)^2+4\delta_4[1-\delta_2 i(t^*)+h(t^*)]=(1-\delta_4)^2+4\delta_4(1-\delta_2).
$$
Since $\delta_2<1$, it follows that $(1-\delta_4)^2+4\delta_4(1-\delta_2)>0$ and then by continuity, there exists $\epsilon > 0$ such that
$$
(1-\delta_4)^2+4\delta_4[1-\delta_2 i(t)+h(t)]>0 \quad \text{for all}\quad t \in (t^* - \epsilon,t^* + \epsilon).
$$
This yields, by the strict positivity of $\bx$ and $\by$,
$$
\int_{t^* - \epsilon}^{t^* + \epsilon} x_1(t) [\delta_4 - 1 - 2 \delta_4 x_1(t)] y_1(t) \, \d t < 0.
$$
Thus,
\begin{align*}
\int_0^\rho \frac{\d }{\d t} \inner{\bfxi(t)}{\by(t)} \, \d t & = \int_0^\rho \beta_2 \beta_4 x_1(t) [\delta_4 - 1 - 2 \delta_4 x_1(t)]y_1(t) \, \d t\\
& = \int_0^{t^* - \epsilon}\beta_2 \beta_4 x_1(t) [\delta_4 - 1 - 2 \delta_4 x_1(t)] y_1(t) \, \d t\\
& \quad {} + \int_{t^* - \epsilon}^{t^* + \epsilon} \beta_2 \beta_4 x_1(t) [\delta_4 - 1 - 2 \delta_4 x_1(t)]y_1(t) \, \d t\\
& \quad {} + \int_{t^* + \epsilon}^{\rho} \beta_2 \beta_4 x_1(t) [\delta_4 - 1 - 2 \delta_4 x_1(t)]y_1(t) \, \d t\\
& < 0.
\end{align*}
The above argument then shows that if $\delta_2<1$ and a strictly positive $\rho$-periodic solution exists to \eqref{eqn:redsys}, then $|\lambda_1| < \lambda_2 < 1$ (i.e. $\bx$ is asymptotically stable). Therefore the unique $\rho$-periodic solution of \eqref{eqn:redsys} found in Section~\ref{sec:red-sys-exist} is asymptotically stable.

\begin{remark}
As a result of Lemma~\ref{lem:x1-quadratic} any strictly positive $\rho$-periodic solution $\bx$ to \eqref{eqn:redsys} must satisfy the implicit form
\begin{equation}\label{eqn:x1-sol-pm}
x_1(t)=\frac{(\delta_4 - 1) \pm \sqrt{(\delta_4 + 1)^2 - 4 \delta_4 [\delta_2 i(t) - h(t)]}}{2 \delta_4},\quad\text{where}\quad h(t)=\Der{}{t}V(\bx(t)).
\end{equation}
By a similar argument to the above it can be concluded that if $\delta_2<1$ or if $1\le \delta_2 < (1+\delta_4)^2/4\delta_4$ and $\delta_4>1$ and a strictly positive $\rho$-periodic solution exists for \eqref{eqn:redsys} that satisfies the ``plus'' case of \eqref{eqn:x1-sol-pm}, then $|\lambda_1| < \lambda_2 < 1$ (i.e. $\bx$ is asymptotically stable). Similarly, it can be concluded that if a strictly positive $\rho$-periodic solution exists for \eqref{eqn:redsys} that satisfies the ``minus'' case of \eqref{eqn:x1-sol-pm}, then $\lambda_2 > 1$ (i.e. $\bx$ is unstable).
\end{remark}

\subsection{Existence of co-existence periodic solution}\label{sec:coexit-exist}

The existence of a strictly positive $\rho$-periodic solution to the full system~\eqref{eqn:sys} will be shown in this section.

Note that \eqref{eqn:sys} is invariant on $\R_+^4$. Consider the systems
\begin{equation}\label{eqn:sys2}
\ubu'=\begin{bmatrix}\ubau_1'\\\ubau_2'\\\ubau_3'\\\ubau_4'
\end{bmatrix}=\begin{bmatrix}
\ubau_1(1-\ubau_1-\alpha_1 \ubau_2-\delta_1 \ubau_3)\\
\beta_2 \ubau_2(1-\ubau_2-\delta_2 \ubau_4)\\
\beta_3(\ubau_2-\ubau_3)\\
\beta_4[i(t)-\ubau_4-\delta_4\ubau_4\ubau_2]
\end{bmatrix}=:\ubF(t,\ubu),
\end{equation}
and
\begin{equation}\label{eqn:sys3}
\bbu'=\begin{bmatrix}\bau_1'\\\bau_2'\\\bau_3'\\\bau_4'
\end{bmatrix}=\begin{bmatrix}
\bau_1(1-\bau_1-\alpha_1 \bau_2)\\
\beta_2 \bau_2(1-\bau_2-\alpha_2 \bau_1-\delta_2 \bau_4)\\
\beta_3(\bau_2-\bau_3)\\
\beta_4[i(t)-\bau_4]
\end{bmatrix}=:\bbF(t,\bbu).
\end{equation}
Similarly to \eqref{eqn:sys} the solutions to \eqref{eqn:sys2} and \eqref{eqn:sys3} can be shown to be unique and invariant on $\R_+^4$.

We now wish to establish the existence of strictly positive $\rho$-periodic solutions to \eqref{eqn:sys2} and \eqref{eqn:sys3}. First consider \eqref{eqn:sys2} and let $\ubu(t)$ denote a solution, where $\ubu(0)\in \R_+^4$. From the analysis of the normal-tissue free system, if $\delta_2<1$ there is a unique strictly positive $\rho$-periodic solution to $\ubau_2'=\ubaf_2(t,\ubu)$ and $\ubau_4'=\ubaf_4(t,\ubu)$. Then from Lemma~\ref{lem:w} there exists a unique $\rho$-periodic solution to $\ubau_3'=\ubaf_3(t,\ubu)$, where $\ubau_3(t)=w(t;\beta_3 \ubau_2,\beta_3)$ with initial condition given by \eqref{eqn:w-ic}. Using these $\rho$-periodic solutions and Lemma~\ref{lem:v} if $\int_0^\rho{[1-\alpha_1 \ubau_2(s)-\delta_1 \ubau_3(s)]\,\d s}>0$, there exists a unique strictly positive $\rho$-periodic solution to $\ubau_1'=\ubaf_1(t,\ubu)$, where $\ubau_1(t)=v(t; 1-\alpha_1\ubau_2-\delta_1\ubau_3,1)$ with initial condition given by \eqref{eqn:v-ic}. Consider
\begin{equation}\label{eqn:alpha1delta1cond}
\int_0^\rho{[1-\alpha_1 \ubau_2(s)-\delta_1 \ubau_3(s)]\,\d s}=\int_0^\rho{[1-(\alpha_1+\delta_1) \ubau_2(s)]\,\d s}\ge 1-(\alpha_1+\delta_1)\hat{u}_{2+}
\end{equation}
from Lemmata~\ref{lem:w}~and~\ref{lem:x1-quadratic}. Hence if $\delta_2<1$ and $1-(\alpha_1+\delta_1)\hat{u}_{2+}>0$, then there exists a unique strictly positive $\rho$-periodic solution to \eqref{eqn:sys2}.

Now consider system \eqref{eqn:sys3} and let $\bbu(t)$ denote a solution, where $\bbu(0)\in \R_+^4$. From the analysis of the normal-tissue free problem it is known that there exists a unique strictly positive $\rho$-periodic solution to $\bau_4'=\baf_4(t,\bbu)$. Then from \citep[Prop. 36.1 and 36.3]{Hess1991} there exists a unique strictly positive $\rho$-periodic solution to $\bau_1'=\baf_1(t,\bbu)$ and $\bau_2'=\baf_2(t,\bbu)$ if
\begin{equation}\label{eqn:condCE1}
1<(>)\alpha_1\frac{1}{\rho}\int_0^\rho{[1-\delta_2 \bau_4(s)]\,\d s}=\alpha_1(1-\delta_2)
\end{equation}
and
\begin{equation}\label{eqn:condCE2}
\frac{1}{\rho}\int_0^\rho{[1-\delta_2 \bau_4(s)]\,\d s}=1-\delta_2<(>)\alpha_2.
\end{equation}
Noting $\bau_2(t)>0$ for all $t\in\R_+$, we have from Lemma~\ref{lem:w} that $\bau_3'=\baf_3(t,\bbu)$ has a unique strictly positive $\rho$-periodic solution given by $\bau_3(t)=w(t;\beta_3 \bau_2,\beta_3)$ with initial condition given by \eqref{eqn:w-ic}.

Assume that the parameter conditions
\begin{equation}\label{eqn:sys-coexist-par-con}
\delta_2<1,\quad 1-(\alpha_1+\delta_1)\hat{u}_{2+}>0, \quad1<(>)\alpha_1(1-\delta_2)\quad \text{and} \quad1-\delta_2<(>)\alpha_2
\end{equation}
are satisfied, then there exists unique strictly positive $\rho$-periodic solutions to \eqref{eqn:sys2} and \eqref{eqn:sys3} denoted by $\ubu(t)$ and $\bbu(t)$, respectively. Letting $M=\diag(1,-1,-1,1)=M^{-1}$, we wish to show that $M\ubu(t)\le M\bbu(t)$ for all $t\in\R_+$. It was shown in the analysis of the normal-tissue free $\rho$-periodic solution that $\ubau_4(t)\le \bau_4(t)$ for all $t\in\R_+$. Consider the $\rho$-periodic solutions $\ubau_2(t)$ and $\bau_2(t)$ that satisfy
$$
\ubau_2'=\beta_2\ubau_2[1-\ubau_4(t)-\ubau_2]\quad \text{and}\quad \bau_2'=\beta_2\bau_2[1-\alpha_2\bau_1(t)-\delta_2\bau_4(t)-\bau_2].
$$
Note that $1-\delta_2 \bau_4(t)-\alpha_2 \bau_1(t)<1-\delta_2 \ubau_4(t)$ for all $t\in\R_+$ and by Lemma~\ref{lem:v} it must hold that $\int_0^\rho{[1-\delta_2 \bau_4(s)-\alpha_2 \bau_1(s)]\,\d s}>0$ and moreover, $\ubau_2(t)\ge \bau_2(t)$ for all $t\in\R_+$. Considering the evolution of $\ubau_3(t)-\bau_3(t)$ it then follows directly from Lemma~\ref{lem:w} that $\ubau_3(t)\ge \bau_3(t)$ for all $t\in\R_+$.

Consider $\ubau_1(t)$ and $\bau_1(t)$ which satisfy
$$
\ubau_1'=\ubau_1[1-\alpha_1\ubau_2(t)-\delta_1\ubau_3(t)-\ubau_1]\quad \text{and}\quad \bau_1'=\bau_1[1-\alpha_1\bau_2(t)-\bau_1].
$$
Note that $\int_0^\rho{[1-\alpha_1 \ubau_2(s)-\delta_1 \ubau_3(s)]\,\d s}>0$ and that $1-\alpha_1\ubau_2(t)-\delta_1\ubau_3(t)<1-\alpha_1\bau_2(t)$ for all $t\in\R_+$. Then from Lemma~\ref{lem:v}, $\ubau_1(t)\le \bau_1(t)$ for all $t\in\R_+$. Hence it has been shown that $M\ubu(t)\le M\bbu(t)$ for all $t\in\R_+$.

Assume that $\bu(0),\ubu(0),\bbu(0)\in \R_+$ and $M\ubu(0)\le M\bu(0) \le M\bbu(0)$. We claim that
\begin{equation}\label{eqn:claim-bounds-uvw}
M\ubu(t)\le M\bu(t) \le M\bbu(t)\quad\text{for all}\quad t\in\R_+.
\end{equation}

It is clear that $\bF_\bu',\ubF_\bu',\bbF_\bu'\in C(\R_+\times\R^4,\R^{4^2})$, hence $\bF,\bbF,\ubF$ each satisfy a local Lipschitz condition on any $D\subset\R_+\times\R^4$. It can easily be seen that $M\ubF(t,\bfeta)\le M\bF(t,\bfeta) \le M\bbF(t,\bfeta)$ for all $(t,\bfeta)\in\R_+^{1+4}$. Letting $E=\R_+\times(-\infty,0]^2\times\R_+$, we can see that the Jacobian matrices $[M\bF(t,M\bfeta)]_\bfeta'$ and $[M\bbF(t,M\bfeta)]_\bfeta'$ are essentially positive on $\R_+\times E$, meaning $M\bF(t,M\bfeta)$ and $M\bbF(t,M\bfeta)$ are quasimonotone increasing on $\R_+\times E$. Hence by Corollary~\ref{cor:ComparisonCorollary} the claim is proved true.

\begin{theorem}\label{thm:sys-coexist}
If \eqref{eqn:sys-coexist-par-con} is satisfied, then there exists a strictly positive $\rho$-periodic solution to \eqref{eqn:sys}.
\end{theorem}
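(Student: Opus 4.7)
The plan is to mirror the Brouwer fixed-point strategy used in Section~\ref{sec:red-sys-exist} for the reduced system, now exploiting the sub- and super-periodic solutions $\ubu$ and $\bbu$ that were just constructed for \eqref{eqn:sys2} and \eqref{eqn:sys3}. With $M=\diag(1,-1,-1,1)$, define the compact rectangle
$$
R=\{\bfeta\in\R_+^4 : M\ubu(0)\le M\bfeta\le M\bbu(0)\}.
$$
Note that $R\subset\inte(\R_+^4)$ since both $\ubu(0)$ and $\bbu(0)$ are strictly positive. For each $\bfeta\in R$, let $\bu(\cdot;\bfeta)$ denote the unique global solution of \eqref{eqn:sys} guaranteed by the existence--uniqueness theorem of Section~\ref{sec:ModelForm}, and define the Poincar\'e-type map
$$
\varphi(\bfeta):=\bu(\rho;\bfeta).
$$

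The first step is to verify $\varphi(R)\subset R$. For any $\bfeta\in R$ we have $M\ubu(0)\le M\bfeta\le M\bbu(0)$, so the comparison result \eqref{eqn:claim-bounds-uvw}, which was established immediately before the theorem via Corollary~\ref{cor:ComparisonCorollary} and the quasimonotonicity of $M\bF(t,M\,\cdot)$ and $M\bbF(t,M\,\cdot)$, yields
$$
M\ubu(t)\le M\bu(t;\bfeta)\le M\bbu(t)\qquad\text{for all }t\in\R_+.
$$
Evaluating at $t=\rho$ and using the $\rho$-periodicity of $\ubu$ and $\bbu$, we obtain
$$
M\ubu(0)=M\ubu(\rho)\le M\varphi(\bfeta)\le M\bbu(\rho)=M\bbu(0),
$$
so $\varphi(\bfeta)\in R$. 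Continuity of $\varphi$ on $R$ follows from standard continuous dependence on initial data for the locally Lipschitz right-hand side $\bF$. Since $R$ is homeomorphic to a closed ball in $\R^4$, Brouwer's Fixed Point Theorem supplies some $\bfeta_0\in R$ with $\varphi(\bfeta_0)=\bu(\rho;\bfeta_0)=\bfeta_0$. Then \citep[Lemma~2.2.1]{Farkas1994}, applied exactly as in Section~\ref{sec:red-sys-exist}, upgrades $\bu(\rho;\bfeta_0)=\bfeta_0$ to $\bu(t+\rho;\bfeta_0)=\bu(t;\bfeta_0)$ for all $t\in\R_+$.

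It remains to argue strict positivity of this periodic orbit. Writing $\bu=\bu(\cdot;\bfeta_0)$ and applying \eqref{eqn:claim-bounds-uvw} componentwise, the sign pattern of $M$ gives
$$
u_1(t)\ge\ubau_1(t)>0,\quad u_2(t)\ge\bau_2(t)>0,\quad u_3(t)\ge\bau_3(t)>0,\quad u_4(t)\ge\ubau_4(t)>0
$$
for all $t\in\R_+$, where strict positivity of each sub/super-component was already proved in the construction of $\ubu$ and $\bbu$ under the hypotheses \eqref{eqn:sys-coexist-par-con}. Hence $\bu(t)>0$ on $\R_+$, completing the existence of a strictly positive $\rho$-periodic solution.

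The main obstacle was not in the fixed-point step itself, which is routine once one has an invariant rectangle; it was in producing the ordered pair $(\ubu,\bbu)$ of strictly positive $\rho$-periodic sub- and super-solutions for the two auxiliary systems and verifying the ordering $M\ubu\le M\bbu$. That required the parameter conditions \eqref{eqn:sys-coexist-par-con}, in particular $\delta_2<1$ and $1-(\alpha_1+\delta_1)\hat{u}_{2+}>0$ to get positive $\rho$-periodic solutions to the components of \eqref{eqn:sys2} via Lemmata~\ref{lem:w} and \ref{lem:v}, and the two competition-type inequalities involving $\alpha_1(1-\delta_2)$ and $\alpha_2$ to invoke \citep[Propositions~36.1 and 36.3]{Hess1991} for \eqref{eqn:sys3}. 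All of that work is already in place above the theorem statement, so the proof itself reduces to packaging those inputs through the Brouwer argument just outlined.
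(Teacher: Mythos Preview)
Your proof is correct and follows essentially the same approach as the paper: construct the ordered rectangle $R$ from the periodic sub- and super-solutions $\ubu,\bbu$, show the Poincar\'e map $\varphi(\bfeta)=\bu(\rho;\bfeta)$ leaves $R$ invariant via the comparison estimate \eqref{eqn:claim-bounds-uvw} and the $\rho$-periodicity of $\ubu,\bbu$, apply Brouwer to obtain a fixed point, and then invoke \citep[Lemma~2.2.1]{Farkas1994} together with the componentwise bounds to conclude $\rho$-periodicity and strict positivity. Your componentwise unpacking of the positivity step is slightly more explicit than the paper's, but the argument is otherwise identical.
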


\begin{proof}
Under the given parameter restrictions there exist strictly positive $\rho$-periodic solutions to \eqref{eqn:sys2} and \eqref{eqn:sys3}. Let $\ubu(t)$ and $\bbu(t)$ denote the strictly positive $\rho$-periodic solutions for systems \eqref{eqn:sys2} and \eqref{eqn:sys3}, respectively. Note that it was shown $M\ubu(t)\le M\bbu(t)$ for all $t\in\R_+$. Construct the box
$$
R=\{\bfeta\in\R_+^4: M\ubu(0)\le M\bfeta \le M\bbu(0)\}\subset \inte(\R_+^4)
$$
and define the solution of \eqref{eqn:sys} with initial condition $\bfeta\in R$ as $\bu(t;\bfeta)$, i.e. $\bu(0;\bfeta)=\bfeta$. Define the map $\varphi:R\to\R_+^4$ by
$$
\varphi(\bfeta)=\bu(\rho,\bfeta)\quad\text{for}\quad \bfeta\in R.
$$
Note from continuous dependence on initial conditions that $\varphi$ is clearly continuous. If $\bfeta\in R$, then from \eqref{eqn:claim-bounds-uvw} it is clear that
\begin{equation}\label{eqn:boundpositive}
M\ubu(t)\le M\bu(t;\bfeta)\le M\bbu(t)\quad\text{for all}\quad t\in\R_+.
\end{equation}
Then from the periodicity of $\ubu(t)$ and $\bbu(t)$ it follows that
$$
M\ubu(0)=M\ubu(\rho)\le M\bu(\rho;\bfeta)\le M\bbu(\rho)=M\bbu(0),
$$
that is, $\bu(\rho;\bfeta)\in R$ which implies $\varphi(R)\subset R$. Hence by Brouwer's Fixed Point Theorem there exists $\bfeta_0\in R$ such that $\bfeta_0=\varphi(\bfeta_0)$, i.e. $\bu(0;\bfeta_0)=\bu(\rho;\bfeta_0)$. Therefore by \citep[Lemma~2.2.1]{Farkas1994} for initial condition $\bu(0)=\bfeta_0$ the solution $\bu(t)$ must be $\rho$-periodic and from \eqref{eqn:boundpositive} that solution must be strictly positive.
\hfill\end{proof}

\subsection{Special periodic solutions of the full system}

We now classify all the special periodic solutions to system \eqref{eqn:sys} and determine the stability of each solution. With a slight abuse of notation, the special periodic solutions are of the form:
\begin{enumerate}
\item[PS1.] $(0,0,0,u_4(t))$;
\item[PS2.] $(u_1(t),0,0,u_4(t))$;
\item[PS3.] $(0,u_2(t),u_3(t),u_4(t))$;
\item[PS4.] $(u_1(t),u_2(t),u_3(t),u_4(t))$.
\end{enumerate}
Here, $u_j(t + \rho) = u_j(t) > 0$ for all $t \ge 0$ and $j=1,2,3,4$.

\subsubsection{PS1}

For PS1 we see that $u_4$ satisfies the ODE
$$
\frac{\d u_4}{\d t} = \beta_4 [i(t) - u_4].
$$
From Lemma~\ref{lem:w} we deduce that this has the $\rho$-periodic solution $u_4(t) = w(t;\beta_4 i,\beta_4)$ for all $t \in \R_+$ with initial condition given by \eqref{eqn:w-ic}.

\subsubsection{PS2}

For PS2 we see that $u_1$ and $u_2$ satisfy the system of equations
$$
\frac{\d u_1}{\d t} = u_1 (1 - u_1), \quad \frac{\d u_4}{\d t} = \beta_4 [i(t) - u_4].
$$
Since the equation for $u_1$ is autonomous, the only nontrivial periodic solution is $u_1(t) = 1$ for all $t \in \R_+$. Again from Lemma~\ref{lem:w} we conclude that the $u_4$~equation has $\rho$-periodic solution~$u_4(t) = w(t;\beta_4 i,\beta_4)$ for all $t \in \R_+$ with initial condition given by \eqref{eqn:w-ic}.

\subsubsection{PS3}

For PS3 the system of ODEs is
\begin{align*}
\frac{\d u_2}{\d t} & = \beta_2 u_2 (1 - u_2 - \delta_2 u_4),\\
\frac{\d u_3}{\d t} & = \beta_3 (u_2 - u_3),\\
\frac{\d u_4}{\d t} & = \beta_4 [i(t) - u_4 - \delta_4 u_2 u_4].
\end{align*}
It suffices to consider the reduced system \eqref{eqn:redsys} since from Lemma~\ref{lem:w}, if $u_2$ is positive and $\rho$-periodic, then a positive $\rho$-periodic solution of the $u_3$ equation is $u_3(t) = w(t;\beta_3 u_2,\beta_3)$ for all $t \in \R_+$ with initial condition given by \eqref{eqn:w-ic}. From Theorem~\ref{thm:red-sys-exist-unique} we know that there exists a unique strictly positive $\rho$-periodic solution $(u_2,u_4)$ to \eqref{eqn:redsys} if $\delta_2<1$.

\subsubsection{PS4}

For PS4 the system of ODEs is
\begin{align*}
\frac{\d u_1}{\d t} & = u_1 (1 - u_1 - \alpha_1 u_2 - \delta_1 u_3),\\
\frac{\d u_2}{\d t} & = \beta_2 u_2 (1 - u_2 - \alpha_2 u_1 - \delta_2 u_4),\\
\frac{\d u_3}{\d t} & = \beta_3 (u_2 - u_3),\\
\frac{\d u_4}{\d t} & = \beta_4 [i(t) - u_4 - \delta_4 u_2 u_4].
\end{align*}
From Theorem~\ref{thm:sys-coexist} it is known there exists a strictly positive $\rho$-periodic solution to this system if $\delta_2<1$, $1-(\alpha_1+\delta_1)\hat{u}_{2+}>0$, $1<(>)\alpha_1(1-\delta_2)$ and $1-\delta_2<(>)\alpha_2$.

\subsection{Stability of special periodic solutions to the full system}\label{sec:sys-stab}

We begin by determining the stability of PS1. Recall that PS1 is $(0,0,0,u_4(t))$, where $u_4(t) = w(t;\beta_4 i,\beta_4)$. Linearising \eqref{eqn:sys} about PS1, we obtain
$$
\frac{\d \by}{\d t} =
\begin{bmatrix}
1 & 0 & 0 & 0\\
0 & \beta_2 [1 - \delta_2 u_4(t)] & 0 & 0\\
0 & \beta_3 & -\beta_3 & 0\\
0 & -\beta_4 \delta_4 u_4(t) & 0 & -\beta_4
\end{bmatrix}
\by.
$$
This has a fundamental matrix~$\Phi(t) = (\phi_{ij}(t))_{1 \le i,j \le 4}$ given by
$$
\Phi(t) =
\begin{bmatrix}
\re^t & 0 & 0 & 0\\
0 & \re^{\beta_2 \int_0^t [1 - \delta_2 u_4(s)] \, \d s}  & 0 & \\
0 & \beta_3 \int_0^t \re^{-\beta_3 (t - s)} \phi_{22}(s) \, \d s  & \re^{-\beta_3 t} & 0\\
0 & - \beta_4 \delta_4 \int_0^t \re^{-\beta_4 (t - s)} u_4(s) \phi_{22}(s) \, \d s & 0 & \re^{-\beta_4 t}
\end{bmatrix}.
$$
Since this matrix is lower triangular, the characteristic multipliers (i.e. the eigenvalues of $\Phi(\rho)$) are
$$
\re^\rho, \quad \re^{\beta_2 \int_0^\rho [1 - \delta_2 u_4(s)] \, \d s}, \quad \re^{-\beta_3 \rho}, \quad \re^{-\beta_4 \rho}.
$$
As $| \re^\rho | > 1$, we conclude that PS1 is unstable.

Next we consider the stability of PS2, given by $(1,0,0,u_4(t))$, where $u_4(t) = w(t;\beta_4 i,\beta_4)$. Linearising \eqref{eqn:sys} about PS2 gives
$$
\frac{\d \by}{\d t} =
\begin{bmatrix}
-1 & -\alpha_1 & -\delta_1 & 0\\
0 & \beta_2 [1 - \alpha_2 - \delta_2 u_4(t)] & 0 & 0\\
0 & \beta_3 & -\beta_3 & 0\\
0 & -\beta_4 \delta_4 u_4(t) & 0 & -\beta_4
\end{bmatrix}
\by.
$$
With a slight abuse of notation, a fundamental matrix~$\Phi(t) = (\phi_{ij}(t))_{1 \le i,j \le 4}$ is
$$
\Phi(t) =
\begin{bmatrix}
\re^{-t} & -\int_0^t \re^{-(t - s)} [\alpha_1 \phi_{22}(s) + \delta_1 \phi_{32}(s)] \, \d s & -\delta_1 \int_0^t \re^{-(t - s)} \re^{-\beta_3 s} \d s & 0\\
0 &  \re^{\beta_2 \int_0^t [1 - \alpha_2 - \delta_2 u_4(s)] \, \d s} & 0 &  0 \\
0 &  \beta_3 \int_0^t \re^{-\beta_3 (t - s)} \phi_{22}(s) \, \d s & \re^{-\beta_3 t} & 0\\
0  & -\beta_4 \delta_4 \int_0^t \re^{-\beta_4 (t - s)} u_4(s) \phi_{22}(s) \, \d s & 0 & \re^{-\beta_4 t}
\end{bmatrix}.
$$
The characteristic multipliers are then
$$
\re^{-\rho}, \quad \re^{\beta_2 \int_0^\rho [1 - \alpha_2 - \delta_2 u_4(s)] \, \d s}, \quad \re^{-\beta_3 \rho}, \quad \re^{-\beta_4 \rho}.
$$
Thus the stability properties of PS2 will depend on the sign of $\int_0^\rho [1 - \alpha_2 - \delta_2 u_4(s)] \, \d s$. Note that the ODE for $u_4$ in PS2 is
$$
\frac{\d u_4}{\d t} = \beta_4 [i(t) - u_4(t)].
$$
Integrating from $0$ to $\rho$ gives
$$
\int_0^\rho u_4(s) \, \d s = \int_0^\rho i(s) \, \d s = \rho,
$$
so that
$$
\int_0^\rho [1 - \alpha_2 - \delta_2 u_4(s)] \, \d s = (1 - \alpha_2) \rho - \delta_2 \rho = \rho (1 - \alpha_2 - \delta_2).
$$
Therefore if $\alpha_2 + \delta_2 > 1$, then PS2 is stable. On the other hand, if $\alpha_2 + \delta_2 < 1$, then PS2 is unstable.

Finally, we look at the stability of PS3. Recall that PS3 is of the form~$(0,u_2(t),u_3(t),u_4(t))$, where $(u_2(t),u_4(t))$ is a periodic solution of \eqref{eqn:redsys} and $u_3(t) = w(t;\beta_3 u_2,\beta_3)$. Linearising \eqref{eqn:sys} about PS3 gives
\begin{equation}\label{eqn:sys-lin-normal-free}
\frac{\d \by}{\d t} = A(t) \by,
\end{equation}
where
$$
A(t) =
\begin{bmatrix}
1 - \alpha_1 u_2(t) - \delta_1 u_3(t) & 0 & 0 & 0\\
-\alpha_2 \beta_2 u_2(t) & \beta_2 [1 - 2 u_2(t) - \delta_2 u_4(t)] & 0 & -\beta_2 \delta_2 u_2(t)\\
0 & \beta_3 & -\beta_3 & 0\\
0 & -\beta_4 \delta_4 u_4(t) & 0 & -\beta_4 [1 + \delta_4 u_2(t)]
\end{bmatrix}.
$$
Let $\Theta(t)=(\theta_{ij}(t))_{1\le i,j\le3}$ be the fundamental matrix that satisfies $\Theta'=C(t)\Theta';~\Theta(0)=I$, where $C(t)=(a_{ij}(t))_{2\le i,j\le 4}$. Define
$$
\bfgamma(t)=\Theta(t)\int_0^t{\Theta^{-1}(s)\bfzeta(s)\,\d s},\quad \text{where}\quad \bfzeta(t)=\begin{bmatrix} a_{21}(t)\re^{\int_0^t{a_{11}(s)\,\d s}} & 0 & 0 \end{bmatrix}^{\T}.
$$
Furthermore, let $\Psi(t)=(\psi_{ij}(t))_{1\le i,j\le2}$ be a fundamental matrix solution of \eqref{eqn:red-sys-lin}, then the fundamental matrix solution of \eqref{eqn:sys-lin-normal-free} is
$$
\Phi(t) =
\begin{bmatrix}   \re^{\int_0^t{[1-\alpha_1q_2(s)-\delta_1q_3(s)]\,\d s}}  &   0                                                &  0          &   0  \\
\gamma_1(t) &  \psi_{11}(t)  &  0  & \psi_{12}(t)\\
\gamma_2(t) &  \beta_3\re^{\beta_3 t}\int_0^t{\psi_{11}(t)\re^{\beta_3 s}\,\d s}  &   \re^{-\beta_3t}   &   \beta_3\re^{\beta_3 t}\int_0^t{\psi_{12}(t)\re^{\beta_3 s}\,\d s}   \\
\gamma_3(t)        &   \psi_{21}(t)     &   0          &   \psi_{22}(t) \\
                        \end{bmatrix}.
$$
The characteristic multipliers are then $\re^{\int_0^\rho{[1-\alpha_1u_2(s)-\delta_1u_3(s)]\,\d s}}, \quad \re^{-\beta_3\rho}$ and the eigenvalues of $\Psi(\rho)$ which were shown in Section~\ref{sec:red-sys-stab} to have modulus less that 1 if $\delta_2<1$. It is clear that $|\re^{-\beta_3\rho}|<1$, hence it is required that
$$
|\re^{\int_0^\rho{[1-\alpha_1u_2(s)-\delta_1u_3(s)]\,\d s}}|<(>)1\quad \Longleftrightarrow\quad \int_0^\rho{[1-\alpha_1u_2(s)-\delta_1u_3(s)]\,\d s}<(>)0
$$
for the solution PS3 to be stable (unstable). From Lemmata~\ref{lem:w}~and~\ref{lem:x1-quadratic}
$$
\int_0^\rho{[1-\alpha_1u_2(s)-\delta_1u_3(s)]\,\d s}=\int_0^\rho{[1-(\alpha_1+\delta_1)u_2(s)]\,\d s}\ge\rho[1-(\alpha_1+\delta_1)\hat{u}_{2+}],
$$
therefore if $1-(\alpha_1+\delta_1)\hat{u}_{2+}>0$, then the solution is unstable. Now, from \eqref{eqn:redsys} it can be shown that
$$
\frac{1}{\rho}\int_0^\rho{u_2(s)\,\d s}=1-\delta_2\frac{1}{\rho}\int_0^\rho{u_4(s)\,\d s}.
$$
Furthermore, by the periodicity of $u_4$ and the positivity of $u_2$ and $u_4$, it can be obtained from \eqref{eqn:redsys} that
$$
\frac{1}{\rho}\int_0^\rho{u_4(s)\,\d s}=1-\frac{1}{\rho}\int_0^\rho{u_2(s)u_4(s)\,\d s}<1
$$
and as a result we can conclude
\begin{align}
\frac{1}{\rho}\int_0^\rho{1-\alpha_1u_2(s)-\delta_1u_3(s)\,\d s}=&\,1-(\alpha_1+\delta_1)\frac{1}{\rho}\int_0^\rho{u_2(t)\,\d s}\\
=&\,1-(\alpha_1+\delta_1)\left(1-\delta_2\frac{1}{\rho}\int_0^\rho{u_4(t)\,\d s}\right)\\
<&\,1-(\alpha_1+\delta_1)(1-\delta_2)<0
\end{align}
if $\delta_2<(\alpha_1+\delta_1-1)/(\alpha_1+\delta_1)$. Noting that $(\alpha_1+\delta_1-1)/(\alpha_1+\delta_1)<1$, it can then be concluded that if $\delta_2<(\alpha_1+\delta_1-1)/(\alpha_1+\delta_1)$, then PS3 is a stable.

\section{Discussion}\label{sec:discussion}

It is first noted that the results obtained for the model proposed by Byrne~\cite{Byrne2003}, corresponding to system \eqref{eqn:redsys}, have been further extended, that is, an examination of the behaviour of the model when $i(t)$ is an arbitrary continuous-time-periodic function has been presented. In the analysis conducted by Byrne~\cite{Byrne2003}, $i$ was given by \eqref{eqn:perinf-non-dim} and it was assumed that $x_2\equiv i$ and then \eqref{eqn:redsys} reduced to a single Bernoulli equation that could be readily solved. Here, no assumptions were made about $x_2$, rather the model was considered for all $i\in C(\R_+,[0,i_M])$, where $i(t)=i(t+\rho)$ for some $\rho>0$ and $t\in\R_+$. It was found that there exist $\rho$-periodic solutions to \eqref{eqn:redsys} that were stable for different parameter values. The trivial tumour-tissue free solution (i.e. $x_1\equiv 0$) was found to exist for all parameter values and was found to be asymptotically stable for $\delta_2>1$ and unstable for $\delta_2<1$. This is the same result observed for the analogous SS (i.e. RS1) of the model using the constant infusion function. Hence this suggests that using a different method of drug delivery will not change the conditions in which the tumour can be removed from the system, rather it is only required that the same total amount of drug is delivered over each treatment period. For system \eqref{eqn:redsys} a $\rho$-periodic solution of the form $\bx(t)=\begin{bmatrix}x_1(t) & x_2(t)\end{bmatrix}^{\T} > 0$ was found to exist for $\delta_2<1$. Furthermore, this solution was shown to be asymptotically stable if $\delta_2<1$. It is also noted that if $\delta_2>(1+\delta_4)^2/4\delta_4$, then no $\rho$-periodic solution of this form can exist, as is the case for the analogous SS of the constant infusion model (i.e. RS2). It was also shown that if $1\le \delta_2\le (1+\delta_4)^2/4\delta_4$ and $\delta_4<1$, then no biologically meaningful $\rho$-periodic solution of this form could exist. Once again, this condition is the same as for the analogous SS of the constant infusion model.

Considering the trivial periodic solution given by PS1, this solution, as in the constant infusion model, is unconditionally unstable and as a result this suggests that the model should always contain normal-tissue or tumour-tissue. This behaviour is to be expected and is consistent with cell models. The less trivial state PS2, that represents the tumour-free state, is shown to be stable when $\alpha_2+\delta_2>1$ (similarly, unstable if $\alpha_2+\delta_2<1$). Note that this is the condition for stability of the analogous SS in the constant infusion case (i.e. SS2). Whilst it should naturally follow that the periodic stability conditions should imply when the SSs of the constant infusion model are stable (respectively, unstable), it should be noted that these conditions are independent of $i(t)$ and are identical for all non-negative continuous periodic functions. Hence these conditions suggest that for the normal-tissue to remain within the system, at the least, there needs to be sufficiently strong population competition and treatment strength. Should there not be significant competition provided by the normal-tissue or large enough treatment strength cannot be obtained, for example, if the required dose to do so is unsafe, then this state will be unstable and this would provide the ideal conditions for tumour invasion. Hence if the competition that is provided by the normal-tissue is able to be increased, then this would enable the tumour-free solution to become stable and improve the potential efficacy of the treatment. It is clear from this result alone that population competition has a potentially important role to play in treating tumour invasion. Furthermore, if a treatment somehow indirectly weakens the effective competition that normal-tissue can provide without lowering the tumour-tissue competition a proportional amount, then this can actually be harmful to the potential efficacy of the treatment. In a case like this, the assessment would need to be made of whether the relative benefit gained in fighting the tumour with the specific treatment outweighs the loss incurred from the damaged competition that the normal-tissue provides.

The normal-tissue free periodic solution (i.e. PS3) was shown to exist if $\delta_2<1$ and furthermore could not exist if $\delta_2>(1+\delta_4)^2/4\delta_4$, or if $1\le\delta_2<(1+\delta_4)^2/4\delta_4$ and $\delta_4\le 1$. Hence this represents the parameter condition in which it can be assured that an invasive tumour will not exist. In this state, the concentration of chemotherapy drug is lower than the tumour-free state, as would be expected due to the model assumption that interaction with the tumour causes some portion of the drug to decay. From the stability analysis in Section~\ref{sec:sys-stab}, it can be seen that the normal-tissue free periodic solution is unstable if
$$
1-(\alpha_1+\delta_1)\hat{u}_{2+}>0.
$$
Hence there is a sufficiently large treatment strength that needs to be obtained in order prevent this invasive normal-tissue free state from being able to invade. Furthermore, from this condition it can be concluded that should $\alpha_1+\delta_1\le1$, then the state is unstable. Therefore if the combined strength of the tumour-tissue competition and the destructive influence of the acid is low, then the tumour will not be invasive. This is consistent with the results obtained for the constant infusion model considered in Section~\ref{sec:sys-constant} and for the heterogeneous model considered by McGillen~et~al.~\cite{McGillenetal2013}. This further demonstrates the potential importance of the acid-mediation hypothesis, in that, should a tumour provide low population competition, then invasion may still be achieved provided a sufficiently strong destructive influence of the acid. This once again is consistent with the results of the model proposed by McGillen~et~al.~\cite{McGillenetal2013}.

It was shown in Section~\ref{sec:sys-stab} that if $\delta_2<(\alpha_1+\delta_1-1)/(\alpha_1+\delta_1)$, then the normal-tissue free solution (i.e. PS3) is asymptotically stable. Hence for sufficiently small treatment strength and sufficiently large tumour population competition and tumour aggressiveness, the invasive tumour state will be stable. It should be noted that the normal-tissue free solution could still be stable for larger values of $\delta_2$, however the conducted analysis was unable to confirm stability for values of $\delta_2$ outside of this set of values. If the normal-tissue population competition is sufficiently low, then the invasive tumour state will be the only stable solution. This will result in the tumour successfully invading and the treatment being unsuccessful. If however the normal-tissue population competition is sufficiently strong, i.e. if $\alpha_2+\delta_2>1$, then the tumour-tissue free solution will be stable and hence the system will be bistable. Should this be the case, the size of the initial tumour will alter the efficacy of the treatment protocol which, as expected, is consistent with the results of the constant infusion model and the decreased likelihood of a cure associated with more established tumours \cite{Perry2008}. From this it can be seen that the population competition, that is, the relative interaction between different cell types, can have a significant impact on the efficacy of the tumour treatment.

It was shown that a strictly positive $\rho$-periodic coexistence solution exists for \eqref{eqn:sys} (i.e. PS4). Moreover, numerical simulations, using \eqref{eqn:perinf-non-dim} for $i$, suggest that this solution is stable for particular parameter values. It is noted that this state exists and moreover, is stable, when the tumour aggressiveness, tumour-tissue population competition, normal-tissue population competition and destructive influence of the chemotherapy is low (i.e. $\delta_1,\alpha_1,\alpha_2,\delta_2$ are ``small''). This is consistent with the results obtained for the constant infusion model. Should any of these parameters increase, the properties of the model change dramatically. If the tumour aggressiveness increases, then the tumour would become invasive as the normal-tissue free periodic solution would become stable while the tumour-tissue free solution would remain unstable. Conversely, should the destructive influence of the chemotherapy be increased, by way of increased drug infusion (say), the coexistence state would be come unstable and the tumour-free periodic solution would become stable resulting in the tumour being removed from the system.

\section{Concluding remarks}\label{sec:concluding-remarks}

A model for the acid-mediation hypothesis in the presence of a chemotherapy treatment has been proposed and considered. The proposed model is a simple ODE model that is comprised of the normal-tissue, tumour-tissue, acid concentration, and chemotherapy drug concentration in a homogeneous setting. The model was based on the model proposed by McGillen~et~al.~\cite{McGillenetal2013} in combination with that proposed by Byrne~\cite{Byrne2003} and has been considered to obtain an understanding of the reaction dynamics governing the system and to provide insights required before considering this in a heterogeneous setting. The model was considered mathematically for different treatment methods using both numerical and analytical techniques.

The model has been considered with constant drug infusion which produced an autonomous system that was studied using a steady state analysis. The model was also considered assuming the use of treatment occurring in cycles, which was characterised by time-periodic infusion functions. This resulted in a non-autonomous system that could be examined using an analysis of time-periodic solutions. The results from each analysis draw similar, if not the same, conclusions about the effect of competition, the treatment strength and the destructive influence of acid on the overall system dynamics. This suggests that the method of drug delivery is not a significant factor when trying to treat a tumour, rather it is the average rate of delivery which is the important factor. Hence much more focus can be placed on ensuring the safest method of delivery is used. Moreover, from a modelling stand point this suggests, at least in a homogeneous model, that the choice of infusion function is not as influential to the overall behaviour of the system as may be intuitively thought. This however only relates to the long term behaviour of the system, whereas the short term dynamics may still vary largely based on the choice of infusion function. Furthermore, this does not consider the potential dynamics that could be displayed in a heterogeneous setting in which spatial variation and associated mechanisms must be considered.

Since the model considered in this article assumes homogeneous populations, that is, well mixed populations, there are natural limitations to the conclusions that can be drawn from this analysis. However analysis conducted of this homogeneous model provides insights into the potential long-term behaviour of a heterogeneous version of this model, particularly in the situations of monostable solutions for the system. Hence analysis of a model that considers a heterogeneous setting will be considered in a following article to further understand the dynamics of the acid-mediation hypothesis with chemotherapy intervention.

\begin{appendix}

\section{Auxiliary definitions and results}\label{sec:appendix}

We include for the convenience of the reader a collection of definitions and results required for this article.

\begin{definition}[{Quasimonotonicity, see \citep[\S10, XII]{Walter1998}}]\label{def:quasimonotone}
The function $\bF:D\subset\R^{n+1}\to\R^n$ is said to be quasimonotone increasing on $D$ if for $i=1,\ldots,n$,
$$
\bu\le\bv,~ u_i=v_i,~(t,\bu),(t,\bv)\in D\quad \Longrightarrow \quad F_i(t,\bu)\le F_i(t,\bv).
$$
\end{definition}

\begin{remark}\label{rem:ess-pos}
A matrix $C=(c_{ij})$ is said to be \textit{essentially positive} if $c_{ij}\ge 0$ for all $i\ne j$. A function $\bF:D\subset\R^{n+1}\to\R^n$ is quasimonotone increasing on a set $D\subset \R^{n+1}$ if the Jacobian matrix $\bF_\bu'(t,\bu)$ is \textit{essentially positive} for all $(t,\bu)\in D$.
\end{remark}

\begin{definition}[{Invariant Sets in $\R^n$, see \citep[\S10, XV]{Walter1998}}]\label{def:invariantset}
A set $D\subset \R^n$ is said to be invariant with respect to the system $\bu'=\bF(t,\bu)$ if for any solution $\bu$, $\bu(t_0)\in D$ implies $\bu(t)\in D$ for $t>t_0$ (as long as the solution exists).
\end{definition}

\begin{definition}[{Tangent Condition, see \citep[\S10, XV]{Walter1998}}]\label{def:tangentcondition}
Let $D\subset\R^n$ and $\bF:E\supset J\times \overline{D}\to \R^n$, where $J\subset\R$ is an interval. The tangent condition is given by
\begin{equation}
\inner{\bn(\bu)}{\bF(t,\bu)}\le 0,\quad \text{for}\quad t\in J,~\bu\in\partial D,
\end{equation}
where $\bn(\bu)$ is the outer normal to $D$ at $\bu$.
\end{definition}

\begin{definition}[{Local Lipschitz Condition, see \citep[\S6, IV]{Walter1998}}]\label{def:locallipschitz}
Let $J\subset \R$, $D\subset\R^n$ and $E=J\times D$. Then the function $\bF:E\to \R^n$ is said to satisfy a local Lipschitz condition with respect to $\bu$ in $E$ if for every $(t_0,\bu_0)\in E$ there exists a neighbourhood $U=U(t_0,\bu_0)$ and an $L=L(t_0,\bu_0)<\infty$ such that for all $(t,\bu_1),(t,\bu_2)\in U\cap E$ the function $\bF$ satisfies the Lipschitz condition
\begin{equation}\label{eqn:LocalLipschitz}
\|\bF(t,\bu_1)-\bF(t,\bu_2)\|\le L\|\bu_1-\bu_2\|.
\end{equation}
\end{definition}

\begin{remark}
If $D$ is open and if $\bF\in C(E,\R^n)$ has continuous derivative $\bF_\bu'(t,\bu)$ in $E$, then $\bF$ satisfies a local Lipschitz condition with respect to $\bu$ in $E$.
\end{remark}

\begin{theorem}[{Invariance Theorem, see \citep[\S10, XVI]{Walter1998}}]\label{thm:Invariance-Exist-Uniqe}
Let $D\subset\R^n$ be closed, $\bF:[t_0,\infty)\times D\to \R^n$ bounded and continuous and consider the system $\bu'=\bF(t,\bu)$. Suppose that $\bF$ satisfies the tangent condition~(see Definition~\ref{def:tangentcondition}) and the one-sided Lipschitz condition on $D$, that is
\begin{equation}\label{eqn:rightLip}
\inner{\bu_1-\bu_2}{\bF(t,\bu_1)-\bF(t,\bu_2)}\le L \|\bu_1-\bu_2\|^2\quad\text{for all}\quad t\in[t_0,\infty),~\bu_1,\bu_2\in D.
\end{equation}
Then for any $\bu(t_0)\in D$ a unique solution $\bu(t)$ exists for all $t\in[t_0,\infty)$ which is invariant on $D$.
\end{theorem}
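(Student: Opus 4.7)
The plan is to apply the Invariance Theorem~\ref{thm:Invariance-Exist-Uniqe}, which packages global existence, uniqueness, and forward invariance into a single statement once a suitable compact set $S$ has been exhibited on which $\bF$ is continuous and bounded, satisfies the one-sided Lipschitz condition~\eqref{eqn:rightLip}, and whose boundary meets the tangent condition of Definition~\ref{def:tangentcondition}.

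First I would construct an explicit rectangular box $S = E_1\times E_2\times E_3\times E_4\subset\R_+^4$ containing $\bu(0)$. The logistic structure of the first two equations dictates factors $E_i=[0,\max\{1,u_i(0)\}]$ for $i=1,2$; the coupling $\beta_3(u_2-u_3)$ in the third equation forces $\sup E_3$ to dominate $\sup E_2$, which motivates $E_3=[0,\max\{1,u_2(0),u_3(0)\}]$; and the bound $i(t)\le i_M$ in the fourth equation suggests $E_4=[0,\max\{i_M,u_4(0)\}]$. On this compact $S$ the map $\bF$ and its Jacobian $\bF'_\bu$ are continuous (using $i\in C(\R_+,[0,i_M])$), so $\bF$ is bounded and globally Lipschitz on $\R_+\times S$, and Cauchy--Schwarz upgrades this to the one-sided Lipschitz estimate $\inner{\bu_1-\bu_2}{\bF(t,\bu_1)-\bF(t,\bu_2)}\le L\|\bu_1-\bu_2\|^2$.

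The crux is the tangent condition on the eight faces $\partial S_{ij}$, whose outer normals are $(-1)^j\mathbf{e}_i$ for $i=1,2,3,4$ and $j=1,2$, so that I must show $(-1)^j F_i(t,\bu)\le 0$. On each lower face ($u_i=0$) this reduces to $F_i\ge 0$: immediate because $u_1,u_2$ factorise out of $F_1,F_2$, while $F_3|_{u_3=0}=\beta_3 u_2\ge 0$ and $F_4|_{u_4=0}=\beta_4 i(t)\ge 0$. On each upper face I need $F_i\le 0$. For $i=1,2$ the inequality $u_i\ge 1$ makes the logistic factor $(1-u_i-\cdots)$ non-positive, giving $F_i\le 0$. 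The case $u_3=\sup E_3$ is precisely why $E_3$ was enlarged: the inclusion $\sup E_3\ge\sup E_2\ge u_2$ forces $F_3=\beta_3(u_2-u_3)\le 0$. Finally, on $u_4=\sup E_4\ge i_M\ge i(t)$, both bracketed terms in $F_4=\beta_4[i(t)-u_4-\delta_4 u_4 u_2]$ are non-positive.

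With all hypotheses of Theorem~\ref{thm:Invariance-Exist-Uniqe} verified on $S$, the conclusion follows: a unique solution exists for all $t\in\R_+$ and remains in $S\subset\R_+^4$. The argument is essentially routine once the box is set up correctly; the only genuine subtlety is the $u_2$--$u_3$ coupling on the upper face $u_3=\sup E_3$, which is precisely what dictates the slightly non-standard choice of $E_3$.
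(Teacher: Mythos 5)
Your proposal does not prove the statement in question. The statement is the Invariance Theorem itself (Theorem~\ref{thm:Invariance-Exist-Uniqe}): a general result asserting that for a closed set $D$, a bounded continuous $\bF$ satisfying the tangent condition on $\partial D$ and the one-sided Lipschitz condition~\eqref{eqn:rightLip}, the initial value problem has a unique global solution that remains in $D$. What you have written instead is a proof of the paper's first theorem in Section~\ref{sec:ModelForm} (well-posedness and positivity of system~\eqref{eqn:sys}), and you prove it \emph{by invoking} Theorem~\ref{thm:Invariance-Exist-Uniqe}. As an argument for the stated result this is circular: you cannot establish the Invariance Theorem by applying the Invariance Theorem to a particular four-dimensional system. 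The construction of the box $S=E_1\times E_2\times E_3\times E_4$, the verification of the tangent condition on the eight faces, and the Cauchy--Schwarz upgrade to the one-sided Lipschitz estimate are all correct and do match the paper's argument --- but for the \emph{other} theorem, not this one.

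For the record, the paper does not prove Theorem~\ref{thm:Invariance-Exist-Uniqe} at all; it is quoted from Walter~\citep[\S10, XVI]{Walter1998} as an auxiliary result in Appendix~\ref{sec:appendix}. A genuine proof would need entirely different ingredients: local existence via Peano (or an Euler polygon construction) using continuity and boundedness of $\bF$; uniqueness via a Gronwall-type estimate on $\|\bu_1(t)-\bu_2(t)\|^2$, whose derivative is controlled precisely by the one-sided Lipschitz condition~\eqref{eqn:rightLip}; forward invariance via a differential inequality for $\dist(\bu(t),D)$ driven by the tangent condition (typically through an approximation with $\bF(t,\bu)-\varepsilon\,\bn$ or a comparison argument); and global continuation from the boundedness of $\bF$ on $[t_0,\infty)\times D$. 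None of these steps appears in your proposal.
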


\begin{theorem}[{Comparison Theorem, see \citep[\S10, Comparison Theorem]{Walter1998}}]\label{thm:ComparisonTheorem}
Let $D\subset \R^{n}$ and $J=[t_0,t_0+a]$; assume that $\bF:J\times D\to \R^n$ is quasimonotone increasing and that $\bF(t,\bu)$ satisfies a local Lipschitz condition with respect to $\bu$ in $J\times D$. Suppose that $\bu'=\bF(t,\bu)$ and $\bv'\le\bF(t,\bv)$, if $\bu(t)$ and $\bv(t)$ are differentiable on $J$ and $\bv(t_0)\le \bu(t_0)$, then $\bv(t)\le \bu(t)$ for all $t\in J$.
\end{theorem}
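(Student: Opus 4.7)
The plan is to apply the classical Müller--Nagumo--Walter strict super-solution trick, exploiting quasimonotonicity at a putative first crossing time. Throughout, write $\mathbf{1} = (1,\ldots,1)^{\T} \in \R^n$ and let $L$ denote a Lipschitz constant for $\bF$ valid on some compact neighbourhood of $\bu(J) \cup \bv(J)$ inside $D$.

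First I would construct a one-parameter family of strict super-solutions of $\bu$. Fix a small $\epsilon > 0$ and a constant $\kappa > L\sqrt{n}$, and define
\[
\bw_\epsilon(t) := \bu(t) + \epsilon\, e^{\kappa(t-t_0)} \mathbf{1}.
\]
By construction $\bw_\epsilon(t_0) = \bu(t_0) + \epsilon \mathbf{1} > \bv(t_0)$ componentwise, and for every $i$,
\[
w_{\epsilon,i}'(t) = F_i(t,\bu(t)) + \epsilon\kappa\, e^{\kappa(t-t_0)} > F_i(t,\bu(t)) + L\sqrt{n}\, \epsilon e^{\kappa(t-t_0)} \ge F_i(t,\bw_\epsilon(t)),
\]
where the last step uses $|F_i(t,\bw_\epsilon) - F_i(t,\bu)| \le L \|\bw_\epsilon - \bu\| = L\sqrt{n}\, \epsilon e^{\kappa(t-t_0)}$ from the local Lipschitz bound. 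Thus $\bw_\epsilon$ is a componentwise \emph{strict} super-solution.

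Next I would show $\bv(t) < \bw_\epsilon(t)$ componentwise on all of $J$. Assume for contradiction this fails, and set
\[
t_1 := \inf\{ t \in J : v_i(t) \ge w_{\epsilon,i}(t) \text{ for some } i\}.
\]
Since strict inequality holds at $t_0$, continuity forces $t_1 > t_0$, and minimality yields $\bv(t_1) \le \bw_\epsilon(t_1)$ with $v_i(t_1) = w_{\epsilon,i}(t_1)$ for some index $i$. Because $v_i - w_{\epsilon,i} \le 0$ on $[t_0,t_1]$ with equality at $t_1$, the one-sided derivative satisfies $v_i'(t_1) - w_{\epsilon,i}'(t_1) \ge 0$. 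On the other hand, quasimonotonicity applied to $\bv(t_1) \le \bw_\epsilon(t_1)$ with $i$-th components equal gives $F_i(t_1,\bv(t_1)) \le F_i(t_1,\bw_\epsilon(t_1))$, hence
\[
v_i'(t_1) \le F_i(t_1,\bv(t_1)) \le F_i(t_1,\bw_\epsilon(t_1)) < w_{\epsilon,i}'(t_1),
\]
contradicting the derivative inequality. Finally, letting $\epsilon \to 0^+$ in $\bv(t) \le \bw_\epsilon(t) = \bu(t) + \epsilon e^{\kappa(t-t_0)}\mathbf{1}$ delivers $\bv(t) \le \bu(t)$ on all of $J$.

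The main technical obstacle is the first-crossing argument: one must rigorously extract the left-derivative inequality $v_i'(t_1) \ge w_{\epsilon,i}'(t_1)$ from the sign of $v_i - w_{\epsilon,i}$ on $[t_0,t_1]$, and one must choose $\epsilon$ small enough that $\bw_\epsilon(J)$ remains inside the neighbourhood of $\bu(J) \cup \bv(J)$ in $D$ on which both the Lipschitz constant $L$ and quasimonotonicity are available. This last point is settled by a mild compactness argument, since $\bu(J)$ and $\bv(J)$ are compact subsets of $D$ and the perturbation $\epsilon e^{\kappa a}\mathbf{1}$ can be made arbitrarily small uniformly in $t$.
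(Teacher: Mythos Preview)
The paper does not supply its own proof of this theorem; it is listed in Appendix~\ref{sec:appendix} as an auxiliary result quoted from Walter's textbook, with no argument given. Your proof is correct and is exactly the classical M\"uller--Nagumo perturbation argument one finds in that reference: inflate the solution $\bu$ to a strict super-solution $\bw_\epsilon$, invoke quasimonotonicity at the first crossing index and time to force a derivative contradiction, then pass to the limit $\epsilon\to 0$. The technical caveats you flag (left-derivative at $t_1$, keeping $\bw_\epsilon(J)$ inside the region where the Lipschitz bound and quasimonotonicity hold) are the right ones and are handled by the compactness considerations you sketch. There is nothing in the paper to compare against; your write-up simply fills in what the authors left to the citation.
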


As a consequence of Theorem~\ref{thm:ComparisonTheorem} we have the following corollary.
\begin{corollary}\label{cor:ComparisonCorollary}
Let $D\subset \R^{n}$, $J=[t_0,t_0+a]$ and $\bF:J\times D\to \R^n$; assume that $\bF(t,\bu)$ satisfies a local Lipschitz condition with respect to $\bu$ on $J\times D$ and that $\bu$ satisfies $\bu'=F(t,\bu)$. Suppose that there exists an invertible $n\times n$ matrix $M$ such that $M\bv' \le M\bF(t,\bv)$ and $M\bF(t,M^{-1}\bfeta)$ is quasimonotone increasing on $J\times \{\bfeta\in\R^n:M^{-1}\bfeta\in D\}$. If $\bu(t)$ and $\bv(t)$ are differentiable on $J$ and $M\bv(t_0) \le M \bu(t_0)$, then $M\bv(t) \le M\bu(t)$ for all $t\in J$.
\end{corollary}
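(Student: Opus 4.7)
The plan is to reduce Corollary~\ref{cor:ComparisonCorollary} to a direct application of Theorem~\ref{thm:ComparisonTheorem} by performing the linear change of variables induced by the matrix $M$. Concretely, I would set $\tilde{\bu}(t) := M\bu(t)$, $\tilde{\bv}(t) := M\bv(t)$, and $\tilde{\bF}(t,\bfeta) := M\bF(t,M^{-1}\bfeta)$ on the transformed domain $\tilde{D} := \{\bfeta \in \R^n : M^{-1}\bfeta \in D\}$. The entire proof then consists of verifying that $(\tilde{\bu},\tilde{\bv},\tilde{\bF})$ on $J\times \tilde{D}$ satisfies the hypotheses of Theorem~\ref{thm:ComparisonTheorem}, after which the conclusion $\tilde{\bv}(t)\le\tilde{\bu}(t)$ (i.e.\ $M\bv(t)\le M\bu(t)$) is immediate.

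First I would check the differential (in)equalities. Differentiability of $\tilde{\bu}$ and $\tilde{\bv}$ on $J$ is inherited from that of $\bu,\bv$ since $M$ is a constant matrix. Using $\bu'=\bF(t,\bu)$ and the identity $M^{-1}\tilde{\bu}=\bu$, I get
\begin{equation}
\tilde{\bu}'(t) = M\bu'(t) = M\bF(t,M^{-1}\tilde{\bu}(t)) = \tilde{\bF}(t,\tilde{\bu}(t)),
\end{equation}
and likewise from $M\bv'\le M\bF(t,\bv)$ (componentwise),
\begin{equation}
\tilde{\bv}'(t) = M\bv'(t) \le M\bF(t,M^{-1}\tilde{\bv}(t)) = \tilde{\bF}(t,\tilde{\bv}(t)).
\end{equation}
The initial-value comparison is exactly the hypothesis $\tilde{\bv}(t_0)=M\bv(t_0)\le M\bu(t_0)=\tilde{\bu}(t_0)$.

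Next I would verify the two structural hypotheses of the Comparison Theorem for $\tilde{\bF}$ on $J\times \tilde{D}$. Quasimonotone increase is granted by assumption. For the local Lipschitz condition, I would fix $(t_0',\bfeta_0)\in J\times\tilde{D}$, let $(t_0',\bu_0)$ with $\bu_0=M^{-1}\bfeta_0$ be the corresponding point in $J\times D$, and use the local Lipschitz neighbourhood $U$ and constant $L$ guaranteed by Definition~\ref{def:locallipschitz} for $\bF$. Then for any $\bfeta_1,\bfeta_2$ in a sufficiently small neighbourhood of $\bfeta_0$ whose preimages under $M^{-1}$ lie in $U\cap(J\times D)$,
\begin{equation}
\|\tilde{\bF}(t,\bfeta_1)-\tilde{\bF}(t,\bfeta_2)\| \le \|M\|\,L\,\|M^{-1}\|\,\|\bfeta_1-\bfeta_2\|,
\end{equation}
which yields the required local Lipschitz estimate with constant $\tilde{L}=\|M\|\,L\,\|M^{-1}\|$. (The existence of such a neighbourhood in $\tilde{D}$ follows because $\bfeta\mapsto M^{-1}\bfeta$ is a homeomorphism, so preimages of neighbourhoods are neighbourhoods.)

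Having verified all hypotheses, applying Theorem~\ref{thm:ComparisonTheorem} to $\tilde{\bu}$, $\tilde{\bv}$, and $\tilde{\bF}$ on $J\times\tilde{D}$ gives $\tilde{\bv}(t)\le\tilde{\bu}(t)$ for all $t\in J$, which is precisely $M\bv(t)\le M\bu(t)$ for all $t\in J$. There is no real obstacle here; the only point that requires mild care is the bookkeeping to transport the local Lipschitz condition through $M$ and to recognise that quasimonotone-increase of the transformed field is exactly what the statement assumes (rather than quasimonotone-increase of $\bF$ itself), so the hypothesis is tailor-made for this change of variables.
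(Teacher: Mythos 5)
Your proposal is correct and follows essentially the same route as the paper: both proofs perform the change of variables $\tilde{\bu}=M\bu$, $\tilde{\bv}=M\bv$, verify that the transformed field $M\bF(t,M^{-1}\bfeta)$ inherits the local Lipschitz condition (the paper via the operator bound $\|M\bx\|\le c\|\bx\|$, you via the slightly more explicit constant $\|M\|L\|M^{-1}\|$) while quasimonotonicity is assumed, and then invoke Theorem~\ref{thm:ComparisonTheorem}. Your bookkeeping of the Lipschitz constant through both $M$ and $M^{-1}$ is, if anything, marginally more complete than the paper's, which only explicitly treats the factor coming from $M$.
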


\begin{proof}
It is well known that for any linear operator $T:\Omega\subset\R^n\to\R^n$ there exists a constant $c\in\R$ such that $\|T\bx\|\le c\|\bx\|$ for all $\bx\in \Omega$ \citep[p. 58]{Walter1998}. Hence there exists $c\in\R$ such that $\|M\bF(t,\bu_1)-M\bF(t,\bu_2)\|=\|M(\bF(t,\bu_1)-\bF(t,\bu_2))\|\le c\|\bF(t,\bu_1)-\bF(t,\bu_2)\|$ for all $(t,\bu_1),(t,\bu_2)\in J\times D$. Then as $\bF(t,\bu)$ satisfies a local Lipschitz condition on $J\times D$ it easily follows that so too does $M\bF(t,\bu)$.

Let $\bbu(t)=M\bu(t)$ and $\bbv(t)=M\bv(t)$, then we have $\bbu'=M\bF(t,M^{-1}\bbu)$ and $\bbv'\le M\bF(t,M^{-1}\bbv)$ on $J\times E$, where $E=\{\bfeta\in\R^n:M^{-1}\bfeta\in D\}$. Note $M\bF(t,M^{-1}\bbu)$ is quasimonotone increasing on $J\times E$ and if $\bu(t)$ and $\bv(t)$ are differentiable on $J$, so too are $\bbu(t)$ and $\bbv(t)$. Therefore by Theorem~\ref{thm:ComparisonTheorem} if $\bbv(t_0) \le \bbu(t_0)$, then $\bbv(t) \le \bbu(t)$ for all $t\in J$, that is, if $M\bv(t_0) \le M \bu(t_0)$, then $M\bv(t) \le M\bu(t)$ for all $t\in J$.
\hfill\end{proof}

\section{Results for specific ODEs and corresponding periodic solutions}\label{sec:appendix2}

\begin{lemma}
\label{lem:w}
Consider the equation
\begin{equation}
\label{w-ode}
\frac{\d w}{\d t} = f(t) - g(t) w,
\end{equation}
where $f, g \in C(\R_+)$ are $\rho$-periodic. Suppose that
$$
\int_0^\rho{g(s)\,\d s}\ne 0,\quad f(t) \ge(>) 0 \quad \text{for all}\quad t\in\R_+.
$$
Then the following statements hold:
\begin{enumerate}
\item[\rm (i)] The function
\begin{equation}
\label{w-sol}
w(t;f,g) = w(0) \re^{-\int_0^t g(s) \, \d s} + \int_0^t \re^{-\int_s^t g(s') \, \d s'} f(s) \, \d s
\end{equation}
is the unique solution of \eqref{w-ode} for any initial condition~$w(0)\in \R$. Note that $w(t;f,g) > 0$ for all $t \ge 0$ if and only if $w(0) > 0$.
\item[\rm (ii)] If
\begin{equation}
\label{eqn:w-ic}
w(0) = \frac{\int_0^\rho \re^{-\int_s^\rho g(s') \, \d s'} f(s) \, \d s}{1 - \re^{-\int_0^\rho g(s) \, \d s}},
\end{equation}
then \eqref{w-sol} is the unique $\rho$-periodic solution, where
\begin{equation}
\label{w-int}
\int_0^\rho g(t) w(t) \, \d t = \int_0^\rho f(t) \, \d t.
\end{equation}
Moreover, $w(0)\ge(>)0$ if and only if $\int_0^\rho{g(s)\,\d s}>0$.

\item[\rm (iii)] Suppose that $g_1(t)\ge g_2(t)$ for all $t\in\R_+$ and $\int_0^\rho{g_2(s)\,\d s}>0$. If $w(t;f,g_1)$ and $w(t;f,g_2)$ have initial conditions given by \eqref{eqn:w-ic}, then $w(t;f,g_1)\le w(t;f,g_2)$ for all $t\in\R_+$.
\end{enumerate}
\end{lemma}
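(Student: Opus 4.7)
For part~(i), the plan is to apply the standard integrating-factor method: multiplying \eqref{w-ode} by $\re^{\int_0^t g(s)\,\d s}$ turns the left-hand side into $\Der{}{t}\bigl[\re^{\int_0^t g(s)\,\d s} w\bigr]$, after which integration from $0$ to $t$ and solving for $w(t)$ yields \eqref{w-sol}. Uniqueness follows because the right-hand side $f(t)-g(t)w$ is Lipschitz in $w$ on any bounded set. For the positivity assertion, if $w(0)>0$ then the first term of \eqref{w-sol} is strictly positive while the second is non-negative because $f\ge 0$, so $w(t)>0$; conversely, if $w(t)>0$ for all $t\ge 0$ then evaluating at $t=0$ forces $w(0)>0$.

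For part~(ii), I would impose $w(\rho)=w(0)$ in the explicit formula \eqref{w-sol}. This produces a single linear equation for $w(0)$ that is uniquely solvable precisely because $\int_0^\rho g\,\d s\ne 0$ makes $1-\re^{-\int_0^\rho g\,\d s}$ non-zero; the resulting value is \eqref{eqn:w-ic}. To upgrade $w(\rho)=w(0)$ to full $\rho$-periodicity I would set $v(t):=w(t+\rho)$ and use the $\rho$-periodicity of $f$ and $g$ to see that $v$ satisfies the same ODE with $v(0)=w(0)$, so $v\equiv w$ by the uniqueness established in (i). The integral identity \eqref{w-int} then drops out by integrating \eqref{w-ode} over $[0,\rho]$ and using $w(\rho)=w(0)$. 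Finally, the sign claim follows by inspecting \eqref{eqn:w-ic} directly: the numerator is non-negative (resp.\ strictly positive) when $f\ge 0$ (resp.\ $f>0$), and the denominator $1-\re^{-\int_0^\rho g\,\d s}$ has the same sign as $\int_0^\rho g\,\d s$.

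For part~(iii), write $u_i(t):=w(t;f,g_i)$ for $i=1,2$. Since $g_1\ge g_2$ and $\int_0^\rho g_2\,\d s>0$, we also have $\int_0^\rho g_1\,\d s>0$, so (ii) applies to both and produces non-negative $\rho$-periodic solutions $u_1$ and $u_2$. Setting $d:=u_1-u_2$ and subtracting the two ODEs yields $d'=-g_1 d+(g_2-g_1)u_2$, which is of the form \eqref{w-ode} with damping $g_1$ and forcing $h:=(g_2-g_1)u_2\le 0$. Since $d$ is $\rho$-periodic, the representation \eqref{eqn:w-ic} applied to $d$ gives
$$
d(0)=\frac{\int_0^\rho \re^{-\int_s^\rho g_1(s')\,\d s'} h(s)\,\d s}{1-\re^{-\int_0^\rho g_1(s)\,\d s}}\le 0,
$$
because the numerator is $\le 0$ and the denominator is $>0$. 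Substituting this $d(0)\le 0$ back into \eqref{w-sol} applied to $d$ shows $d(t)\le 0$ for all $t\in\R_+$, i.e.\ $w(t;f,g_1)\le w(t;f,g_2)$.

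The only mildly delicate point is the very first move in (iii): one needs the non-negativity of $u_2$ in order to conclude that the forcing $(g_2-g_1)u_2$ is non-positive. This is exactly where the hypothesis $\int_0^\rho g_2\,\d s>0$ (combined with $f\ge 0$) is essential, via the sign statement in (ii); everything else is a direct manipulation of the explicit formulas from (i) and (ii).
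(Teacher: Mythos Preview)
Your proposal is correct and follows essentially the same approach as the paper: the integrating-factor formula in (i), enforcing $w(\rho)=w(0)$ and invoking uniqueness to obtain full periodicity in (ii), and in (iii) writing a linear ODE for the difference of the two periodic solutions and applying the sign result from (ii). The only cosmetic differences are that the paper establishes uniqueness of the $\rho$-periodic solution via a short contradiction argument on $\int_0^\rho g\,\d s$ rather than through unique solvability of the linear equation for $w(0)$, and in (iii) the paper works with $\varepsilon=u_2-u_1$ (so the forcing $(g_1-g_2)u_2$ is non-negative) instead of your $d=u_1-u_2$.
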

\begin{proof}
\begin{enumerate}
\item[\rm (i)]
The unique solution~$w = w(\cdot;f,g)$ of the linear ODE~\eqref{w-ode} for any initial condition is given by \eqref{w-sol}. Since $f(t)\ge 0$ for all $t\in\R_+$ it is straightforward to see that $w(t;f,g) > 0$ for all $t \ge 0$ if and only if $w(0) > 0$.
\item[\rm (ii)]
If the initial condition is given by \eqref{eqn:w-ic}, then $w(\rho) = w(0)$. Furthermore, the $\rho$-periodicity of $f$ and $g$ implies that
$$
w'(t) = f(t)-g(t) w(t), \quad w'(t + \rho) = f(t)-g(t) w(t + \rho) \quad \text{for all $t \ge 0$}.
$$
Then $\bar w(t) = w(t + \rho) - w(t)$ satisfies the initial value problem
$$
\frac{\d \bar w}{\d t} = -g(t) \bar w, \quad \bar w(0) = 0.
$$
Hence $\bar w \equiv 0$ and $w(t + \rho) = w(t)$ for all $t \ge 0$. Since $f(t)\ge(>)0$ for all $t\in\R_+$ it is clear that if $w(0)$ is given by \eqref{eqn:w-ic}, then $w(0)\ge(>)0$ if and only if $\int_0^\rho{g(s)\,\d s}>0$. To show that the solution for \eqref{eqn:w-ic} is the unique $\rho$-periodic solution we consider two distinct $\rho$-periodic solutions for \eqref{w-ode} denoted by $w_1(t)$ and $w_2(t)$. Due to uniqueness to initial conditions $w_1(t)\ne w_2(t)$ for all $ t\in\R_+$, so without loss of generality we assume $w_1(t)>w_2(t)$ for all $t\in \R_+$. Then $\varepsilon(t)=w_1(t)-w_2(t)=w_1(t+\rho)-w_2(t+\rho)=\varepsilon(t+\rho)>0$ for all $t\in\R_+$ satisfies
$$
\Der{\varepsilon}{t}=-g(t)\varepsilon\quad\Longrightarrow\quad 0=\int_0^\rho{\frac{1}{\varepsilon(s)}\Der{\varepsilon}{s}\,\d s}=-\int_0^\rho{g(s)\,\d s},
$$
which is a contradiction. Therefore the $\rho$-periodic solution for initial condition \eqref{eqn:w-ic} must be unique.
\item[\rm (iii)]
Suppose that $g_1(t)\ge g_2(t)$ for all $t\in\R_+$ and $\int_0^\rho{g_2(s)\,\d s}>0$, then $\int_0^\rho{g_1(s)\,\d s}\ge \int_0^\rho{g_2(s)\,\d s}>0$. If $w(t;f,g_1)$ and $w(t;f,g_2)$ have initial conditions given by \eqref{eqn:w-ic}, then $w(t;f,g_1)$ and $w(t;f,g_2)$ are unique $\rho$-periodic solutions and $w(t;f,g_1),w(t;f,g_2)\ge 0$ for all $t\in\R_+$. Let $\varepsilon(t)=w(t;f,g_2)-w(t;f,g_1)=w(t+\rho;f,g_2)-w(t+\rho;f,g_1)=\varepsilon(t+\rho)$ which satisfies
$$
\varepsilon'=[g_1(t)-g_2(t)]w(t;f,g_2)-g_1(t)\varepsilon.
$$
Now $[g_1(t)-g_2(t)]w(t;f,g_2)\ge 0$ for all $t\in\R_+$, hence by the previous results the $\rho$-periodic solution $\varepsilon(t)\ge 0$ for all $t\in \R_+$, i.e. $w(t;f,g_1)\le w(t;f,g_2)$ for all $t\in\R_+$.
\end{enumerate}
\hfill\end{proof}

\begin{lemma}
\label{lem:v}
Consider the equation
\begin{equation}
\label{v-ode}
\frac{\d v}{\d t} = g(t) v - f(t) v^2,
\end{equation}
where $f, g \in C(\R_+)$ are $\rho$-periodic. Suppose that
$$
\int_0^\rho g(s) \, \d s \ne 0, \quad f(t) > 0 \quad \text{for all}\quad t\in\R_+.
$$
Then the following statements hold:
\begin{enumerate}
\item[\rm (i)] The function
\begin{equation}
\label{v-sol}
v(t;f,g) = \left[v(0)^{-1} \re^{-\int_0^t g(s) \, \d s} + \int_0^t \re^{-\int_s^t g(s') \, \d s'} f(s) \, \d s\right]^{-1}
\end{equation}
is the unique solution of \eqref{v-ode} for any initial condition~$v(0) > 0$. Note that $v(t;f;g) > 0$ for all $t \in \R_+$ if and only if $v(0) > 0$.
\item[\rm (ii)] If $\int_0^\rho{g(s)\,\d s>0}$ and
\begin{equation}
\label{eqn:v-ic}
v(0) = \frac{1-\re^{-\int_0^\rho g(s') \, \d s'}}{\int_0^\rho \re^{-\int_s^\rho g(s') \, \d s'} f(s) \, \d s},
\end{equation}
then \eqref{v-sol} is a unique strictly positive $\rho$-periodic solution. Moreover, $v(0)>0$ if and only if $\int_0^\rho{g(s)\,\d s>0}$.
\item[\rm (iii)] Suppose that $g_1(t)\ge g_2(t)$ for all $t\in\R_+$ and $\int_0^\rho{g_2(s)\,\d s}>0$. If $v(t;f,g_1)$ and $v(t;f,g_2)$ have initial conditions given by \eqref{eqn:v-ic}, then $v(t;f,g_1)\ge v(t;f,g_2)$ for all $t\in\R_+$.
\end{enumerate}

\end{lemma}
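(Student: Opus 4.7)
The plan is to reduce Lemma~\ref{lem:v} to Lemma~\ref{lem:w} via the standard Bernoulli substitution $w = 1/v$. Since \eqref{v-ode} is a Bernoulli equation in $v$, setting $w = 1/v$ and differentiating gives
$$
w' = -v^{-2} v' = -v^{-2}\left[g(t) v - f(t) v^2\right] = f(t) - g(t) w,
$$
which is precisely the linear equation \eqref{w-ode} studied in Lemma~\ref{lem:w}. Each of the three statements of Lemma~\ref{lem:v} then translates into the corresponding statement of Lemma~\ref{lem:w}, provided we verify the sign/positivity conditions that let us invert $w \mapsto 1/w$ without singularity.

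For part~(i), starting from any $v(0) > 0$, set $w(0) = 1/v(0) > 0$ and apply Lemma~\ref{lem:w}(i) to obtain the explicit formula
$$
w(t;f,g) = w(0)\,\re^{-\int_0^t g(s)\,\d s} + \int_0^t \re^{-\int_s^t g(s')\,\d s'} f(s)\,\d s.
$$
Because $f(t) > 0$ and $w(0) > 0$, Lemma~\ref{lem:w}(i) yields $w(t;f,g) > 0$ for all $t \in \R_+$, so $v(t;f,g) = 1/w(t;f,g)$ is well-defined, strictly positive, and gives formula \eqref{v-sol}. Uniqueness for \eqref{v-ode} follows from uniqueness for \eqref{w-ode} together with the bijection $v \leftrightarrow 1/v$ on $(0,\infty)$. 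The converse (if $v(0) \le 0$ then positivity fails) is immediate.

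For part~(ii), the periodicity condition $v(t+\rho) = v(t)$ is equivalent to $w(t+\rho) = w(t)$. Applying Lemma~\ref{lem:w}(ii) under the assumption $\int_0^\rho g(s)\,\d s > 0$, the unique $\rho$-periodic $w$-solution has initial value
$$
w(0) = \frac{\int_0^\rho \re^{-\int_s^\rho g(s')\,\d s'} f(s)\,\d s}{1 - \re^{-\int_0^\rho g(s)\,\d s}},
$$
which is strictly positive because the numerator is positive (as $f > 0$) and the denominator is positive (as $\int_0^\rho g > 0$). Inverting gives precisely \eqref{eqn:v-ic}, and $v = 1/w$ is then a strictly positive $\rho$-periodic solution of \eqref{v-ode}. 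Uniqueness among strictly positive $\rho$-periodic solutions follows from the uniqueness clause of Lemma~\ref{lem:w}(ii) applied to the reciprocal. The equivalence $v(0) > 0 \iff \int_0^\rho g(s)\,\d s > 0$ follows from tracking signs in the formula above; this is the only place requiring slight care, since the sign condition on $\int_0^\rho g$ is what makes the reciprocal transformation safe (if $\int_0^\rho g < 0$ we would have $w(0) < 0$, forcing $w$ to vanish in some period and destroying the $v$-solution).

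For part~(iii), given $g_1 \ge g_2$ pointwise and $\int_0^\rho g_2 > 0$, both periodic $w$-solutions exist by part~(ii). Lemma~\ref{lem:w}(iii) yields $w(t;f,g_1) \le w(t;f,g_2)$ for all $t \in \R_+$, and both are strictly positive, so taking reciprocals reverses the inequality to give $v(t;f,g_1) \ge v(t;f,g_2)$ for all $t \in \R_+$. The only genuine obstacle in the whole argument is the bookkeeping in part~(ii) to confirm that the positivity of $v(0)$, the positivity of $w(0)$, and the sign of $\int_0^\rho g$ align correctly under the reciprocal map; the rest is a direct translation from Lemma~\ref{lem:w}.
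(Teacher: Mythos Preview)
Your proposal is correct and follows essentially the same approach as the paper: both reduce the Bernoulli equation~\eqref{v-ode} to the linear equation~\eqref{w-ode} via the substitution $w = 1/v$ (the paper writes $u = 1/v$), then invoke the corresponding parts of Lemma~\ref{lem:w} and invert. Your treatment is in fact slightly more explicit than the paper's in tracking the positivity conditions that make the reciprocal map well-defined.
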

\begin{proof}
\begin{enumerate}
\item[\rm (i)]
Set $v = v(\cdot;f,g)$ in the Bernoulli equation~\eqref{v-ode} to $v(t) = u(t)^{-1}$. Then $u$ satisfies the linear ODE
$$
\frac{\d u}{\d t} = f(t) - g(t) u,
$$
whose solution for any initial condition is
$$
u(t) = u(0) \re^{-\int_0^t g(s) \, \d s} + \int_0^t \re^{-\int_s^t g(s') \, \d s'} f(s) \, \d s.
$$
Hence \eqref{v-sol} follows.
\item[\rm (ii)]
Since $f(t)>0$ for all $t\in\R_+$ it is clear that if $v(0)$ is given by \eqref{eqn:v-ic}, then $v(0)>0$ if and only if $\int_0^\rho{g(s)\,\d s}>0$. If $\int_0^\rho{g(s)\,\d s}>0$ and the initial condition is given by \eqref{eqn:v-ic}, then $u(\rho)=u(0)>0$ and a similar analysis as in the proof of Lemma~\ref{lem:w} shows that $v$ is a unique strictly positive $\rho$-periodic solution.
\item[\rm (iii)]
Suppose that $g_1(t)\ge g_2(t)$ for all $t\in\R_+$ and $\int_0^\rho{g_2(s)\,\d s}>0$. If the initial conditions for $v(t;f,g_1)$ and $v(t;f,g_2)$ are given by \eqref{eqn:v-ic}, then Lemma~\ref{lem:w}(iii) shows that $u(t;f,g_1)\le u(t;f,g_2)$ for all $t\in \R_+$, which implies $v(t;f,g_1)\ge v(t;f,g_2)$ for all $t\in\R_+$.
\end{enumerate}
\hfill\end{proof}

\begin{lemma}\label{lem:x1-quadratic}
Suppose that there exists a solution $\bx(t)=\begin{bmatrix} x_1(t) & x_2(t)\end{bmatrix}^{\T}>0$ for all $t\in\R_+$ to system \eqref{eqn:redsys}, then it follows that
\begin{equation}\label{eqn:redquadratic}
\delta_4 x_1(t)^2+(1-\delta_4)x_1(t)+\delta_1 i(t)-1=\Der{}{t}V(\bx(t)),
\end{equation}
where $V(\bx)=-(\delta_4/\beta_2)x_1+(\delta_2/\beta_4)x_2-(1/\beta_2)\ln x_1$.  Moreover, it is a necessary condition that
$$
\delta_2 i(t) -\tfrac{\d}{\d t}V(\bx(t))\le \frac{(1+\delta_4)^2}{4\delta_4}.
$$
If the solution is $\rho$-periodic, then
\begin{enumerate}
\item[\rm (i)] The solution exists only if $\delta_2<1$ or only if $1\le \delta_2 \le (1+\delta_4)^2/4\delta_4$ and $\delta_4 > 1$. Moreover,
    $$
    0\le\frac{1}{\rho}\int_0^\rho{x_1(s)\,\d s}\le \hat{u}_{2+}.
    $$

\item[\rm (ii)] If $\delta_2<1$, then
\begin{equation}\label{eqn:x1sol}
x_1(t)=\frac{\delta_4-1+\sqrt{(\delta_4+1)^2-4\delta_4[\delta_2 i(t)-h(t)]}}{2\delta_4},
\end{equation}
where $h(t)=\tfrac{\d}{\d t}V(\bx(t))$.
\end{enumerate}
\end{lemma}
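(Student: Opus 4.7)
The plan is to derive the algebraic identity \eqref{eqn:redquadratic} by chain-rule computation and read off the discriminant condition, then use integration over a period with Cauchy--Schwarz for (i), and the mean value theorem for integrals plus continuity for (ii). Differentiating $V(\bx(t))=-(\delta_4/\beta_2)x_1+(\delta_2/\beta_4)x_2-(1/\beta_2)\ln x_1$ along a strictly positive solution of \eqref{eqn:redsys} (positivity lets me write $x_1'/x_1=\beta_2(1-x_1-\delta_2 x_2)$) and substituting into $\tfrac{\d V}{\d t}$, the cross-terms $\pm\delta_2\delta_4 x_1 x_2$ and $\pm\delta_2 x_2$ cancel, leaving
\begin{equation*}
\Der{}{t}V(\bx(t))=\delta_4 x_1(t)^2+(1-\delta_4)x_1(t)+\delta_2 i(t)-1,
\end{equation*}
which is \eqref{eqn:redquadratic}. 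Viewed at each fixed $t$ as a quadratic in $x_1(t)$, nonnegativity of its discriminant $(\delta_4+1)^2-4\delta_4[\delta_2 i(t)-h(t)]$ gives the stated necessary inequality.

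For (i), since $\bx$ is strictly positive and $\rho$-periodic, $V(\bx(\cdot))$ is $\rho$-periodic and $C^1$, so $\int_0^\rho h\,\d t=0$; combining with $\int_0^\rho i\,\d t=\rho$ and integrating \eqref{eqn:redquadratic} over $[0,\rho]$ gives
\begin{equation*}
\delta_4\int_0^\rho x_1^2\,\d t+(1-\delta_4)\int_0^\rho x_1\,\d t+(\delta_2-1)\rho=0.
\end{equation*}
Setting $\bar{x}_1=\rho^{-1}\int_0^\rho x_1\,\d t$ and invoking the Cauchy--Schwarz inequality $\int_0^\rho x_1^2\,\d t\ge\rho\bar{x}_1^2$ produces $\delta_4\bar{x}_1^2+(1-\delta_4)\bar{x}_1+\delta_2-1\le 0$, whose left-hand side is exactly the quadratic defining $\hat{u}_{2\pm}$. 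Together with $\bar{x}_1>0$, this forces the roots to be real (so $\delta_2\le(1+\delta_4)^2/(4\delta_4)$) and $0<\bar{x}_1\le\hat{u}_{2+}$. A case split on the sign of the product $(\delta_2-1)/\delta_4$ and sum $(\delta_4-1)/\delta_4$ of the roots then shows $\hat{u}_{2+}>0$ precisely when either $\delta_2<1$ (negative product, one positive root) or $\delta_4>1$ and $1\le\delta_2\le(1+\delta_4)^2/(4\delta_4)$ (positive sum and product, both roots positive), yielding the dichotomy.

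For (ii), assume $\delta_2<1$. The continuous $\rho$-periodic function $\delta_2 i-h$ has mean $\delta_2$, so the mean value theorem for integrals furnishes $t_*\in(0,\rho)$ with $\delta_2 i(t_*)-h(t_*)=\delta_2$; at this $t_*$ the product of the two roots of the quadratic equals $(\delta_2-1)/\delta_4<0$, so the roots have opposite sign and the strictly positive $x_1(t_*)$ must coincide with the positive (``plus'') root. To promote this identification to all $t$, I would write the two continuous branches as $r_\pm(t)=[\delta_4-1\pm\sqrt{D(t)}]/(2\delta_4)$ and observe that $(x_1-r_+)(x_1-r_-)\equiv 0$ by the quadratic identity, so any transition between branches must occur at a zero of the discriminant $D$; when $\delta_4\le 1$ the minus branch is nowhere positive so $x_1\equiv r_+$ automatically, while for $\delta_4>1$ a sign/continuity argument near $t_*$ (where $r_-(t_*)<0$) combined with the intermediate value theorem rules out any such switch. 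The hardest step is precisely this branch-identification in the regime $\delta_4>1$, where both $r_\pm$ can simultaneously be positive; ruling out switching there requires careful use of the continuity of $x_1$ together with the structure of the zero set of $D$.
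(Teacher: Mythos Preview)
Your proposal is correct. The derivation of the identity and part (ii) follow the paper's approach: both locate a time $t_*$ via the mean value theorem where $\delta_2 i(t_*)-h(t_*)=\delta_2$, so the two roots have opposite sign and $x_1(t_*)$ must be the plus root; the paper then simply asserts the global conclusion (``otherwise $x_1$ can become negative at $t^*$''), whereas you go further, correctly noting that for $\delta_4\le 1$ the minus branch is nowhere positive, and honestly flagging the $\delta_4>1$ case as requiring a continuity argument near the zeros of $D$. Part (i), however, takes a genuinely different route. The paper integrates the discriminant inequality to obtain $\delta_2\le(1+\delta_4)^2/(4\delta_4)$, uses the mean value theorem to produce $t^*$ with $x_1(t^*)=\hat u_{2\pm}$ (whence the dichotomy), and then bounds $\bar x_1$ by applying Cauchy--Schwarz to $\int_0^\rho\sqrt{D(s)}\,\d s$. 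You instead integrate the quadratic identity itself and apply Jensen to $\int_0^\rho x_1^2\,\d t$, collapsing everything into the single inequality $\delta_4\bar x_1^2+(1-\delta_4)\bar x_1+\delta_2-1\le 0$; both the dichotomy and the bound $\bar x_1\le\hat u_{2+}$ then drop out of elementary root analysis. Your route is shorter and avoids the square-root estimate; the paper's has the minor advantage that the explicit time $t^*$ it produces is reused in (ii).
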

\begin{proof}
We wish to find a function~$V = V(\bx)$ such that
$$
\frac{\d}{\d t} V(\bx(t)) = A(t) x_1(t)^2 + B(t) x_1(t) + C(t),
$$
where $A$, $B$ and $C$ are appropriate $\rho$-periodic functions. Note the right-hand side is independent of $x_2$. By considering \eqref{eqn:redsys}, we can try the ansatz
$$
V(\bx) = a x_1 + b x_2 + c \log x_1,
$$
where $a$, $b$ and $c$ are constants. Then
\begin{align*}
\frac{\d}{\d t} V(\bx(t))=  & \,\left[a + \frac{c}{x_1(t)}\right] x_1'(t) + b x_2'(t)\\
=& \, -a \beta_2 x_1(t)^2 + (a - c) \beta_2 x_1(t) + c \beta_2 + b \beta_4 i(t)\\
&\,-(a \beta_2 \delta_2 + b \beta_4 \delta_4) x_1(t) x_2(t) - (c \beta_2 \delta_2 + b \beta_4) x_2(t).
\end{align*}
To eliminate the terms involving $x_2$, we set
$$
a = -\frac{\delta_4}{\beta_2}, \quad b = \frac{\delta_2}{\beta_4}, \quad c = -\frac{1}{\beta_2}.
$$
This gives
\begin{equation}\label{eqn:x1quad}
\frac{\d}{\d t} V(\bx(t)) = \delta_4 x_1(t)^2 + (1 - \delta_4) x_1(t) + \delta_2 i(t) - 1,
\end{equation}
i.e. $A(t) = \delta_4$, $B(t) = 1 - \delta_4$ and $C(t) = \delta_2 i(t) - 1$. Rewriting \eqref{eqn:x1quad}, we obtain
\begin{align*}
x_1(t) =&\, \frac{(\delta_4 - 1) \pm \sqrt{(\delta_4 - 1)^2 + 4 \delta_4 [1 - \delta_2 i(t) + h(t)]}}{2 \delta_4}\\
=&\,\frac{(\delta_4 - 1) \pm \sqrt{(\delta_4 + 1)^2 - 4 \delta_4 [\delta_2 i(t) - h(t)]}}{2 \delta_4},
\end{align*}
where $h(t) = \tfrac{\d}{\d t} V(u_2(t),u_4(t))$. If $x_1(t)>0$ for all $t\in\R_+$, a necessary condition is that
\begin{equation}\label{eqn:x1-cond}
\delta_2 i(t) - h(t) \le \frac{(\delta_4 +1)^2}{4 \delta_4}.
\end{equation}

Suppose that the solution $\bx(t)>0$ is $\rho$-periodic:
\begin{enumerate}
\item[\rm (i)]
Integrating both sides of \eqref{eqn:x1-cond} with respect to $t$ from $0$ to $\rho$ and using the periodicity of $\bx$, we have
\begin{align*}
\frac{(\delta_4 + 1)^2}{4 \delta_4}  & \ge \frac{1}{\rho} \int_0^\rho \left[\delta_2 i(t) - h(t)\right] \, \d t = \delta_2.
\end{align*}
Moreover, by the Mean Value Theorem, there exists $t^* \in (0,\rho)$ such that
$$
\frac{1}{\rho} \int_0^\rho \left[\delta_2 i(t) - h(t)\right] \, \d t = \delta_2 i(t^*) - h(t^*).
$$
We therefore deduce that
$$
x_1(t^*)=\frac{(\delta_4 - 1) \pm \sqrt{(\delta_4 + 1)^2 - 4 \delta_4\delta_2}}{2 \delta_4},
$$
which has at least one strictly positive value if and only if $\delta_2<1$, or if $1\le\delta_2\le(1+\delta_4)^2/4\delta_4$ and $\delta_4>1$, hence if these parameter conditions are not satisfied, then it is not possible for $x_1(t)>0$ for all $t\in\R_+$.

Moreover, the Cauchy-Schwarz inequality gives
\begin{align*}
\frac{1}{\rho}\int_0^\rho{\sqrt{(\delta_4 + 1)^2 - 4 \delta_4 [\delta_2 i(s) - h(s)]}\,\d s}\le &\, \left[\frac{1}{\rho}\int_0^\rho{\{(\delta_4 + 1)^2 - 4 \delta_4 [\delta_2 i(s) - h(s)]\}\,\d s}\right]^{1/2}\\
=&\,\sqrt{(\delta_4 + 1)^2 - 4 \delta_4\delta_2}.
\end{align*}
Hence
\begin{align*}
0\le\frac{1}{\rho}\int_0^\rho{x_1(s)\,\d s}\le &\, \max\left\{\frac{1}{\rho}\int_0^\rho{\frac{(\delta_4 - 1) \pm \sqrt{(\delta_4 + 1)^2 - 4 \delta_4 [\delta_2 i(s) - h(s)]}}{2 \delta_4}\,\d s}\right\}\\
\le &\, \hat{u}_{2+}
\end{align*}

\item[\rm (ii)] If $\delta_2 < 1$, we further deduce that
$$
(\delta_4 - 1)^2 + 4 \delta_4 [1 - \delta_2 i(t^*) + h(t^*)] > (\delta_4 - 1)^2
$$
and as a result
$$
x_1(t) = \frac{(\delta_4 - 1) + \sqrt{(\delta_4 - 1)^2 + 4 \delta_4 [1 - \delta_2 i(t) + h(t)]}}{2 \delta_4},
$$
otherwise $x_1$ can become negative at $t^*$.
\end{enumerate}
\hfill\end{proof}

\section{Details of steady-state analysis}\label{sec:appendix3}

\begin{lemma}\label{lem:SSfullsystem}
The system of equations given by \eqref{eqn:sys-const} has the following SS solutions and respective linear stability conditions:
\begin{enumerate}
\item[\textnormal{SS1.}]\label{con:1SS}
$\bu^*=(0,0,0,1)$ is unconditionally unstable.
\item[\textnormal{SS2.}]\label{con:2SS}
$\bu^*=(1,0,0,1)$ is stable if and only if $\alpha_2+\delta_2>1$.
\item[\textnormal{SS3.}]\label{con:3SS}
$\bu^*=(0,\hat{u}_{2\pm},\hat{u}_{2\pm},[1+\delta_4\hat{u}_{2\pm}]^{-1})$, where $\hat{u}_{2\pm}=\left[\delta_4-1\pm\sqrt{(1+\delta_4)^2-4\delta_2\delta_4}\right]/2\delta_4$. The SS corresponding to $\hat{u}_{2-}$ is unconditionally unstable and the SS corresponding to $\hat{u}_{2+}$ will be stable if and only if
$$
\delta_2<\frac{(\alpha_1+\delta_1+\delta_4)(\alpha_1+\delta_1-1)}{(\alpha_1+\delta_1)^2},
$$
or if
$$
\frac{(\alpha_1+\delta_1+\delta_4)(\alpha_1+\delta_1-1)}{(\alpha_1+\delta_1)^2}\le \delta_2 < \frac{(1+\delta_4)^2}{4\delta_4}\quad\text{and}\quad \frac{1}{\delta_4}<\frac{\alpha_1+\delta_1-2}{\alpha_1+\delta_1}.
$$
\item[\textnormal{SS4.}]\label{con:4SS}
$\bu^*=(1-[\alpha_1+\delta_1]\tilde{u}_{2\pm},\tilde{u}_{2\pm},\tilde{u}_{2\pm},[1+\delta_4\tilde{u}_{2\pm}]^{-1})$, where
\begin{equation}
\begin{split}
\tilde{u}_{2\pm}=&\,\frac{-(1-\alpha_2(\alpha_1+\delta_1))-\delta_4(\alpha_2-1)}{2\delta_4(1-\alpha_2(\alpha_1+\delta_1))}\\
&\quad\pm\frac{\sqrt{[1-\alpha_2(\alpha_1+\delta_1)-\delta_4(\alpha_2-1)]^2-4\delta_2\delta_4(1-\alpha_2(\alpha_1+\delta_1))}}
{2\delta_4(1-\alpha_2(\alpha_1+\delta_1))}.
\end{split}
\end{equation}
The SS corresponding to $\tilde{u}_{2-}$ has no biologically meaningful values for which it is stable. The SS corresponding to $\tilde{u}_{2+}$ is stable if and only if
\begin{equation}\label{eqn:vtildnecessary}
0<(\alpha_1+\delta_1)\tilde{u}_{2+}<1,\quad 0<\delta_2<\frac{[1-\alpha_2(\alpha_1+\delta_1)-\delta_4(\alpha_2-1)]^2}{4\delta_4[1-\alpha_2(\alpha_1+\delta_1)]}
\end{equation}
and
$$
c_1c_2c_3>c_3^2c_0+c_1^2,
$$
where
\begin{align}
c_3=&\,1-(\alpha_1+\delta_1)\tilde{u}_2+\beta_3 +\beta_2\tilde{u}_2+\beta_4(1+\delta_4\tilde{u}_2),\\
c_2=&\,\beta_3[\beta_2\tilde{u}_2+\beta_4(1+\delta_4\tilde{u}_2)]+[1-(\alpha_1+\delta_1)\tilde{u}_2][\beta_3+\beta_4(1+\delta_4\tilde{u}_2)+\beta_2\tilde{u}_2(1-\alpha_2\alpha_1)]\\
&\quad+\frac{\beta_2\beta_4\tilde{u}_2}{1+\delta_4\tilde{u}_2}[(1+\delta_4\tilde{u}_2)^2-\delta_2\delta_4],\\
c_1=&\,[1-(\alpha_1+\delta_1)\tilde{u}_2]\left[\frac{\beta_2\beta_4\tilde{u}_2}{1+\delta_4\tilde{u}_2}[(1+\delta_4\tilde{u}_2)^2(1-\alpha_2\alpha_1)-\delta_2\delta_4]+\beta_3\beta_4(1+\delta_4\tilde{u}_2)\right]\\
&\quad+\beta_2\beta_3\tilde{u}_2[1-(\alpha_1+\delta_1)\tilde{u}_2][1-\alpha_2(\alpha_1+\delta_1)]+\frac{\beta_2\beta_3\beta_4\tilde{u}_2}{1+\delta_4\tilde{u}_2}[(1+\delta_4\tilde{u}_2)^2-\delta_2\delta_4],\\
c_0=&\,\frac{\beta_2\beta_3\beta_4\tilde{u}_2}{1+\delta_4\tilde{u}_2}\left[1-(\alpha_1+\delta_1)\tilde{u}_2\right]\left[(1+\delta_4\tilde{u}_2)^2(1-\alpha_2(\alpha_1+\delta_1))-\delta_2\delta_4\right].
\end{align}
\end{enumerate}
\end{lemma}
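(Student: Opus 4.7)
The equilibria are located by setting the right-hand side of \eqref{eqn:sys-const} to zero. The third and fourth equations immediately give $u_3^* = u_2^*$ and $u_4^* = (1+\delta_4 u_2^*)^{-1}$, reducing the problem to $u_1^*[1 - u_1^* - (\alpha_1+\delta_1) u_2^*] = 0$ together with $u_2^*[1 - u_2^* - \alpha_2 u_1^* - \delta_2/(1+\delta_4 u_2^*)] = 0$. The four cases produced by choosing whether each of $u_1^*,u_2^*$ vanishes yield SS1--SS4 directly, with the claimed quadratics for $\hat u_2$ and $\tilde u_2$ appearing after clearing the denominator $1+\delta_4 u_2^*$.

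For each equilibrium I compute the Jacobian and analyse the spectrum. At SS1 and SS2 the vector field's $u_1$-factor forces a block lower-triangular Jacobian, so the eigenvalues are read from the diagonal blocks: at SS1 the isolated $+1$ gives unconditional instability, while at SS2 the only free-sign eigenvalue is $\beta_2(1-\alpha_2-\delta_2)$, yielding the criterion $\alpha_2+\delta_2>1$. At SS3 the same block structure survives (since $u_1^*=0$), and the $u_1$-direction decouples with eigenvalue $1-(\alpha_1+\delta_1)\hat u_2$. The remaining $3\times 3$ block further splits because its $u_3$-column contains only the diagonal entry $-\beta_3$, leaving a $2\times 2$ block in $(u_2,u_4)$ that coincides exactly with the Jacobian of the reduced system~\eqref{eqn:redsys-const} at RS2 analysed in Section~\ref{sec:redsys-constant}. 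Rewriting the threshold $(\alpha_1+\delta_1)\hat u_{2+}>1$ using the defining quadratic for $\hat u_{2+}$ produces the boundary $\delta_2=(\alpha_1+\delta_1+\delta_4)(\alpha_1+\delta_1-1)/(\alpha_1+\delta_1)^2$, and combining this with Byrne's $(u_2,u_4)$ stability windows generates the two alternative regions listed in the lemma; the additional restriction $1/\delta_4<(\alpha_1+\delta_1-2)/(\alpha_1+\delta_1)$ arises in the second window, where the $u_1$-direction would otherwise destabilise and has to be compensated by the Byrne stability margin. The unconditional instability of the $\hat u_{2-}$ branch is inherited directly from the corresponding equilibrium of~\eqref{eqn:redsys-const}.

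At SS4 neither $u_1^*$ nor $u_2^*$ vanishes, so the Jacobian no longer block-triangularises and I work with the full quartic $\lambda^4+c_3\lambda^3+c_2\lambda^2+c_1\lambda+c_0$. Using $u_1^*=1-(\alpha_1+\delta_1)\tilde u_2$, $u_3^*=\tilde u_2$ and $u_4^*=(1+\delta_4\tilde u_2)^{-1}$ to eliminate the other components, the coefficients collect into the compact forms displayed in the statement. Asymptotic stability then reduces to the Routh--Hurwitz criterion for a quartic: $c_0,c_1,c_2,c_3>0$ together with the Hurwitz determinant inequality $c_1c_2c_3>c_3^2c_0+c_1^2$. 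After substitution, the positivity of $c_0,\ldots,c_3$ collapses to the two sign conditions~\eqref{eqn:vtildnecessary}, the first ensuring $u_1^*>0$ and the second being the discriminant condition for $\tilde u_2$ to be real. For the $\tilde u_{2-}$ branch I apply Vieta's formulas to the defining quadratic to show that one of these positivity requirements always fails whenever the root is real and biologically admissible, so $\tilde u_{2-}$ never gives a stable equilibrium.

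The main obstacle is the bookkeeping in the SS4 characteristic polynomial: expanding $\det(J-\lambda I)$, grouping by powers of $\lambda$, and applying the steady-state relations produces long but routine algebraic manipulations that must terminate in exactly the displayed $c_j$. A secondary subtlety is tracking the sign of $1-(\alpha_1+\delta_1)\hat u_{2+}$ across the two stability windows for SS3 to correctly isolate the awkward inequality $1/\delta_4<(\alpha_1+\delta_1-2)/(\alpha_1+\delta_1)$ appearing in the second alternative.
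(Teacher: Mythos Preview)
Your plan for SS1, SS2, and SS4 is essentially identical to the paper's: block-triangular Jacobians for SS1/SS2, full quartic with Routh--Hurwitz for SS4, and the same reduction of the $c_j>0$ requirements to the sign conditions in~\eqref{eqn:vtildnecessary}. The treatment of $\tilde u_{2-}$ via the defining quadratic also matches.

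For SS3, however, your explanation of where the two alternative conditions come from is wrong. You write that the split is obtained by ``combining this with Byrne's $(u_2,u_4)$ stability windows'' and that the restriction $1/\delta_4<(\alpha_1+\delta_1-2)/(\alpha_1+\delta_1)$ arises because ``the $u_1$-direction would otherwise destabilise and has to be compensated by the Byrne stability margin.'' There is no such compensation mechanism: for linear stability every eigenvalue must individually have negative real part, and an unstable $u_1$-eigenvalue cannot be offset by stability of the $(u_2,u_4)$ block. In fact the $(u_2,u_4)$ block at $\hat u_{2+}$ is \emph{always} stable whenever $\hat u_{2+}$ is real (one checks $2\delta_4\hat u_{2+}+1-\delta_4=\sqrt{(1+\delta_4)^2-4\delta_2\delta_4}>0$, which makes the constant term of the $2\times 2$ characteristic polynomial positive). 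The entire dichotomy in the lemma is therefore a pure algebraic case split on the single scalar inequality $(\alpha_1+\delta_1)\hat u_{2+}>1$: writing $A=\alpha_1+\delta_1$ and rearranging gives
\[
A\hat u_{2+}-1=\frac{B+\sqrt{B^2+C}}{2\delta_4},\qquad B=A(\delta_4-1)-2\delta_4,\quad C=4\delta_4\bigl[(A+\delta_4)(A-1)-\delta_2A^2\bigr],
\]
so the first condition is $C>0$ (the square root dominates $|B|$), and the second is $C\le0$ together with $B>0$, the latter being exactly $1/\delta_4<(A-2)/A$. The split point $\delta_2=(A+\delta_4)(A-1)/A^2$ and the extra inequality both come from this single eigenvalue, not from any interaction with the Byrne windows (whose split is at $\delta_2=1$, a different threshold).
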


\begin{proof}
We consider the system of equations given by \eqref{eqn:sys-const}, set the derivatives to zero (i.e $\bu'=\bzero$) and solve for $u_1,u_2,u_3,u_4$ to obtain the SS solutions. We obtain the solutions
$$
(0,0,0,1), (1,0,0,1), \left(0,\hat{u}_2,\hat{u}_2,\frac{1}{1+\delta_4 \hat{u}_2}\right),\left(1-(\alpha_1+\delta_1)\tilde{u}_2,\tilde{u}_2,\tilde{u}_2,\frac{1}{1+\delta_4 \tilde{u}_2}\right),
$$
where $\hat{u}_2$ solves
\begin{equation}\label{eqn:vbar}
\delta_4 \hat{u}_2^2+(1-\delta_4)\hat{u}_2+\delta_2-1=0
\end{equation}
and $\tilde{u}_2$ solves
\begin{equation}\label{eqn:vtilde}
\delta_4[1-\alpha_2(\alpha_1+\delta_1)]\tilde{u}_2^2+[1-\alpha_2(\alpha_1+\delta_1)+\delta_4(\alpha_2-1)]\tilde{u}_2+\delta_2+\alpha_2-1=0.
\end{equation}

We wish to determine the stability of these solutions by performing a linear stability analysis. Consider the following vector:
\begin{equation}
\bF(\bu)=\begin{bmatrix} u_1(1-u_1-\alpha_1 u_2-\delta_1 u_3)\\ \beta_2 u_2(1-u_2-\alpha_2 u_1-\delta_2 u_4) \\ \beta_3(u_2-u_3) \\ \beta_4(1-u_4-\delta_4u_4u_2)\end{bmatrix}.
\end{equation}
The Jacobian matrix of $\bF$ is then given by
\begin{equation}
\bF'_\bu(\bu)=\begin{bmatrix}   1-2u_1-\alpha_1 u_2-\delta_1 u_3  &   -\alpha_1 u_1                         &   -\delta_1 u_1 &   0   \\
                                        -\beta_2\alpha_2 u_2          &   \beta_2(1-2u_2-\alpha_2u_1-\delta_2u_4)   &   0           &   -\beta_2\delta_2u_2 \\
                                        0                           &   \beta_3                             &   -\beta_3    &   0   \\
                                        0                           &   -\beta_4 \delta_4 u_4                 &   0           &   -\beta_4(1+\delta_4 u_2) \\
                        \end{bmatrix}
\end{equation}
\begin{enumerate}
\item[\textnormal{SS1.}]
Now if we consider the SS solution $(0,0,0,1)$ we have the Jacobian matrix
\begin{equation}
\bF'_\bu(0,0,0,1)=\begin{bmatrix}   1  &   0                     &   0          &   0   \\
                                        0  &   \beta_2(1-\delta_2)   &   0          &   0 \\
                                        0  &   \beta_3               &   -\beta_3   &   0   \\
                                        0  &   -\beta_4 \delta_4     &   0          &   -\beta_4 \\
                        \end{bmatrix}
\end{equation}
which has eigenvalues $\lambda=1,\beta_2(1-\delta_2),-\beta_3,-\beta_4$. Hence we can see this is unstable for all parameter values as $\lambda=1>0$.
\item[\textnormal{SS2.}]
The SS $(1,0,0,1)$ has Jacobian matrix
\begin{equation}
\bF'_\bu(1,0,0,1)=\begin{bmatrix}   -1  &   -\alpha_1                      &   -\delta_1 &   0   \\
                                        0   &   \beta_2(1-\alpha_2-\delta_2)   &   0         &   0 \\
                                        0   &   \beta_3                        &   -\beta_3  &   0   \\
                                        0   &   -\beta_4 \delta_4              &   0         &   -\beta_4 \\
                        \end{bmatrix}
\end{equation}
with eigenvalues $\lambda=-1,\beta_2(1-\alpha_2-\delta_2),-\beta_3,-\beta_4$. Therefore we can see that all $\text{Re}(\lambda)<0$ if and only if $\alpha_2+\delta_2>1$. Hence we have that $(1,0,0,1)$ is linearly stable if $\alpha_2+\delta_2>1$.
\item[\textnormal{SS3.}]
We consider the SS solution $\bu^*=(0,\hat{u}_2,\hat{u}_2,[1+\delta_4\hat{u}_2]^{-1})$ noting from $\bF(\bu)=\bzero$ we have $1-u_2-\alpha_2 u_1-\delta_2 u_4=0$. Hence we have the Jacobian matrix
\begin{equation}
\bF'_\bu(\bu^*)=
\begin{bmatrix} 1-(\alpha_1+\delta_1)\hat{u}_2  &   0                                           &   0           &   0   \\
                -\beta_2\alpha_2 \hat{u}_2      &   -\beta_2\hat{u}_2                             &   0           &   -\beta_2\delta_2\hat{u}_2 \\
                0                             &   \beta_3                                     &   -\beta_3    &   0   \\
                0                             &   -\beta_4\delta_4(1+\delta_4 \hat{u}_2)^{-1}   &   0           &   -\beta_4(1+\delta_4 \hat{u}_2) \\
\end{bmatrix}
\end{equation}
with eigenvalues that satisfy $\lambda=1-(\alpha_1+\delta_1)\hat{u}_2,-\beta_3$ and $\lambda^2+[\beta_2\hat{u}_2+\beta_4(1+\delta_4\hat{u}_2)]\lambda+\beta_2\beta_4\hat{u}_2[(1+\delta_4\hat{u}_2)^2-\delta_4\delta_2]/(1+\delta_4\hat{u}_2)=0$. Therefore using the Routh--Hurwitz conditions \citep[][pp.~507--509]{MurrayI2002}, we have $\text{Re}(\lambda)<0$ if $1-(\alpha_1+\delta_1)\hat{u}_2<0$, $\delta_4\delta_2<(1+\delta_4\hat{u}_2)^2$ and $\hat{u}_2>0$. Note that if $1-(\alpha_1+\delta_1)\hat{u}_2<0$, then it follows that $\hat{u}_2>0$. Since $\hat{u}_2$ satisfies \eqref{eqn:vbar}, we have that $\delta_2=1-\delta_4\hat{u}_2^2+(\delta_4-1)\hat{u}_2$ and as a result
$$
(1+\delta_4 \hat{u}_2)^2-\delta_4\delta_2=(1+\delta_4\hat{u}_2)(2\delta_4 \hat{u}_2+1-\delta_4).
$$
Therefore if $2\delta_4 \hat{u}_2+1-\delta_4>0$, then it follows that $\delta_4\delta_2<(1+\delta_4 \hat{u}_2)^2$. Consider \eqref{eqn:vbar}, solving for $\hat{u}_2$ we obtain
$$
\hat{u}_{2\pm}=\frac{\delta_4-1\pm\sqrt{(1-\delta_4)^2+4\delta_4(1-\delta_2)}}{2\delta_4}=\frac{\delta_4-1\pm\sqrt{(1+\delta_4)^2-4\delta_4\delta_2}}{2\delta_4}
$$
and note that $\hat{u}_{2\pm}\in\R$ if and only if $\delta_2\le(1+\delta_4)^2/4\delta_4$.
Therefore
$$
2\delta_4 \hat{u}_{2\pm}+1-\delta_4=\pm\sqrt{(1+\delta_4)^2-4\delta_4\delta_2}
$$
and as a result we can see that if $\delta_2<(1+\delta_4)^2/4\delta_4$, then $2\delta_4 \hat{u}_{2+}+1-\delta_4>0$ and $2\delta_4 \hat{u}_{2-}+1-\delta_4<0$. Hence SS3 with $\hat{u}_{2-}$ will be unstable for all parameter values and we only require that $1-(\alpha_1+\delta_1)\hat{u}_{2+}<0$ for SS3 with $\hat{u}_{2+}$ to be linearly stable.
Consider
\begin{align}
&(\alpha_1+\delta_1)\hat{u}_{2+}-1\\
&\quad=\frac{(\alpha_1+\delta_1)(\delta_4-1)-2\delta_4+(\alpha_1+\delta_1)\sqrt{(1-\delta_4)^2+4\delta_4(1-\delta_2)}}{2\delta_4}\\
&\quad=\frac{(\alpha_1+\delta_1)(\delta_4-1)-2\delta_4}{2\delta_4}\\
&\qquad+\frac{\sqrt{[(\alpha_1+\delta_1)(\delta_4-1)-2\delta_4]^2+4\delta_4[(\alpha_1+\delta_1+\delta_4)(\alpha_1+\delta_1-1)-\delta_2(\alpha_1+\delta_1)^2]}}{2\delta_4}
\end{align}
Hence we can see that if
$$
\delta_2<\frac{(\alpha_1+\delta_1+\delta_4)(\alpha_1+\delta_1-1)}{(\alpha_1+\delta_1)^2},
$$
then $1-(\alpha_1+\delta_1)\hat{u}_{2+}<0$ and as a result SS3 with $\hat{u}_{2+}$ is stable. If
$$
\frac{(\alpha_1+\delta_1+\delta_4)(\alpha_1+\delta_1-1)}{(\alpha_1+\delta_1)^2}\le \delta_2 < \frac{(1+\delta_4)^2}{4\delta_4}\quad\text{and}\quad \frac{1}{\delta_4}<\frac{\alpha_1+\delta_1-2}{\alpha_1+\delta_1},
$$
then $1-(\alpha_1+\delta_1)\hat{u}_{2+}<0$ and as a result SS3 with $\hat{u}_{2+}$ is linearly stable.
\item[\textnormal{SS4.}]
We consider the SS solution $\bu^*=(1-(\alpha_1+\delta_1)\tilde{u}_2,\tilde{u}_2,\tilde{u}_2,[1+\delta_4\tilde{u}_2]^{-1})$ noting from $\bF(\bu)=\bzero$ we have $1-u_2-\alpha_2 u_1-\delta_2 u_4=0$ and $1-u_1-\alpha_1u_2-\delta_1u_3=0$. Hence we have the Jacobian matrix
\begin{equation}
\bF'_\bu(\bu^*)=
\begin{bmatrix}
(\alpha_1+\delta_1)\tilde{u}_2-1  &  \alpha_1[(\alpha_1+\delta_1)\tilde{u}_2-1]    &  \delta_1[(\alpha_1+\delta_1)\tilde{u}_2-1] &   0   \\
-\beta_2\alpha_2 \tilde{u}_2      &  -\beta_2\tilde{u}_2                           &  0                                        & -\beta_2\delta_2\tilde{u}_2 \\
0                               &  \beta_3                                     &  -\beta_3                                 &   0   \\
0                               &  -\beta_4 \delta_4(1+\delta_4\tilde{u}_2)^{-1} &  0                                        &   -\beta_4(1+\delta_4 \tilde{u}_2) \\
\end{bmatrix}
\end{equation}
that has a characteristic equation
\begin{equation}
\lambda^4+c_3\lambda^3+c_2\lambda^2+c_1\lambda+c_0=0,
\end{equation}
where
\begin{align}
c_3=&\,1-(\alpha_1+\delta_1)\tilde{u}_2+\beta_3 +\beta_2\tilde{u}_2+\beta_4(1+\delta_4\tilde{u}_2),\\
c_2=&\,\beta_3[\beta_2\tilde{u}_2+\beta_4(1+\delta_4\tilde{u}_2)]+[1-(\alpha_1+\delta_1)\tilde{u}_2][\beta_3+\beta_4(1+\delta_4\tilde{u}_2)+\beta_2\tilde{u}_2(1-\alpha_2\alpha_1)]\\
&\quad+\frac{\beta_2\beta_4\tilde{u}_2}{1+\delta_4\tilde{u}_2}[(1+\delta_4\tilde{u}_2)^2-\delta_2\delta_4],\\
c_1=&\,[1-(\alpha_1+\delta_1)\tilde{u}_2]\left[\frac{\beta_2\beta_4\tilde{u}_2}{1+\delta_4\tilde{u}_2}[(1+\delta_4\tilde{u}_2)^2(1-\alpha_2\alpha_1)-\delta_2\delta_4]+\beta_3\beta_4(1+\delta_4\tilde{u}_2)\right]\\
&\quad+\beta_2\beta_3\tilde{u}_2[1-(\alpha_1+\delta_1)\tilde{u}_2][1-\alpha_2(\alpha_1+\delta_1)]+\frac{\beta_2\beta_3\beta_4\tilde{u}_2}{1+\delta_4\tilde{u}_2}[(1+\delta_4\tilde{u}_2)^2-\delta_2\delta_4],\\
c_0=&\,\frac{\beta_2\beta_3\beta_4\tilde{u}_2}{1+\delta_4\tilde{u}_2}\left[1-(\alpha_1+\delta_1)\tilde{u}_2\right]\left[(1+\delta_4\tilde{u}_2)^2(1-\alpha_2(\alpha_1+\delta_1))-\delta_2\delta_4\right]
\end{align}
From Routh--Hurwitz conditions \citep[][pp.~507--509]{MurrayI2002}, $c_0,c_1,c_2,c_3>0$ and $c_1c_2c_3>c_3^2c_0+c_1^2$ if and only if $\text{Re}(\lambda)<0$. Hence we require $c_0,c_1,c_2,c_3>0$ and $c_1c_2c_3>c_3^2c_0+c_1^2$ for $\tilde{u}_2$ to be linearly stable. Note that if
$$
\tilde{u}_2>0,\quad 1-(\alpha_1+\delta_1)\tilde{u}_2>0\quad\text{and}\quad 0<\delta_2\delta_4<(1+\delta_4\tilde{u}_2)^2(1-\alpha_2(\alpha_1+\delta_1)),
$$
then $c_0>0$. Note that
$$
(1+\delta_4\tilde{u}_2)^2(1-\alpha_2(\alpha_1+\delta_1))<(1+\delta_4\tilde{u}_2)^2(1-\alpha_2\alpha_1)<(1+\delta_4\tilde{u}_2)^2,
$$
hence these conditions will imply that $c_0,c_1,c_2,c_3>0$. Since $\tilde{u}_2$ satisfies \eqref{eqn:vtilde} we have
$$
-\delta_2=(1+\delta_4\tilde{u}_2)[(1-\alpha_2(\alpha_1+\delta_1))\tilde{u}_2+\alpha_2-1].
$$
Hence we can show that
\begin{equation}\label{eqn:c0simp}
(1+\delta_4\tilde{u}_2)^2(1-\alpha_2(\alpha_1+\delta_1))-\delta_2\delta_4=(1+\delta_4\tilde{u}_2)[(1-\alpha_2(\alpha_1+\delta_1))(2\delta_4\tilde{u}_2+1)+\delta_4(\alpha_2-1)],
\end{equation}
and as a result
\begin{equation}\label{eqn:c0}
c_0=\beta_2\beta_3\beta_4\tilde{u}_2\left[1-(\alpha_1+\delta_1)\tilde{u}_2\right]\left[(1-\alpha_2(\alpha_1+\delta_1))(2\delta_4\tilde{u}_2+1)+\delta_4(\alpha_2-1)\right].
\end{equation}
Now consider \eqref{eqn:vtilde} and solve for $\tilde{u}_2$ to obtain
\begin{equation}\label{eqn:vtildepm}
\begin{split}
\tilde{u}_{2\pm}=&\,\frac{-(1-\alpha_2(\alpha_1+\delta_1))-\delta_4(\alpha_2-1)}{2\delta_4(1-\alpha_2(\alpha_1+\delta_1))}\\
&\quad\pm\frac{\sqrt{[1-\alpha_2(\alpha_1+\delta_1)+\delta_4(\alpha_2-1)]^2-4\delta_4(1-\alpha_2(\alpha_1+\delta_1))(\delta_2+\alpha_2-1)}}{2\delta_4(1-\alpha_2(\alpha_1+\delta_1))}\\
=&\,\frac{-(1-\alpha_2(\alpha_1+\delta_1))-\delta_4(\alpha_2-1)}{2\delta_4(1-\alpha_2(\alpha_1+\delta_1))}\\
&\quad\pm\frac{\sqrt{[1-\alpha_2(\alpha_1+\delta_1)-\delta_4(\alpha_2-1)]^2-4\delta_2\delta_4(1-\alpha_2(\alpha_1+\delta_1))}}{2\delta_4(1-\alpha_2(\alpha_1+\delta_1))},
\end{split}
\end{equation}
and note that $\tilde{u}_{2\pm}\in\R$ if and only if
$$
0<\delta_2\le\frac{[1-\alpha_2(\alpha_1+\delta_1)-\delta_4(\alpha_2-1)]^2}{4\delta_4[1-\alpha_2(\alpha_1+\delta_1)]}\quad\text{or}\quad 1-\alpha_2(\alpha_1+\delta_1)<0.
$$
Using \eqref{eqn:vtildepm}, we have
\begin{equation}\label{eqn:c0part}
\begin{split}
&(1-\alpha_2(\alpha_1+\delta_1))(2\delta_4\tilde{u}_{2\pm}+1)+\delta_4(\alpha_2-1)\\
&\quad=\pm\sqrt{[(1-\alpha_2(\alpha_1+\delta_1))-\delta_4(\alpha_2-1)]^2-4\delta_2\delta_4(1-\alpha_2(\alpha_1+\delta_1))}
\end{split}
\end{equation}
and therefore when $\tilde{u}_{2\pm}\in\R$ we have that \eqref{eqn:c0part} will be positive for $\tilde{u}_{2+}$ and negative for $\tilde{u}_{2-}$. We first note from \eqref{eqn:c0part} and \eqref{eqn:c0simp} that we require either $\tilde{u}_{2-}<0$ or $\tilde{u}_{1-}=1-[\alpha_1+\delta_1]\tilde{u}_{2-}<0$ so that $c_0>0$. Hence we can conclude that there will be no biologically meaningful values of SS4 with $\tilde{u}_{2-}$ that are stable. If $1-\alpha_2(\alpha_1+\delta_1)<0$, we can see that \eqref{eqn:c0simp} is negative: noting that \eqref{eqn:c0part} will be positive for $\tilde{u}_{2+}$ in this case then \eqref{eqn:c0simp} implies that $1+\delta_4\tilde{u}_{2+}<0$. Hence we can conclude that $\tilde{u}_{2+}<0$ and as a result, from \eqref{eqn:c0} we have that $c_0<0$ if $1-\alpha_2(\alpha_1+\delta_1)<0$ for $\tilde{u}_{2+}$. Therefore if $1-\alpha_2(\alpha_1+\delta_1)<0$, then SS4 with $\tilde{u}_{2+}$ is unstable. If $1-\alpha_2(\alpha_1+\delta_1)>0$ and $\tilde{u}_{2+}<0$, then we can see from \eqref{eqn:c0part} that \eqref{eqn:c0} will be negative for $\tilde{u}_{2+}$ and hence SS4 with $\tilde{u}_{2+}$ will be unstable. Therefore we can see that for SS4 with $\tilde{u}_{2+}$ to be stable it is necessary that
\begin{equation}
\tilde{u}_{2+}>0, \quad 1-(\alpha_1+\delta_1)\tilde{u}_{2+}>0 \quad \text{and}\quad 0<\delta_2<\frac{[1-\alpha_2(\alpha_1+\delta_1)-\delta_4(\alpha_2-1)]^2}{4\delta_4[1-\alpha_2(\alpha_1+\delta_1)]}.
\end{equation}
\end{enumerate}
\hfill\end{proof}
\end{appendix}

\section*{Acknowledgements}
ABH has been supported by an Australian Postgraduate Award.

\bibliographystyle{plainnat}
\bibliography{AMTI-CI-Bibliography}

\end{document}